\tikzset{
    POINTS/.style={solid, mark=square*, red},
    V1COALESCED/.style={solid, mark=square, blue},
    V2COALESCED/.style={solid, mark=o, green},
    V1NONCOALESCED/.style={solid, mark=square, blue},
    V2NONCOALESCED/.style={solid, mark=o, green},
}
\pgfplotsset{
    every axis/.append style={
        legend columns=-1,
        legend style={
            at={(0.5,1)},
            anchor=south,
            font=\tiny,
            draw=none,
            fill=none,
            row sep=0.01cm
        },
        height = 3.5cm,
        width = 4.3cm,
        xlabel near ticks,
        ylabel near ticks,
        legend cell align = left,
        label style = {font=\scriptsize},
    },
    every tick label/.append style={
        font=\tiny
    },
}
\newif\ifdraft
\definecolor{basicgreen}{rgb}{0.0, 0.6, 0.0}
\definecolor{strongblue}{rgb}{0.1, 0.1, 1.0}
\definecolor{strongred}{rgb}{0.0, 0.0, 1.0}
\definecolor{mygreen}{rgb}{0,0.5,0.0}
\definecolor{mygray}{rgb}{0.5,0.5,0.5}
\definecolor{mypurple}{rgb}{0.38,0,0.32}
\definecolor{myblue}{rgb}{0.1,0,0.32}
\definecolor{darkgrey}{RGB}{70,70,70}
\definecolor{lightgrey}{RGB}{200,200,200}
\newcommand{\edgelabel}{\mathit{label}}
\newcommand{\fg}{\mathcal{F}_G}
\newcommand{\fgp}{\mathcal{F}_{G'}}
\newcommand{\nodesg}{\N_G}
\newcommand{\infi}{\mathsf{inf}}
\newcommand{\supr}{\mathsf{sup}}
\newcommand{\bound}[2]{\mathsf{boundaries}(#1,#2)}
\newcommand{\ratk}{\frac{1}{k} \integers}
\newcommand{\rattk}{\frac{1}{2k} \integers}
\newcommand{\ei}{\emph{(i)~}}
\newcommand{\eii}{\emph{(ii)~}}
\newcommand{\eiii}{\emph{(iii)~}}
\newcommand{\valz}[1][G]{\mathsf{val}_{#1}}
\newcommand{\val}[3]{\valz(#1, #2, #3)}
\newcommand{\valpz}{\valz[G']}
\newcommand{\valp}[3]{\valpz(#1, #2, #3)}
\newcommand{\dom}[1]{\mathsf{dom}(#1)}
\newcommand{\range}[1]{\mathsf{range}(#1)}
\newcommand{\PSPACE}{\textsc{PSpace}\xspace}
\newcommand{\NP}{\textsc{NP}\xspace}
\newcommand{\sizea}[2]{\#\mathsf{answers}^{#1}(#2)}
\newcommand{\td}{\mathcal{T}}
\newcommand{\tdg}{\td_{G}}
\newcommand{\tuples}{\mathcal{U}}
\newcommand{\tuplest}{\tuples^t}
\newcommand{\tuplesd}{\tuples^d}
\newcommand{\tuplestd}{\tuples^{td}}
\newcommand{\tuplestdbe}{\tuples^c}
\newcommand{\tuplesx}{\tuples^{x}}
\newcommand{\pbma}{\textsc{Answer}\xspace}
\newcommand{\pbmz}{\textsc{CompactAnswer}}
\newcommand{\pbm}[1]{\pbmz$^{#1}$\xspace}
\newcommand{\pbms}{\pbm{}}
\newcommand{\pbmt}{\pbm{t}}
\newcommand{\pbmd}{\pbm{d}}
\newcommand{\pbmx}{\pbm{x}}
\newcommand{\pbmtd}{\pbm{td}}
\newcommand{\pbmtdbe}{\pbm{c}}
\newcommand{\nn}{\mathbb{N}}
\newcommand{\integers}{\mathbb{Z}}
\DeclareFontFamily{OMX}{MnSymbolE}{}
\DeclareFontShape{OMX}{MnSymbolE}{m}{n}{
    <-6>  MnSymbolE5
   <6-7>  MnSymbolE6
   <7-8>  MnSymbolE7
   <8-9>  MnSymbolE8
   <9-10> MnSymbolE9
  <10-12> MnSymbolE10
  <12->   MnSymbolE12}{}
\DeclareSymbolFont{mnlargesymbols}{OMX}{MnSymbolE}{m}{n}
\DeclareMathDelimiter{\llangle}{\mathopen}{mnlargesymbols}{'164}{mnlargesymbols}{'164}
\DeclareMathDelimiter{\rrangle}{\mathclose}{mnlargesymbols}{'171}{mnlargesymbols}{'171}
\newcommand{\nt}{\noindent}
\newcommand{\intervals}[1]{\mathsf{intv}(#1)}
\newcommand{\ld}[1]{_{#1}\!\lfloor}
\newcommand{\rd}[1]{\rfloor\!_{#1}}
\newcommand{\evalci}[2][G]{\llparenthesis #2 \rrparenthesis_{#1}}
\newcommand{\evalcit}[2][G]{\evalci[#1]{#2}^t}
\newcommand{\evalcid}[2][G]{\evalci[#1]{#2}^d}
\newcommand{\evalcitd}[2][G]{\evalci[#1]{#2}^{td}}
\newcommand{\evalcitdbe}[2][G]{\evalci[#1]{#2}^c}
\newcommand{\eval}[2][G]{\llbracket #2 \rrbracket_{#1}}
\newcommand{\evalp}[2][G']{\llbracket #2 \rrbracket_{#1}}
\newcommand{\edge}{\mathsf{edge}}
\newcommand{\node}{\mathsf{node}}
\newcommand{\pred}{\mathit{pred}}
\newcommand{\query}{\mathsf{trpq}}
\newcommand{\tnz}{\textbf{T}}
\newcommand{\tn}[1]{\tnz_{#1}}
\newcommand{\tndelta}{\tnz_\delta}
\newcommand{\te}[1]{\text{\ #1\ }}
\newcommand{\compl}[2]{\mathsf{compl}\left(#1,#2\right)}
\newcommand{\join}{\mathbin{\bowtie}}
\newcommand{\tjoin}{\mathbin{\overline{\bowtie}}}
\newcommand{\tup}[1]{\langle #1\rangle}            
\newcommand{\Tup}[1]{\big\langle #1\big\rangle}
\newcommand{\rat}{\mathbb{Q}}
\newcommand{\para}[1]{\vspace{+0.1cm} \noindent\textbf{#1.}}
\newcommand{\mc}[1]{\mathcal{#1}}
\newcommand{\E}{\mc{E}}
\newcommand{\N}{\mc{N}}
\renewcommand{\u}{\mathbf{u}}
\newcommand{\vt}{\mathbf{v}}
\newcommand{\keyword}[1]{\textcolor{basicgreen}{ \mathtt{#1}} \xspace}
\newcommand{\joinsymbol}{\textcolor{red}{\bm{/}} \xspace}
\newcommand{\scaleExamples}{0.6}
\newcommand{\scaleExamplesSecondFig}{0.75}
\newcommand{\roundingEdgesExamples}{5pt}
\newcommand{\lstuppercase}
{\uppercase\expandafter{\expandafter\lst@token\expandafter{\the\lst@token}}}
\newcommand{\lstlowercase}
{\lowercase\expandafter{\expandafter\lst@token\expandafter{\the\lst@token}}}
\lstdefinestyle{SQLstyle}{%
  language=SQL,%
  basicstyle=\ttfamily,%
  keywordstyle=\color{blue}\ttfamily\lstuppercase\bfseries,%
  morekeywords={and, order, by, access, mod, nls_date_format, nvl,
    replace, sysdate, to_char, to_number, trunc, references, with,
    returns, begin, query, return, language, plpgsql, declare,
    function, loop, if, for, rowtype, strict, new, setof, after,
    each, row, procedure, raise, before, over, partition, unbounded,
    preceding, rank, lag, century, to, materialized, type, sequence,
    public, database, rule, tablespace, context, dimension, user,
    synonym, privileges, privilege, operator, flashback, export,
    import, object, link, dictionary, inherits, less, than,
    determines, old, partitions, hash, store, reference, reformat,
    fold, unfold, cp, dur, valid, seq, nseq, vt, now, duration,
    snapshot, normalize, align, absorb, daterange, analyze, explain, timing, off, lead, window},%
  deletekeywords={END},%
  identifierstyle=\ttfamily,%
  stringstyle=\ttfamily,%
  showstringspaces=false,%
  mathescape=true,%
  upquote=true,%
}
\begin{document}
%

\title{Compact Answers to Temporal Path Queries}

\author{
Muhammad Adnan\inst{1} \and
Diego Calvanese\inst{1} \and
Julien Corman\inst{1} \and
Anton Dignös\inst{1} \orcidID{0000-0002-7621-967X} \and
Werner Nutt\inst{1}\orcidID{0000-0002-9347-1885} \and 
Ognjen Savković\inst{1}
}

\authorrunning{M. Adnan et al.}

\institute{Free University of Bozen-Bolzano, Italy}

\maketitle              
\begin{abstract}

We study path-based graph queries that, in addition to navigation through
edges, also perform navigation through time.  
This allows asking questions about the dynamics of
networks, like traffic movement, cause-effect relationships, or the spread
of a disease.
In this setting, a graph consists of triples annotated with validity
intervals, and a query produces pairs of nodes where each pair is
associated with a binary relation over time.  For instance, such a pair
could be two airports,
and the temporal relation could map potential departure times
to possible arrival times.
An open question is how to represent such a relation in a compact form
and maintain this property during query evaluation.
We investigate four compact representations of answers to a such queries, 
which are based on alternative ways to encode sets of intervals.
We discuss their respective advantages and drawbacks, in terms of conciseness,
uniqueness, and computational cost.
Notably, the most refined encoding guarantees that query answers over dense time
can be finitely represented.


\keywords{
graph databases \and
temporal databases \and
regular path queries
}
\end{abstract}
%


\input{toc}

\section{Introduction}
\label{sec:intro}

Temporal databases~\cite{DBLP:conf/ebiss/BohlenDGJ17} have traditionally focused on computing relations
where each tuple is annotated with a \emph{single} time point (or single interval) for validity.
In comparison, little attention has been paid to relations over time points,
which indicate how different events are related temporally.
This is the focus of this paper, more precisely binary relations over time points.
%
 
%
One application is linking potential departure times of a road trip to their corresponding arrival times, considering uncertainties like traffic.
Such a relation matches each departure time with a range of possible arrival times (or equivalently each potential arrival time back to possible departures).
%
%
%
Another example is modelling cause-effect scenarios with uncertain delays.
For instance, one may compute the relation that associates (in time) the potential malfunction a component in an airplane to the subsequent malfunction of another component. 
%
A third application of binary temporal relations, which we will use as a running example,
is modeling the spread of phenomena such as messages or diseases.
For instance, in epidemiology, the latency and infectiousness periods of a virus induce a temporal relation that maps potential infection times to possible subsequent transmissions, which allows modeling disease propagation within a population.

These examples suggest at least three elementary operations that one may want to perform over such relations: 
\ei{} \emph{filtering} a temporal relation (based on some knowledge about events that took place),
\eii \emph{combining} two temporal relations,
and \eiii \emph{composing} them (e.g. to model transitive cause-effect or spread).
Or in database terms, selection, union and join respectively.
In particular, a fundamental question is how one can \emph{represent} a temporal binary relation to start with,
in a compact way (or simply in a finite way over dense time),
and whether compactness (resp. finiteness) is preserved by selection, union and/or join.
This is the main problem addressed in this paper.

\begin{figure}[htbp]
    \centering






\begin{tikzpicture}
[
      NodeRed/.style={
        rounded corners,
        fill=red!80!black!15!white,
        draw=green!20!black,
        semithick
      },
      NodeSolar/.style={
        rounded corners,
        fill=solarized-base3!50!white,
        draw=green!20!black,
        semithick
      },
      NodeGreen/.style={
        rounded corners,
        fill=green!80!black!15!white,
        draw=green!20!black,
        semithick
      },
      ISA/.style={
        thick,-{Triangle[open]},
      },
      Edge/.style={
        thick,>=latex,->
      },
      EdgeProperties/.style={
        sloped,
        below,
        rounded corners,
        fill=solarized-base3!50!white,
      }
      ]

\node[NodeGreen] (alice) {\emph{Alice}};
\node[NodeGreen] (iswc) [right=3.5cm of alice] {\emph{ISWC}};
\node[NodeGreen] (icdt) [below=of iswc] {\emph{ICDT}};

\node[NodeGreen] (bob)  [right=3.5cm of iswc] {\emph{Bob}};
\node[NodeGreen] (positive) [below=of bob] {\emph{positive}};

\draw[Edge, bend right=15] 
  (alice) to  node[above, fill=yellow!20, rounded corners, inner sep=2pt] 
  {$\mathtt{attends}$  \textbf{[100, 102]}} (icdt);

\draw[Edge] 
  (alice) -- node[above, fill=yellow!20, rounded corners, inner sep=2pt] 
  {$\mathtt{attends}$  \textbf{[104, 106]}} (iswc);

\draw[Edge] 
  (bob) -- node[above, fill=yellow!20, rounded corners, inner sep=2pt] 
  {$\mathtt{attends}$ \textbf{[102, 107]}} (iswc);


\draw[Edge] 
  (bob) -- node[left, fill=yellow!20, rounded corners, inner sep=2pt] 
  {$\mathtt{tests}$ \textbf{[112, 112]}} (positive);

\end{tikzpicture}

    \caption{A temporal graph}
  \label{fig:graph}
\end{figure}
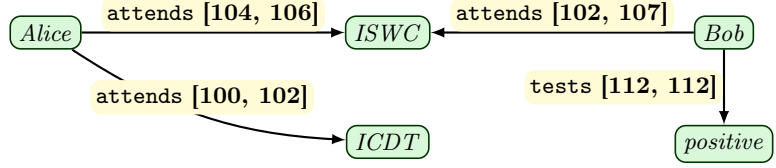

We focus on the query language recently introduced in~\cite{arenas2022temporal}, which computes and performs operations on binary temporal relations, along with a graph data model used to filter such relations.
For presentation purposes, we simplify this language and data model to a fragment that is essentially Regular Path Queries (RPQs) extended with a temporal navigation operator,
and evaluated over a time-labeled graphs.
\footnote{RPQs are a central building block of navigational graph query languages such as Cypher~\cite{francis2018cypher} and SPARQL~\cite{harris2013sparql}.}
RPQs are used to navigate through the graph,
and the temporal operator enables navigation in time by a certain \emph{range}.
%

An RPQ $q$ is a regular expression, and a pair
$\tup{n_1, n_2}$ of nodes is an answer to $q$ over a graph $G$ if there exists a
path from $n_1$ to $n_2$ in $G$ whose concatenated labels match this regular expression.
The language of~\cite{arenas2022temporal}, 
called \emph{Temporal RPQs} (TRPQs),
extends RPQs with a temporal operator that 
allows navigation from one node at a certain time to the same node in past or future moments.
Accordingly,
an answer is a pair $\tup{\tup{n_1, t_1}, \tup{n_2, t_2}}$,
where $n_1$ and $n_2$ are each associated with a time point,
which mark the starting and arrival time of the temporal navigation respectively.

More precisely,
this temporal operator, which we denote as $\tndelta$ here,
allows navigation in time within the range
specified as the interval $\delta$.
For instance,
the latency and infectiousness periods of a virus may imply that a person can only become infectious at least three days after the infection time $t$,
and remain infectious at most five days after $t$.
If $\td$ is our temporal domain,
this induces a binary temporal relation 
\[R = \{(t, t + d) \mid t \in \td \te{and} d \in [3, 5]\}.\]

Accordingly, if $\nodesg$ is the set of nodes in the queried graph $G$,
then the query $\tn{[3, 5]}$ outputs all pairs $\tup{\tup{n, t_1}, \tup{n, t_2}}$ such that 
$n \in \nodesg$ and $(t_1, t_2) \in R$.
This output can be joined and filtered based on events registered in $G$,
effectively restricting the initial temporal relation $R$,
and associating each returned time point with a meaningful node.
For instance, the graph of Figure~\ref{fig:graph} represents data about attendance at a conference, where each fact is a labelled edge,
annotated with an interval for validity.
We can extend our query as follows

\[q_1 = \keyword{\tn}_{[3, 5]} \joinsymbol \mathtt{attends} \joinsymbol \mathtt{attends}^-.\]

This query outputs all pairs $\tup{\tup{n_1, t_1}, \tup{n_2, t_2}}$ such that $n_1$ and $n_2$ attended a same event,
and $n_1$ may have transmitted the virus to $n_2$ (during this event) at time $t_2$ if $n_1$ got infected at time $t_1$.
These include the two pairs 
$\langle \tup{\mathit{Alice}, 100}$, $ \tup{\mathit{Bob}, 104}\rangle$,
and
$\tup{\tup{\mathit{Alice}, 100}, \tup{\mathit{Bob}, 105}}$,
meaning that if Alice contracted the virus on day 100,
then she may have transmitted it to Bob during day 104 or day 105 (among other possibilities).
In this query, the "$\joinsymbol$" operator is a join, and the subqueries $\mathtt{attends}$ and $\mathtt{attends}^-$ act as filters on our initial temporal relation $R$,
restricting it based on the time intervals contained in the graph.
We can further restrict this relation by requiring that person $n_2$ tested positive at most a week after the possible transmission, with

\[q_2 = q_1 \joinsymbol ?\Big(\keyword{\tn}_{[0, 7]} \joinsymbol \mathtt{tests} \joinsymbol \mathit{(= positive)}\Big).\]

In this query, the operator $?q$ acts as an existential quantifier: it only requires the existence of a match for the subquery $q$,
so that the output of $q_2$ is a subset of the output of $q_1$.

As we have seen above, the initial temporal relation $R$ can be easily represented in a compact way,
using the temporal domain $\td$ for the set of initial infection points,
and a single interval of temporal distances $[3,5]$ that indicates by how much each initial time point can be shifted.
This holds regardless of whether $\td$ is discrete or continuous.
However, this does not hold anymore for the outputs of our queries.
For instance, 
the temporal relation induced by the answers to $q_2$ (which all have \textit{Alice} and \textit{Bob} as first and second node respectively)
is uneven:
over discrete time (and with a granularity of one day),
100 is associated with a single day 105,
whereas 101 is associated with both 105 and 106.
%
Moreover, without a compact representation of query answers,
altering the time granularity (e.g. from days to hours) may significantly increase the number of output tuples (exponentially in the size of the input, assuming that interval boundaries are encoded in binary).
Besides, even in the case where the output of a query can be represented compactly,
computing all answers before summarizing them may be prohibitive,
because the number of intermediate results may impact the performance of (worst-case quadratic) join operations.
Hence the need to not only represent answers in a compact way, but also maintain compactness during query evaluation.
This is even a necessity over dense time, where the lack of a finite representation may forbid query evaluation.
To our knowledge, 
these are still open questions (in particular, left open in~\cite{arenas2022temporal}),
which effectively rule out arbitrary selection, union and joins over binary temporal relations.

\para{Contributions and Organization}
In this work, we propose four alternative compact representations of answers to TRPQs.
In Section~\ref{sec:contributions}, we provide an informal, graphical overview of each of them.
In Section~\ref{sec:preliminaries}, we define the syntax and semantics of the query language that we study.
In Section~\ref{sec:compact}, we formally define and study our four representations: first, in Section~\ref{sec:compact_ep}, 
we present the two simpler ones,
which aggregate tuples along a single time dimension, either starting points or distances;
then, in Section~\ref{sec:foldBoth}, we present the two more complex representations,
which aggregate tuples along both dimensions.
We analyze the respective advantages and drawbacks of each representation in terms of finiteness (over dense time), compactness (when finite), uniqueness, and the computational cost of query answering and minimizing a set of tuples.
Notably, the fourth (more complex) representation guarantees that compactness and finiteness are maintained throughout query evaluation.
In Section~\ref{sec:related-work}, we discuss related work, and conclude in Section~\ref{sec:conclusions}.



\section{Overview}
\label{sec:contributions}
This section provides a high-level overview of the four compact representations of answers to TRPQs defined and studied in this article.
As a running example, 
the following query $q_3$ outputs all pairs $\tup{\tup{n_1, t_1}, \tup{n_2, t_2}}$ such that,
if Alice got infected at time $t_1$ while attending event $n_1$,
then she may have transmitted the virus at time $t_2$ while attending event $n_2$:
\[q_3 = \mathtt{attends^-} \joinsymbol (=\mathit{Alice}) \joinsymbol \keyword{\tn}_{[3,5]} \joinsymbol \mathtt{attends}.\]

Over discrete time, with days as temporal granularity,
evaluating $q_3$ over the graph $G$ of Figure~\ref{fig:graph} outputs a set $\eval{q_3}$ of $7$ pairs, 
each of the form $\tup{\tup{\mathit{ICDT}, t_1}, \tup{\mathit{ISWC}, t_2}}$.
Since the two nodes (\textit{ICDT} and \textit{ISWC}) are identical for all answers,
we can focus on the binary temporal relation induced by these, 
i.e. the relation $R = \{(t_1, t_2) \mid \tup{\tup{\mathit{ICDT}, t_1}, \tup{\mathit{ISWC}, t_2}} \in \eval{q_3}\}$.
Each pair $(t_1, t_2) \in R$ can equivalently be represented as the pair $(t_1, t_2 - t_1)$,
where the second component is the temporal distance between $t_1$ and $t_2$.
Accordingly, Figure~\ref{fig:all_answers} displays these 7 pairs over the Euclidean plane of (discrete) time per distance.

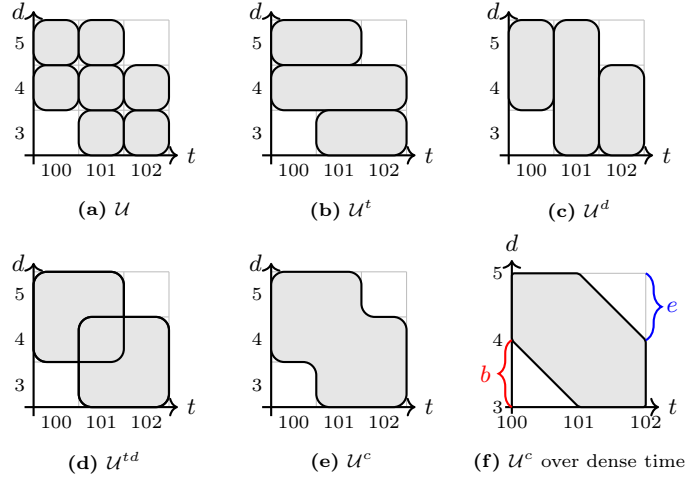
\begin{figure}[htbp]
  \centering
   \begin{subfigure}[t]{0.25\textwidth}
    \centering
      \begin{tikzpicture}[scale=\scaleExamples, rounded corners=\roundingEdgesExamples]

  \foreach \x in {0,1,2,3} {
    \draw[gray!50] (\x, 0) -- (\x, 3);
  }

  \foreach \y in {0,1,2,3} {
    \draw[gray!50] (0, \y) -- (3, \y);
  }

  \foreach \i/\label in {0.5/100, 1.5/101, 2.5/102} {
    \node[below] at (\i, 0) {\scriptsize \label};
  }

  \foreach \j/\label in {0.5/3, 1.5/4, 2.5/5} {
    \node[left] at (0, \j) {\scriptsize  \label};
  }

  \draw[->, thick] (-0.2, 0) -- (3.2, 0) node[right] {$t$};
  \draw[->, thick] (0, -0.2) -- (0, 3.2) node[left] {$d$};

  \foreach \x/\y in {0/1, 0/2, 1/0, 1/1, 1/2, 2/0, 2/1} {
    \filldraw[fill=gray!20, thick, draw=black] (\x,\y) rectangle ++(1,1);
  }


\end{tikzpicture}
    \caption{ $\tuples$}
    \label{fig:all_answers}
  \end{subfigure}
  \begin{subfigure}[t]{0.25\textwidth}
    \centering
      \begin{tikzpicture}[scale=\scaleExamples]

  \foreach \x in {0,1,2,3} {
    \draw[gray!50] (\x, 0) -- (\x, 3);
  }
  \foreach \y in {0,1,2,3} {
    \draw[gray!50] (0, \y) -- (3, \y);
  }

  \foreach \i/\label in {0.5/100, 1.5/101, 2.5/102} {
    \node[below] at (\i, 0) {\scriptsize \label};
  }

  \foreach \j/\label in {0.5/3, 1.5/4, 2.5/5} {
    \node[left] at (0, \j) {\scriptsize \label};
  }

  \draw[->, thick] (-0.2, 0) -- (3.2, 0) node[right] {$t$};
  \draw[->, thick] (0, -0.2) -- (0, 3.2) node[left] {$d$};

  \path[fill=gray!20, thick, draw=black, rounded corners=\roundingEdgesExamples] (0,2) rectangle (2,3); 
  \path[fill=gray!20, thick, draw=black, rounded corners=\roundingEdgesExamples] (0,1) rectangle (3,2); 
  \path[fill=gray!20, thick, draw=black, rounded corners=\roundingEdgesExamples] (1,0) rectangle (3,1); 

\end{tikzpicture}
    \caption{ $\tuplest$}
    \label{fig:cover_t}
  \end{subfigure}
  \begin{subfigure}[t]{0.25\textwidth}
    \centering
      \begin{tikzpicture}[scale=\scaleExamples]

  \foreach \x in {0,1,2,3} {
    \draw[gray!50] (\x, 0) -- (\x, 3);
  }
  \foreach \y in {0,1,2,3} {
    \draw[gray!50] (0, \y) -- (3, \y);
  }

  \foreach \i/\label in {0.5/100, 1.5/101, 2.5/102} {
    \node[below] at (\i, 0) {\scriptsize \label};
  }

  \foreach \j/\label in {0.5/3, 1.5/4, 2.5/5} {
    \node[left] at (0, \j) {\scriptsize \label};
  }

  \draw[->, thick] (-0.2, 0) -- (3.2, 0) node[right] {$t$};
  \draw[->, thick] (0, -0.2) -- (0, 3.2) node[left] {$d$};

  \path[fill=gray!20, thick, draw=black, rounded corners=\roundingEdgesExamples] (0,1) rectangle (1,3); 
  \path[fill=gray!20, thick, draw=black, rounded corners=\roundingEdgesExamples] (1,0) rectangle (2,3); 
  \path[fill=gray!20, thick, draw=black, rounded corners=\roundingEdgesExamples] (2,0) rectangle (3,2); 

\end{tikzpicture}
    \caption{$\tuplesd$}
    \label{fig:cover_d}
  \end{subfigure}
\\
  \begin{subfigure}[t]{0.25\textwidth}
    \centering
     \begin{tikzpicture}[scale=\scaleExamples]

  \foreach \x in {0,1,2,3} {
    \draw[gray!50] (\x, 0) -- (\x, 3);
  }
  \foreach \y in {0,1,2,3} {
    \draw[gray!50] (0, \y) -- (3, \y);
  }

  \foreach \i/\label in {0.5/100, 1.5/101, 2.5/102} {
    \node[below] at (\i, 0) {\scriptsize \label};
  }

  \foreach \j/\label in {0.5/3, 1.5/4, 2.5/5} {
    \node[left] at (0, \j) {\scriptsize \label};
  }

  \draw[->, thick] (-0.2, 0) -- (3.2, 0) node[right] {$t$};
  \draw[->, thick] (0, -0.2) -- (0, 3.2) node[left] {$d$};

  \path[fill=gray!20, thick, draw=black, rounded corners=\roundingEdgesExamples] (0,1) rectangle (2,3); 
  \path[fill=gray!20, thick, draw=black, rounded corners=\roundingEdgesExamples] (1,0) rectangle (3,2); 

  \draw[draw=black, thick, rounded corners=\roundingEdgesExamples] (0,1) rectangle (2,3); 
  \draw[draw=black, thick, rounded corners=\roundingEdgesExamples] (1,0) rectangle (3,2); 

\end{tikzpicture}
    \caption{$\tuplestd$}
    \label{fig:cover_td}
  \end{subfigure}
  \begin{subfigure}[t]{0.25\textwidth}
    \centering
      \begin{tikzpicture}[scale=\scaleExamples]

  \foreach \x in {0,1,2,3} {
    \draw[gray!50] (\x, 0) -- (\x, 3);
  }
  \foreach \y in {0,1,2,3} {
    \draw[gray!50] (0, \y) -- (3, \y);
  }

  \foreach \i/\label in {0.5/100, 1.5/101, 2.5/102} {
    \node[below] at (\i, 0) {\scriptsize \label};
  }

  \foreach \j/\label in {0.5/3, 1.5/4, 2.5/5} {
    \node[left] at (0, \j) {\scriptsize \label};
  }

  \draw[->, thick] (-0.2, 0) -- (3.2, 0) node[right] {$t$};
  \draw[->, thick] (0, -0.2) -- (0, 3.2) node[left] {$d$};

  \draw[fill=gray!20, draw=black, thick, rounded corners=\roundingEdgesExamples] 
    (0,1) -- (0,3) -- (2,3) -- (2,2) -- (3,2) -- (3,0) -- (1,0)-- (1,1)--cycle; 

\end{tikzpicture}
    \caption{$\tuplestdbe$}
    \label{fig:cover_tdbe}
  \end{subfigure}
  \begin{subfigure}[t]{0.25\textwidth}
    \centering
      \begin{tikzpicture}[scale=0.89]

  \foreach \x in {0,1,2} {
    \draw[gray!50] (\x, 0) -- (\x, 2);
  }
  \foreach \y in {0,1,2} {
    \draw[gray!50] (0, \y) -- (2, \y);
  }

  \foreach \i/\label in {0/100, 1/101, 2/102} {
    \node[below] at (\i, 0) {\scriptsize \label};
  }

  \foreach \j/\label in {0/3, 1/4, 2/5} {
    \node[left] at (0, \j) {\scriptsize \label};
  }

  \draw[->, thick] (-0.1, 0) -- (2.2, 0) node[right] {$t$};
  \draw[->, thick] (0, -0.1) -- (0, 2.2) node[above] {$d$};

  \draw[fill=gray!20, draw=black, thick, rounded corners=1pt] 
    (0,1) -- (0,2) -- (1,2) -- (2,1) -- (2,0) -- (1,0)--cycle; 

 \draw[decorate, decoration={brace, amplitude=5pt}, thick, draw=red]
  (0,0) -- (0,1) node[midway, xshift=-10pt] {\textcolor{red}{$b$}};

\draw[decorate, decoration={brace, mirror, amplitude=6pt}, thick, draw=blue]
  (2,1) -- (2,2) node[midway, xshift=10pt] {\textcolor{blue}{$e$}};

\end{tikzpicture}
    \caption{$\tuplestdbe$ over dense time}
    \label{fig:cover_tdbe_dense}
  \end{subfigure}
  \caption{Alternative representations of the answers to Query $q_3$, in the plane of time per distance}
  \label{fig:compact-subfigs2}
  \vspace{-2em}
\end{figure}

Each of the four compact representations that we study in this article is an alternative way to reduce the number of such pairs.
The first representation, which we call $\tuplest$,
groups these pairs by distance (i.e. horizontally in our figure),
while aggregating time points as intervals.
In this example, this yields three compact answers,
namely 
$([100, 101], 5), $
$([100, 102], 4),$ and
$([101, 102], 3)$.
Each of these is a rectangle with height 1 in our plane,
as illustrated with Figure~\ref{fig:cover_t},
and their union is indeed the area of interest.
The second representation, called $\tuplesd$,
is symmetric, 
in the sense that it
groups results by time points and aggregates distances,
so that compact answers are now rectangle with width 1,
as illustrated with Figure~\ref{fig:cover_d}.
The first representation $\tuplest$ ensures that the number of compact answers is independent of the size of the intervals present in the graph.
However, this number is linear in the size of the intervals used in the query (via the temporal navigation operator).
And conversely for $\tuplesd$.

A third natural attempt (which we call $\tuplestd$) consists in aggregating tuples along both dimensions,
representing our binary relation $R$ as a set of pairs of intervals, one of time points, the other of distances.
These correspond to arbitrary rectangles in our plane.
As can be seen on Figure~\ref{fig:cover_td},
two such rectangles are still needed to cover the whole area (or three if we forbid overlapping rectangles).
Besides, some desirable properties of the first two representations are lost:
the most compact representation (i.e. smallest set of rectangles needed to cover the area) is not unique anymore (e.g. if the area is an "L"-shaped polygon),
and minimizing the number of answers (i.e. finding a minimal set of such rectangles) becomes intractable (if we allow overlapping rectangle).

Moreover, for all three representations,
the number of tuples needed to aggregate all answers to $q_3$ may grow with a change of time granularity.
This can once again be seen in Figure~\ref{fig:all_answers}: 
if we adopt hours rather than days,
then the number of "steps" (bottom left and top right of the area) gets multiplied by 24.
At the limit (i.e. when granularity approaches 0),
we reach dense time,
and the area to cover becomes a rectangle cropped by two lines with slope -1 (drawn in Figure~\ref{fig:cover_tdbe_dense}).
Clearly, such an area cannot be covered by finitely many rectangles.

This is why we introduce a fourth representation, which we call $\tuplestdbe$.
Over dense time, a tuple in $\tuplestdbe$ precisely stands for an area like the one of Figure~\ref{fig:cover_tdbe_dense},
by means of two intervals (one for times, the other for distance),
which intuitively represent the rectangle to be cropped,
together with two values $b$ and $e$
that specify where the two cropping lines intersect respectively with this rectangle.
This representation is also useful over discrete time: in our example, the whole area can be captured with a single tuple that consists of the interval $[100, 102]$ for times,
as illustrated with Figure~\ref{fig:cover_tdbe},
the interval $[3,5]$ for distance, and the values 1 and 1 for $b$ and $e$.
Our most technical result is that binary relations that can be represented as such geometric shapes are closed under composition.
As a consequence,
a TRPQ can only produce a union of such shapes.
So this last format $\tuplestdbe$ overcomes the limitations of the previous ones, in the sense
that answers to a TRPQ can always be represented in a finite way.
This is also the most compact of these four representations.
The price to pay however is an arguably less readable format.
Besides, minimizing a set of tuples under this view is still intractable.

Our main findings (in terms of finiteness, compactness, uniqueness of representation, cost of minimization and query answering) about each of the four representations are summarized at the end of the dedicated section,
in Figure~\ref{fig:summary}.

%



\section{Preliminaries}
\label{sec:preliminaries}

\para{Temporal Graphs}
For maximal generality (and readability),
we use a very simple graph data model,
where each fact is a labelled edge, annotated with a set of time intervals for validity.
Formally, we assume two infinite sets $\N$ of nodes and $\E$ of edge labels.
As is conventional (e.g. in RDF), we represent a fact as a triple $(s, p, o) \in \N \times \E \times \N$
(where $s, p$ and $o$ intuitively stand for "subject", "property" and "object" respectively).
%
In addition, for the temporal dimension of our data,
we assume an underlying temporal domain $\td$ that may be either discrete or dense.
For simplicity, we will use $\integers$ in the former case, and $\rat$ in the latter. 
We also use $\intervals{\td}$ for the set of nonempty intervals over $\td$.

In our model, a database instance simply assigns (finitely many) intervals of validity to (finitely many) triples.
Precisely,
a \emph{temporal graph} $G = \tup{\tdg, \fg,  \valz}$ consists of a bounded \emph{effective temporal domain} $\tdg \in \intervals{\td}$,
a finite set of triples $\fg \subseteq \N \times \E \times \N$, and 
a function $\valz \colon \fg \to 2^{\intervals{\tdg}}$ 
that maps a triple to a finite set of intervals over $\tdg$.
For instance, in Figure~\ref{fig:graph}, $\valz$ assigns the (singleton) set of intervals $\{ [104, 106] \}$ to the triple $(\mathit{Alice}, \mathtt{attends}, \mathit{ISWC})$.




\para{Temporal Regular Path Queries}
\label{sec:language}
We adopt the query language introduced in~\cite{arenas2022temporal}, 
but in a simpler form,
made possible by the simplified data model defined above.
We emphasise that this is without loss of expressivity.

A \emph{Temporal Regular Path Query} (TRPQ) is an expression for the symbol ``$\query$'' in the following grammar:
\begin{align*}
\query  ::=\ &\edge \mid \node \mid \tndelta \mid (\query/\query) \mid  (\query + \query) \mid \query[m,n] \mid \query[m,\_]\\
\edge  ::=\ & \edgelabel \mid \edge^- \\
\node  ::=\ & \pred \mid\ \le k \mid (? \query)  \mid \neg\ \node
\end{align*}
with $\edgelabel \in \E$, $k \in \td, \delta \in \intervals{\td}$, $m, n \in \nn^+$, and $m \leq n$.

Here, $\edge$ and $\node$ are filters on edges and nodes respectively.
The terminal symbol $\pred$ stands for a Boolean predicate that can be evaluated locally for one node $n$,
which we write $n \models \pred$ (for instance, in Figure~\ref{fig:graph},
the node \textit{positive} satisfies the predicate $\mathit{(= positive)}$).
Similarly, the Boolean predicate $\le k$ evaluates whether a time point 
is less than or equal to $k$. The expression
$(? \query)$ filters the nodes that satisfy $\query$, and 
$\neg$ represents logical negation.
The temporal navigation operator 
$\tndelta$ 
stands for navigation in time by any distance in the interval $\delta$,
and the remaining operators are
regular path query (RPQ) operators:
$/$ stands for join, $+$ for union,  
and $\query[m,n]$ for the "repetition" of $\query$ from $m$ to $n$ times. In particular, $\query[0,\_]$ represents Kleene closure (equivalent to the $*$ operator in regular expressions).

The formal semantics of TRPQs is provided in Figure~\ref{fig:semantics},
where $\eval{q}$ is the evaluation of a TRPQ $q$ over a temporal graph $G$.
In this definition,
we use $q^i$ for the TRPQ defined inductively by $q^1 = q$ and $q^{j+1} = q^{j}/q$.
For convenience, we represent (w.l.o.g.) an answer as
two nodes together with a time point and a distance,
rather than two nodes and two time points, i.e. we use tuples of the form 
$\tup{n_1,n_2, t,d}$ rather than
$\Tup{\tup{n_1,t} , \tup{n_2,t+d}}$.
We also use $\nodesg$ to denote the set of nodes that intuitively appear in $G$,
i.e.~all $n$ such that the triple $(n, p, o)$ or $(s, p, n)$ is in $\fg$
for some $s$, $p$ and $o$.
\renewcommand{\arraystretch}{1.2}
\begin{figure}[t]
\[
\begin{array} {rl}
  \eval{\edgelabel}\ = &\{\tup{n_1, n_2, t, 0} \mid t \in \tau \te{for some} \tau \in \val{n_1}{\edgelabel}{n_2}\}\\
  \eval{\edge^-}\ = &\{ \tup{n_2, n_1, t, 0} \mid  \tup{n_1, n_2, t, 0} \in \eval{\edge} \}\\
  \eval{\pred}\ = &\{\tup{n, n, t, 0} \mid n \models \pred \te{and} t \in \tdg \}\\
  \eval{\le k }\ = & \{\tup{n, n, t, 0} \mid n \in \nodesg, t \in \tdg \te{and} t \le k\}\\
  \eval{\tndelta}\ =  &\{\tup{n, n, t, d} \mid n \in \nodesg, t \in \tdg, d \in \delta \te{and} t+d \in \tdg \}\\
  \eval{?\query} = & \{\tup{n,n,t,0} \mid \tup{n,n',t,d} \in \eval{\query}
\te{for some} n' \in \N \te{and} d \in \td\}\\
  \eval{\neg \node}\ = & \big(\{\tup{n,n} \mid n \in \nodesg \} \times \tdg \times \{0\}\big)   \setminus \eval{\node}\\
 \eval{\query_1 / \query_2}\ = &\{\tup{n_1, n_3 , t, d_1 + d_2 } \mid\\
                               & \exists n_2 \colon \tup{n_1, n_2 , t, d_1} \in  \eval{\query_1} \te{and} \tup{n_2, n_3, t + d_1, d_2} \in  \eval{\query_2}\}\\
    \eval{\query_1 + \query_2}\ = &\eval{\query_1} \cup \eval{\query_2}\\
  \eval{\query[m,n]}\ = & \bigcup\limits_{k = m}^n\eval{\query^k}\\
        \eval{\query[m,\_]}\ = &  \bigcup\limits_{k \ge m}\eval{\query^k}
  \end{array} 
\]
\vspace{-1em}
 \caption{Semantics of TRPQs}\label{fig:semantics}
 \vspace{-1em}
\end{figure}
\renewcommand{\arraystretch}{1}


\para{Operations on intervals}
If $\alpha \in \intervals{\integers}$ (resp.~$\intervals{\rat}$) and $d \in \integers$ (resp. $\rat$),
then we use $\alpha + d$ (resp. $\alpha - d$) for the interval $\{t + d \mid t \in \alpha\}$ (resp. $\{t - d \mid t \in \alpha\}$).
Similarly, if $\alpha$ and $\beta$ are intervals,
then we use $\alpha \oplus \beta$ (resp.~$\alpha \ominus \beta$)
for the interval $ \alpha + \{t \mid t \in \beta\}$ (resp.~ $\alpha - \{ t \mid t \in \beta\}$).

\section{Compact answers}
\label{sec:compact}

In this section, we define and study the four compact representations of
answers to a TRPQ sketched in Section~\ref{sec:contributions}.
Let $\tuples$ denote the universe of all tuples that may be output by TRPQs, i.e. 
$\tuples = \N \times \N \times \td \times \td$.
Then each of our four compact representations can be viewed as an alternative format to encode 
subsets of $\tuples$.

We specify each of these four formats as a set of admissible tuples,
denoted
$\tuplest$,
$\tuplesd$,
$\tuplestd$ and
$\tuplestdbe$ respectively.
Let $\tuplesx$ be any of these four sets.
A tuple $\u$ in $\tuplesx$
represents a subset of $\tuples$,
which we call the \emph{unfolding} of $\u$.
And the unfolding of a \emph{set} $U \subseteq \tuplesx$ of such tuples is the union of the unfoldings of the elements of $U$.
We say that $U$
is \emph{compact} if it is finite and if no strictly smaller
(w.r.t.~cardinality) subset of $\tuplesx$ has the same unfolding.
A set $V \subseteq \tuples$ can be \emph{finitely represented} (in $\tuplesx$) if there is a finite $U \subseteq \tuplesx$ with unfolding $V$.

\subsection{Folding time points ($\tuplest$) or distances ($\tuplesd$)} \label{sec:compact_ep}

Each of our two first compact representations aggregates tuples in $\tuples$ along one dimension:
either the time point associated to the first node,
or the distance between times points associated to each node.
The corresponding universes $\tuplest$ and $\tuplesd$ of tuples 
are $\N \times \N \times \intervals{\td} \times \td$ and $\N \times \N \times \td \times \intervals{\td}$ respectively.
The unfolding of a tuple $\tup{n_1, n_2, \tau, d} \in \tuplest$ is 
$\{\tup{n_1, n_2, t, d} \mid t \in \tau\}$,
and analogously  $\{\tup{n_1, n_2, t, d} \mid d \in \delta\}$ for a tuple 
$\tup{n_1, n_2, t, \delta} \in \tuplesd$.

\para{Inductive representation}
In order to understand when the answers $\eval{q}$ to a TRPQ $q$ over a temporal graph $G$ can be finitely represented in $\tuplest$ or $\tuplesd$,
and what the size of such representations may be,
we define by structural induction on $q$ two (not necessarily compact) representation of $\eval{q}$ in $\tuplest$ and $\tuplest$,
noted $\evalcit{q}$ and $\evalcid{q}$ respectively (these representations also pave the way for implementing query evaluation).
For instance, in the case where $q$ is of the form $\pred$, we define 
$\evalcit{\pred}$ as $\{\tup{n, n, \tdg, 0} \mid n \models \pred \}$.

Due to space limitations, 
we only provide the full definition of $\evalcit{q}$ and $\evalcid{q}$ in the appendix (together with proofs of correctness).
We highlight here the least obvious operators.
The first one is the temporal join $\query_1/ \query_2$,
which intuitively composes temporal relations.
If we assume (by induction) that the answers to each operand ($\query_1$ and $\query_2$) are represented in $\tuplest$,
then the representation of $\query_1/ \query_2$ in $\tuplest$ can be computed as a regular join together with simple arithmetic operations on interval boundaries, as follows:
\begin{align*}
 & \evalcit{\query_1/ \query_2}  =  \Big\{\tup{n_1, n_3, ((\tau_1 + d_1) \cap \tau_2)- d_1, d_1 + d_2} \mid \exists n_2 \colon \\
 &                \quad    \tup{n_1, n_2, \tau_1, d_1} \in \evalcit{\query_1}, \tup{n_2, n_3, \tau_2, d_2} \in  \evalcit{\query_2} \te{and} (\tau_1 + d_1) \cap \tau_2 \neq \emptyset\ \Big\}.
\end{align*}
This observation also holds for $\tuplesd$, but with different operations on interval boundaries:
\begin{align*}
 & \evalcid{\query_1/\query_2} = \Big\{\tup{n_1, n_3, t_1, t_2 - t_1 + \delta_2} \mid \exists n_2 \colon \tup{n_1, n_2, t_1 , \delta_1} \in \evalcid{\query_1},  \\
 & \qquad \tup{n_2, n_3, t_2, \delta_2} \in  \evalcid{\query_2} \te{and} t_2 - t_1 \in \delta_1 \Big\}.
\end{align*}
The less obvious of these two operations is $\evalcit{\query_1/ \query_2}$,
which we illustrate with Figure~\ref{fig:tjoin_1}).

Another case of interest is the temporal navigation operator $\tndelta$,
for the second representation $\tuplesd$ (the case of $\tuplest$ is trivial).
Consider a query of the form $q/\tndelta$ (or symmetrically $\tndelta/q$).
If the subquery $\tndelta$ is evaluated independently,
then the output of this subquery may be infinite over dense time, which rules out in practice an inductive evaluation:
\[
  \evalcid{\tndelta} = \{
  \begin{array}[t]{@{}l}
    \tup{n,n,t, ((\delta + t) \cap \tdg) -t } \mid n \in \nodesg, t \in \tdg, (\delta + t) \cap \tdg \neq \emptyset\}.
  \end{array}
\]
However,
if $\evalcid{q}$ is finite,
then the output of the whole query $q/\tndelta$ can be represented finitely,
and effectively computed as follows:
\[
  \evalcid{q/\tndelta} =  \{
  \begin{array}[t]{@{}l}
    \tup{n_1,n_2,t, (\delta' \oplus \delta) \cap \tdg} \mid 
  \tup{n_1,n_2,t, \delta'} \in \evalcid{q}, (t + (\delta' \oplus \delta)) \cap \tdg \neq \emptyset\}.
  \end{array}
\]

\begin{figure}[tbp]
  \footnotesize
  \centering
\begin{tikzpicture}[scale=0.7]

    \draw[|-|, thick, blue] (0,3) -- (3,3) node [midway, above] (TextNode)
        {$\tau_1$};
    \draw[|-|, thick, blue] (4,3) -- (7,3) node [midway, above] (TextNode)
        {$\tau_1 + d_1$};

    \draw[densely dotted] (0,2.3) -- (4,2.3) node [midway, above] (TextNode)
        {$d_1$};

     \draw[|-|, thick, purple] (6,2) -- (8,2) node [midway, above] (TextNode)
        {$\tau_2$};
     \draw[|-|, thick, purple] (9.5,2) -- (11.5,2) node [midway, above] (TextNode)
        {$\tau_2 + d_2$};

    \draw[densely dotted] (6,2.6) -- (9.5,2.6) node [midway, above] (TextNode)
        {$d_2$};

     \draw[|-|, thick, olive] (6,0.6) -- (7,0.6) node [midway, below] (TextNode)
        {$(\tau_1 + d_1) \cap \tau_2$};

    \draw[|-|, thick, violet] (2,0.6) -- (3,0.6) node [midway, below] (TextNode)
        {$((\tau_1 + d_1) \cap \tau_2) -d_1$};

    \draw[|-|, thick, violet] (9.5,0.6) -- (10.5,0.6) node [midway, below] (TextNode)
    {$((\tau_1 + d_1) \cap \tau_2) + d_2$};

    \draw[densely dotted] (2,1.6) -- (9.5,1.6) node [midway, below] (TextNode)
        {$d_1 + d_2$};
    \end{tikzpicture}
    \caption{
Join of two tuples 
 $\tup{n_1, n_2, \tau_1, d_1}$ and $\tup{n_2, n_3, \tau_2, d_2}$ in $\tuplest$.
Each tuple is depicted as two intervals $\tau_i$ and $\tau_i + d_i$.
The two corresponding intervals $\tau_o$ and $\tau_o + d_o$ for the output tuple $\tup{n_1, n_3, \tau_o, d_o}$ are in violet.
}
     \label{fig:tjoin_1}
\end{figure}
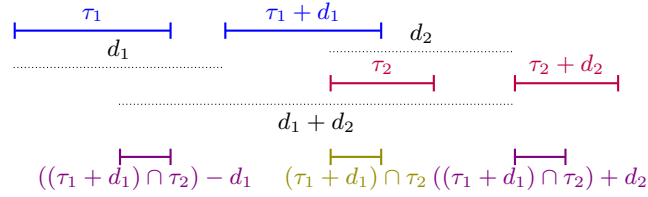
\normalsize
 

\para{Finiteness over dense time}
Over discrete time,
trivially,
$\eval{q}$ can be finitely represented in $\tuples$,\footnote{
 Recall that we assume the effective temporal domain $\tdg$ of $G$ to be bounded.}
therefore as well in any of our four compact representations.
But over dense time, this is not always possible.
For instance, if $q_t$ is the query $\mathit{(= positive)}/\mathtt{tests}^-/\tn{[-7, 0]}$,
and if the graph $G$ of Figure~\ref{fig:graph} is interpreted over dense time,
then $\tup{\mathit{positive},\mathit{Bob}, 112, d} \in \eval{q_t}$ for every rational number $d$ in $[-7, 0]$.
And no tuple in $\tuplest$ can represent more than one of these tuples.
From the definition of $\evalcit{q}$,
the only possible source of non-finiteness for $\tuplest$ is the temporal navigation operator $\tndelta$,
and only if $\delta$ is not a singleton interval (i.e., if it specifies a certain $\emph{range}$ rather than a fixed distance).

Similarly, if $q_d$ is the query $\mathit{(= Bob)}/\mathtt{attends}$,
then $\tup{\mathit{Bob}, \mathit{ISWC}, t, 0} \in \eval{q}$,
for every rational number $t$ in $[102,107]$,
and no tuple in $\tuplesd$ can represent more than one of these.

\para{Compactness}
If $\eval{q}$ can be finitely represented in $\tuplest$ or $\tuplesd$,
then a natural requirement on these representations is conciseness.
It is easy to see that a finite set $U \subseteq \tuplest$ is compact iff all time intervals for the same $n_1, n_2$ and $d$ within $U$ are coalesced.
Formally, let $\sim$ denote the binary relation over $\tuplest$ defined by $\tup{n_1, n_2, \tau_1, d_1} \sim \tup{n_3, n_4, \tau_2, d_2}$ iff $\tup{n_1, n_2, d_1} = \tup{n_3, n_4, d_2}$ and
$\tau_1 \cup \tau_2 \in \intervals{\td}$.
Then $U$ is compact iff $\u_1 \not\sim \u_2$ for all $\u_1, \u_2 \in U$ s.t. $\u_1 \neq \u_2$.
More, there is a unique way to coalesce a finite set of intervals.
Therefore if $V \subseteq \tuples$ can be finitely represented in $\tuplest$,
then $V$ also has a unique compact representation in $\tuplest$.
A symmetric observation holds for a set $U \subseteq \tuplesd$, where we coalesce distance intervals rather than time intervals.

%
Coalescing a set of intervals is known to be in $O(n \log n)$,
and efficient implementations have been devised (see Section~\ref{sec:related-work}).
For this reason, coalescing intermediate results in $\evalcit{q}$ or $\evalcid{q}$ may be an interesting query evaluation strategy.
For instance, this could reduce the size of the operands of a (worst-case quadratic) temporal join.

\para{Size of compact answers}
Our two representations $\tuplest$ and $\tuplesd$ exhibit an interesting symmetry when it comes to the size of compact answers to a query,
intuitively growing with the size of the intervals present in $q$ in the case
of  $\tuplest$,
and the intervals present in $G$ in the case of $\tuplesd$.
This suggests that $\tuplest$ is be better suited to large graphs intervals and small temporal navigation intervals,
and conversely for  $\tuplesd$.

Unfortunately, this does not hold for arbitrary queries.
Indeed, the size of a compact representation of $\eval{q}$ may be affected by the size of the effective temporal domain $\tdg$ alone,
even if all time intervals in the query and graph are singletons,
as soon as $q$ contains an occurrence of the closure operator $[m,\_]$,
as illustrated with the following example:
\begin{example}\label{ex:size_closure}
  Consider a temporal graph $G$ s.t.~$\val{n_1}{e}{n_2} = \{[0,0]\}$ 
and consider the query $q = e/(\text{T}_{[2,2]})[1,\_]$.
The compact representation of $\eval{q}$ is $\{\tup{n_1,n_2,[0,0],d} \mid d \in D\}$ in $\tuplest$,
and $\{\tup{n_1,n_2,0,[d,d]} \mid d \in D\}$ in $\tuplesd$,
where $D = \{ d \in \tdg \cap \nn^+ \mid d \bmod 2 = 0\}$.
\qed
\end{example}
However, for queries without closure operator, which we call \emph{star-free},
the symmetry sketched above holds.
To formalize this, we introduce a notation that we will reuse for the other two representations.
Let $\tuplesx$ be one of $\tuplest$, $\tuplesd$, $\tuplestd$ or $\tuplestdbe$.
Consider a temporal graph $G$ and a star-free query $q$,
such that $\eval{q}$ can be finitely represented in $\tuplesx$.
We fix $G$ and $q$, with the exception of $\tdg$,
and either \ei the intervals present in $G$,
or \eii the intervals present in $q$ (with the requirement that $\eval{q}$ can still be finitely represented in $\tuplesx$).
In case \textit{(i)},
let $n$ be the cumulated length of the intervals in $G$,
and let $V$ be a compact representation of $\eval{q}$ in $\tuplesx$.
We use $\sizea{\tau}{\tuplesx}$ for the function that maps $n$ to the
the number of tuples in $V$.
In case \textit{(ii)},
we use $\sizea{\delta}{\tuplesx}$ with an analogous meaning, but where
$n$ is the cumulated size of the intervals in $q$.

The following results says that the size of the compact representation of $\eval{q}$ in $\tuplest$ (when it exists)
may be affected by the size of the intervals present in $q$,
but not the ones used to label triples in $G$, and conversely for $\tuplesd$:
\begin{restatable}{proposition}{propSizeEvalSingleDimension}\label{prop:size_eval_single_dimension}
In the worst case,
\vspace{-0.5em}
  \begin{align*}
    \sizea{\tau}{\tuplest} &= O(1)  &\sizea{\delta}{\tuplest} & = \Omega(n)\\
    \sizea{\tau}{\tuplesd} &= \Omega(n) &\sizea{\delta}{\tuplesd} & = O(1)
  \end{align*}
\end{restatable}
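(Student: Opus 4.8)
The plan is to prove the four claims by providing, for each asymptotic bound, an explicit family of star-free queries and temporal graphs (or in the $O(1)$ cases, a short structural argument). The two $O(1)$ claims and the two $\Omega(n)$ claims are mirror images of each other under swapping the roles of time points and distances, so in practice I would prove one of each and remark that the other follows by symmetry (exchanging $\tuplest \leftrightarrow \tuplesd$, $\tau \leftrightarrow \delta$, and intervals-in-$G$ $\leftrightarrow$ intervals-in-$q$).

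For $\sizea{\tau}{\tuplest} = O(1)$: fix a star-free query $q$ and vary only $\tdg$ and the intervals labelling triples in $G$. The key observation is that in the inductive definition $\evalcit{q}$, the \emph{distance} component $d$ of every tuple is built purely from arithmetic on interval boundaries occurring in $q$ (the operator $\tndelta$ contributes $d \in \delta$, joins add distances $d_1+d_2$, and the other operators leave $d$ unchanged or equal to $0$); since $q$ is star-free there is no unbounded iteration, so the set of distinct pairs $\tup{n_1,n_2,d}$ appearing across all tuples of $\evalcit{q}$ is bounded by a function of $q$ alone — independent of $\tdg$ and of the graph intervals. For each such fixed triple $\tup{n_1,n_2,d}$, the compact representation coalesces all associated time intervals into a finite set of maximal intervals; enlarging graph intervals or $\tdg$ can only enlarge or merge these intervals, never increase their count beyond a bound depending on $q$. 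Hence the total number of tuples in the compact representation $V$ is $O(1)$ in $n$. The claim $\sizea{\delta}{\tuplesd} = O(1)$ is the mirror image, using that in $\evalcid{q}$ the \emph{time-point} component is manipulated only by arithmetic coming from $q$'s structure (again no iteration, since $q$ is star-free).

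For $\sizea{\delta}{\tuplest} = \Omega(n)$: exhibit a single star-free query $q$ whose only query-interval is a navigation interval $\delta$, and show that the compact representation in $\tuplest$ needs $\Omega(|\delta|)$ tuples. A natural candidate in the spirit of the running example is $q = e / \tndelta$ with $\val{n_1}{e}{n_2}=\{[0,0]\}$ and $\tdg$ large enough, giving $\eval{q} = \{\tup{n_1,n_2,0,d} \mid d\in\delta\}$; in $\tuplest$ each distinct distance $d$ forces its own tuple $\tup{n_1,n_2,[0,0],d}$, so the compact representation has exactly $|\delta \cap \td|$ tuples, which over discrete time (with boundaries encoded in binary) is $\Omega(n)$. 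The claim $\sizea{\tau}{\tuplesd} = \Omega(n)$ is the mirror: take $q$ with a single graph interval, e.g. $q = (=n_1)/e$ with $\val{n_1}{e}{n_2}=\{[0,m]\}$, so $\eval{q} = \{\tup{n_1,n_2,t,0}\mid t\in[0,m]\}$, and in $\tuplesd$ each distinct time point $t$ needs its own tuple $\tup{n_1,n_2,t,[0,0]}$, giving $\Omega(n)$ tuples in the length $n = m$ of the graph interval.

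The main obstacle I anticipate is not the lower bounds — those are immediate from well-chosen examples — but making the $O(1)$ upper bounds fully rigorous: one must argue carefully that for a star-free $q$ the \emph{multiset} of distance values (resp.\ time-point values) occurring in the inductive representation is bounded independently of $\tdg$ and of the varied intervals, and that coalescing cannot blow up the count. The cleanest way is a structural induction on $q$, reading off from each clause of the definition of $\evalcit{q}$ (resp.\ $\evalcid{q}$ in the appendix) that the number of distinct "fixed" coordinates is bounded by a function of $q$; the star-free restriction is exactly what is needed to kill the problematic $[m,\_]$ clause, as Example~\ref{ex:size_closure} shows is necessary. I would also double-check the corner cases ($\tndelta$ alone, and the $?\query$ / $\neg\node$ operators) since these are where the time-point coordinate is pinned to $\tdg$ and could, if mishandled, appear to depend on $n$.
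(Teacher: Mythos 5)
Your proposal is correct and follows essentially the same route as the paper's proof: a structural induction on the star-free query over the clauses of $\evalcit{q}$ (resp.\ $\evalcid{q}$) for the two $O(1)$ bounds, and explicit one-operator example families (a bare $\tndelta$ for $\sizea{\delta}{\tuplest}=\Omega(n)$, a single long-labelled edge for $\sizea{\tau}{\tuplesd}=\Omega(n)$) for the lower bounds, which your examples match up to an inessential extra atom. Your refinement of the upper-bound argument --- first bounding the number of distinct $\tup{n_1,n_2,d}$ coordinates and then noting that coalescing the time intervals per such coordinate cannot increase the count --- is a slightly more careful phrasing of the same induction and handles the $\tndelta$ case cleanly.
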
 

\para{Complexity of query answering}
Let $\tuplesx$ be one of $\tuplest$, $\tuplesd$, $\tuplestd$ or $\tuplestdbe$.
We formulate a decision problem
analogous to the classical boolean query answering problem (for atemporal databases),
in such a way that it remains defined even if $\eval{q}$ does not admit a finite representation in $\tuplesx$.
We say that a tuple $\u$ in $\tuplesx$ is a \emph{compact answer} to $q$ over $G$
if its unfolding is a subset of $\eval{q}$ and is maximal among the tuples in $\tuplesx$ that satisfy this condition.
We can now define our (four) problems (where $x$ is either $t, d, td$ or $c$):

\smallskip
\noindent
\begin{center}
\vspace*{-1.5ex}
\framebox[0.7\columnwidth]{%
 \begin{tabular}{l}
   \pbmx\\
   \begin{tabular}{ll}
     \textbf{Input}: & \! temporal graph $G$, TRPQ $q$, tuple $\u \in \tuplesx$\\
     \textbf{Decide}: & $\u$ is a compact answer to $q$ over $G$
   \end{tabular}
 \end{tabular}}
 \vspace*{-2ex}
 \end{center}
 \ \\
  \smallskip

Complexity for these problems is (partly) driven by the size of the input time intervals,
and there is no reason a priori to assume that intervals in the graphs are larger than the ones in the query.
This is why we do not focus on data complexity (where the query is fixed),
but instead on combined complexity,
where the size of the query and data may vary (we leave for future work a finer-grained analysis).
We show in the appendix 
 that the results proven in~\cite{arenas2021temporal} for answering TRPQs in $\tuples$ immediately transfer to $\tuplest$ 
(resp. $\tuplesd$),
even in the case where $\eval{q}$ cannot be finitely represented in $\tuplest$ 
(resp. $\tuplesd$):
\begin{proposition}\ \\ 
     \pbmt and \pbmd are \PSPACE-complete.
\end{proposition}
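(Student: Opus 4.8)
The plan is to prove the two directions of the statement separately, noting that every argument transfers verbatim from \pbmt to \pbmd by swapping the last two (time and distance) components of all tuples; so I only discuss \pbmt. Throughout I rely on the result of~\cite{arenas2021temporal} that the plain answering problem \pbma over $\tuples$ --- given $G$, $q$, and a tuple $\tup{n_1,n_2,t,d}$, decide whether $\tup{n_1,n_2,t,d}\in\eval{q}$ --- is \PSPACE-complete, and in particular that the matching upper bound is realised by a \PSPACE procedure testing membership of a single, concretely given tuple.

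\textbf{Hardness.} I reduce \pbma to \pbmt. Given an instance consisting of $G$, $q$, and $\tup{n_1,n_2,t,d}$, I first normalise it in polynomial time so that $t,t+d\in\tdg$ (otherwise $\tup{n_1,n_2,t,d}\notin\eval{q}$ by a straightforward induction over the semantics, and I output a fixed NO-instance). I then add to $G$ a single marker fact $(n_2,p_0,n_2)$, where $p_0$ is a fresh edge label and the validity is the singleton set $\{[t+d,\,t+d]\}$; call the result $G'$ (a short case analysis ensures this leaves $\eval{q}$ unchanged, the only sensitive point being whether $n_2$ is newly introduced into the node set). I set $q'=q\,/\,(?\,p_0)$ and $\u'=\tup{n_1,n_2,[t,t],d}\in\tuplest$. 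From the semantics one reads off $\eval[G']{?\,p_0}=\{\tup{n_2,n_2,t+d,0}\}$, hence $\eval[G']{q'}=\{\tup{n_1,n_2,t',d'}\in\eval{q}\mid t'+d'=t+d\}$. Consequently the unfolding $\{\tup{n_1,n_2,t,d}\}$ of $\u'$ is contained in $\eval[G']{q'}$ exactly when $\tup{n_1,n_2,t,d}\in\eval{q}$; and whenever it is, $\u'$ is automatically maximal, since any tuple $\tup{n_1,n_2,\tau'',d}$ strictly extending it contains some $\tup{n_1,n_2,t'',d}$ with $t''\neq t$, whose membership in $\eval[G']{q'}$ would force $t''+d=t+d$, a contradiction. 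Thus $\u'$ is a compact answer to $q'$ over $G'$ iff $\tup{n_1,n_2,t,d}\in\eval{q}$, which gives \PSPACE-hardness of \pbmt; the mirrored marker together with $\u'=\tup{n_1,n_2,t,[d,d]}$ handles \pbmd.

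\textbf{Membership.} Here the plan is to decide the problem with polynomially many invocations of the \PSPACE procedure for \pbma. On input $G$, $q$, $\u=\tup{n_1,n_2,\tau,d}$, being a compact answer amounts to: (i) the unfolding of $\u$ is contained in $\eval{q}$; and (ii) $\tau$ is inextensible, i.e.\ no $\tup{n_1,n_2,\tau',d}$ with $\tau\subsetneq\tau'\in\intervals{\td}$ still unfolds inside $\eval{q}$ --- a tuple differing in $n_1$, $n_2$ or $d$ has a disjoint unfolding and so is irrelevant. The key lemma I would establish, by structural induction on $q$ using the inductive representation $\evalcit{q}$ and the fact that the closure operator over a bounded effective domain stabilises, is that the slice $S=\{\,t'\in\td\mid\tup{n_1,n_2,t',d}\in\eval{q}\,\}$ is always a \emph{finite} union of intervals whose endpoints are obtained from the interval endpoints occurring in $G$ and $q$ and from $d$ by a bounded number of additions and subtractions, hence have polynomial bit length. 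Granting this: condition (i) fails iff some $t'\in\tau\setminus S$ exists, in which case one of polynomial bit length exists (a rational in a component of $\tau\setminus S$), so I can guess it and verify $t'\notin\eval{q}$ with the complemented \pbma procedure; condition (ii) fails iff the component of $S$ containing $\tau$ protrudes past $\tau$ on the left or the right, which I detect by guessing a polynomially bounded candidate endpoint and confirming, with polynomially many further \pbma-style tests, that the corresponding extension stays inside $S$. Since \PSPACE is closed under complementation and under a bounded number of such guess-and-verify alternations, this places \pbmt --- and symmetrically \pbmd --- in \PSPACE.

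I expect the slice lemma to be the real obstacle: bounding both the number of interval components of $S$ and the bit length of their endpoints is precisely where the closure operator $q[m,\_]$ and dense time interact, and it is the part that genuinely extends the analysis of \pbma in~\cite{arenas2021temporal} rather than invoking it as a black box. The hardness direction, by contrast, is a short self-contained gadget reduction, requiring only the marker fact to pin the relevant time point.
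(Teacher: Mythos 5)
Your proposal follows essentially the same route as the paper for both directions: hardness by reducing \pbma via marker facts that pin the relevant time point(s) (the paper uses two fresh self-loops at times $t$ and $t{+}d$ and the query $e_1/q/e_2$; your single marker plus $?\,p_0$ works equally well, since the fixed distance $d$ in the $\tuplest$-tuple pins $t$ once $t{+}d$ is pinned), and membership by reducing compactness of $\u$ to calls to a \PSPACE oracle for \pbma, justified by a structural lemma on the slice $S$.

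One point in your membership argument needs repair. You justify the polynomial bit length of the endpoints of the components of $S$ by saying they arise from the interval endpoints in $G$, $q$ and $d$ ``by a bounded number of additions and subtractions.'' That is false in the presence of the closure operator $\query[m,\_]$: Example~\ref{ex:size_closure} of the paper produces endpoints $2j$ for all $j$ up to half the length of $\tdg$, so the number of additions required grows with the effective temporal domain, which is exponential in the binary encoding. The paper's Lemma~\ref{lemma:closed_boundaries_t} instead asserts only that the endpoints lie in the (unboundedly iterated) additive closure $Q^{+-}$ of $Q=\bound{G}{q}\cup\{d\}$; polynomial bit length then follows not from a bound on the number of operations but from the observation that every element of $Q^{+-}$ has denominator dividing the product $k$ of the denominators occurring in $Q$ and lies in the bounded domain $\tdg$. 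With that substitution your argument goes through: the guessed witnesses can be taken on the grid $\frac{1}{2k}\integers$, which is exactly how the paper discharges both the containment check (i) and the maximality check (ii) --- the latter, incidentally, needs only the two grid points immediately outside $\tau$, rather than re-verifying containment of a guessed extension, which is itself a universally quantified check over $\tau''\setminus\tau$ and not ``polynomially many tests'' as you state (though this imprecision does not threaten the \PSPACE bound).
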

We also observe that hardness can be proven with a graph of fixed size,
with the exception of the effective temporal domain $\tdg$.
However, the number of operators of the query used in this reduction is not fixed.

\subsection{Folding time points and distances ($\tuplestd$ and $\tuplestdbe$)}\label{sec:foldBoth}
Each of the two compact representations $\tuplest$ and $\tuplesd$ defined above aggregates tuples in $\tuples$ along one dimension (either time points or distances), and both may fail to represent the answers to a query compactly (or even finitely over dense time).
So a first natural attempt to address this limitation consists in aggregating tuples along both dimensions.
Accordingly, we define the universe $\tuplestd$ as $\N \times \N \times \intervals{\td} \times \intervals{\td}$,
and the unfolding of $\tup{n_1, n_2, \tau, \delta} \in \tuplestd$ as $\{\tup{n_1, n_2, t, d} \mid t \in \tau, d \in \delta\}$.

This representation is more compact than the two previous ones,
since unfolding a tuple in $\tuplest$ or $\tuplesd$ yields a subset of some unfolded tuple of $\tuplestd$.
However, as we will see below, minimizing the cardinality of a set of query answers becomes intractable.
Besides, maybe surprisingly,
this new format may also fails to represent query answers in a compact fashion (or finitely over dense time).
To see this,
observe that for two fixed nodes, a tuple in $\tuplestd$ has a natural representation as a rectangle in the Euclidean plane $P$ of times per distances.
The following example shows a query whose output, depicted in Figure~\ref{fig:rectangles}b,
cannot be represented (over dense time) as a finite set of rectangles in $P$.
This example also provides insight about the formalization of our fourth representation (below).

\begin{example}
\label{ex:parallelogram}
Consider a temporal graph $G$ with two edges
such that $\val{n_1}{e_1}{n_2} = \{[0,2]\}$ and $\val{n_2}{e_2}{n_3} = \{[1,3]\}$ 
%
Let $q$ be the query $e_1/\tn{[0,2]}/e_2$.
Then $\eval{q} = \{\tup{n_1,n_2,t,d} \mid t \in [0,2], d \in [0,2] \te{and} t+d \in [1,3] \}$.
So for an answer $\tup{n_1,n_2,t,d} \in \eval{q}$, we have
\begin{align*}
1 \le t + d \le 3, \quad \text{~that is,} \quad 1 - t \le d \le 3 - t. 
\end{align*}
Besides, from the query $q$, we get
$0 \le d \le 2$.
Therefore
\[\max(1 - t, 0) \le d \le \min(3 - t, 2).\]
This observation gives us an interval 
\[\delta_t = [ \max(1 - t, 0), \min(3 - t, 2)]\]
of admissible distances for each $t \in [0,2]$,
so that
$\eval{q} = \{\tup{n_1,n_2,t,d} \mid t \in [0,2] \te{and} d \in \delta_t \}$.
%
Extending this kind of reasoning to the general case,
we derive the formula~\eqref{eq:fourth-repr-forumula} presented below for $\delta_t$.
\qed
\end{example}
As shown in Figure~\ref{fig:rectangles}b,
the area covered by the output of this query can be viewed as a rectangle cropped by two parallel lines,
each with slope -1.
These cropped rectangles turn out to have an essential property:
if two temporal relations have such a shape,
then their composition also does.
As a consequence, associating each tuple with such a cropped rectangle (as opposed to a regular rectangle in $\tuplestd$)
allows us to solve our initial problem:
compute inductively the output of a TRPQ $q$ over a graph $G$
in such a way that the number of output tuples remains independent of the size of the intervals present in either $q$ or $G$.
To represent these cropped rectangles, 
we introduce a fourth format,
which we call $\tuplestdbe$ (where "c" stands for "cropped"),
where each tuple carries,
in addition to a time interval $\tau$ and a distance interval $\delta$ (a.k.a. a rectangle),
the two values $b,e \in \td$ depicted in Figure~\ref{fig:rectangles}b,
which intuitively indicate where the cropping lines intersect the rectangle induced by $\tau$ and $\delta$.
For each time point $t \in \td$,
these define a specific range of distances $\delta_t \subseteq \delta$,
as
\begin{align}
\delta_t =\ \ld{\delta}\ b_\delta + \max(0, b - t) \ ,\ e_\delta - \max(0, t - e)\ \rd{\delta}
\label{eq:fourth-repr-forumula}
\end{align}
where
$\ld{\delta}$ and $\rd{\delta}$ stand for the left and right delimiters of $\delta$,
and $b_\delta$ and $e_\delta$ for its left and right boundaries
For instance, if $\delta = [2,6)$,
then $\ld{\delta}$ is ``['',
$b_\delta$ is 2,
$e_\delta$ is 6
and $\rd{\delta}$ is ``)''.

Accordingly, a tuple $\tup{n_1, n_2, \tau, \delta, b, e} \in \N \times \N \times \intervals{\td} \times \intervals{\td} \times \td \times \td$ is in
$\tuplestdbe$ iff $\delta_t$ is nonempty for every $t \in \tau$.
And the unfolding of this tuple is $\{\tup{n_1, n_2, t, d} \mid t \in \tau, d \in \delta_t\}$.

\para{Inductive representation}
As we did for our first two representations,
we define (in appendix) by structural induction on $q$ two representations $\evalcitd{q}$ and $\evalcitdbe{q}$ of $\eval{q}$ in $\tuplestd$ and $\tuplestdbe$ respectively.
Our most technical result is correctness of this definition for the join operator in $\tuplestdbe$
(and to a lesser extent the temporal navigation operator).

We reproduce these definitions here to make apparent the fact these can be computed by means of simple arithmetic operations on time points and interval boundaries (together with a regular join on nodes for the join operator).
The inductive evaluation $\evalcitdbe{\query_1/\query_2}$ in $\tuplestdbe$ of the query $\query_1/\query_2$ is defined as: 
\[\{\u_1 \tjoin \u_2 \mid \u_1 \in \evalcitdbe{\query_1}, \u_2 \in \evalcitdbe{\query_2}, \u_1 \sim \u_2\}\]
where $\u_1 \sim \u_2$ and $\u_1 \tjoin \u_2$ are defined as follows.\\
For $\u_1 = \tup{n_1, n_2, \tau_1, \delta_1, b_1, e_1}$ and $\u_2 = \tup{n_3, n_4, \tau_2, \delta_2, b_2, e_2}$, let
\begin{align*}
\delta'_1 =&\ \ld{\delta_1}\ b_{\delta_1} + \max(0, b_1 - b_{\tau_1}), e_{\delta_1} - \max(0, e_{\tau_1} - e_1)\ \rd{\delta_1}, \te{and}\\
\tau = &  (((\tau_1 \oplus \delta'_1) \cap \tau_2) \ominus \delta'_1) \cap \tau_1.
\end{align*}
Then the relation $\sim\ \subseteq \tuplestdbe \times \tuplestdbe$ is defined as $\u_1 \sim \u_2$ iff $n_2 = n_3$ and $\tau \neq \emptyset$.\\
And if $\u_1 \sim \u_2$, then $\u_1 \tjoin \u_2$ is defined as  $\tup{n_1, n_4, \tau, \delta_1 \oplus \delta_2, b,e}$,
with $b = \max(b_1, b_2 - b_{\delta_1})$ and $e = \min(e_1, e_2 - e_{\delta_1})$.

Similarly, if $b_{\tdg}$ and $e_{\tdg}$ are the boundaries of the interval $\tdg$,
then $\evalcitdbe{\tndelta}$ is defined as
\[\{\tup{n,n,\tdg, \delta,b_{\tdg},e_{\tdg}} \tjoin \tup{n,n,\tdg, [0,0],b_{\tdg},e_{\tdg}} \mid n \in \nodesg \}.\]

\para{Finiteness over dense time}
We already illustrated in Figure~\ref{fig:rectangles}b why $\eval{q}$ may not be finitely representable in $\tuplestd$.
In contrast, $\eval{q}$ can be finitely represented in $\tuplestdbe$,
as a direct consequence of the (correctness of the) inductive definition of $\evalcitdbe{q}$,
which produces finitely many tuples (notably, the operator $\query_1/\query_2$ produces at most one tuple per pair $(\u_1, \u_2) \in \evalcitdbe{\query_1} \times \evalcitdbe{\query_2}$).

\begin{figure}[t]
  \begin{subfigure}[t]{0.7\textwidth}
    \centering
    \begin{minipage}[t]{0.3\textwidth}
        \begin{tikzpicture}[scale=\scaleExamplesSecondFig]

  \foreach \x in {0,1,2} {
    \draw[gray!50] (\x, 0) -- (\x, 2);
  }
  \foreach \y in {0,1,2} {
    \draw[gray!50] (0, \y) -- (2, \y);
  }

  \foreach \i in {0,1,2} {
    \node[below] at (\i, 0) {\scriptsize \i};
  }

  \foreach \j in {0,1,2} {
    \node[left] at (0, \j) {\scriptsize \j};
  }

  \draw[->, thick] (0, 0) -- (2.3, 0) node[right] {$t$};
  \draw[->, thick] (0, 0) -- (0, 2.3) node[above] {$d$};

  \path[fill=gray!20, thick, draw=black, rounded corners=\roundingEdgesExamples] 
    (0,1) rectangle (1,2);  

  \path[fill=gray!20, thick, draw=black, rounded corners=\roundingEdgesExamples] 
    (0,0) rectangle (2,1);  

\end{tikzpicture}
    \end{minipage}%
    \hspace*{3ex}
    \begin{minipage}[t]{0.3\textwidth}
        \begin{tikzpicture}[scale=\scaleExamplesSecondFig]

  \foreach \x in {0,1,2} {
    \draw[gray!50] (\x, 0) -- (\x, 2);
  }
  \foreach \y in {0,1,2} {
    \draw[gray!50] (0, \y) -- (2, \y);
  }

  \foreach \i in {0,1,2} {
    \node[below] at (\i, 0) {\scriptsize \i};
  }

  \foreach \j in {0,1,2} {
    \node[left] at (0, \j) {\scriptsize \j};
  }

  \draw[->, thick] (0, 0) -- (2.3, 0) node[right] {$t$};
  \draw[->, thick] (0, 0) -- (0, 2.3) node[above] {$d$};

  \path[fill=gray!20, thick, draw=black, rounded corners=\roundingEdgesExamples] 
    (0,0) rectangle (1,2);  

  \path[fill=gray!20, thick, draw=black, rounded corners=\roundingEdgesExamples] 
    (1,0) rectangle (2,1);  

\end{tikzpicture}
    \end{minipage}
    \label{fig:xxx}
  \end{subfigure}
  \begin{subfigure}[t]{0.25\textwidth}
    \centering




\begin{tikzpicture}[scale=\scaleExamplesSecondFig]

  \foreach \x in {0,1,2} {
    \draw[gray!50] (\x, 0) -- (\x, 2);
  }
  \foreach \y in {0,1,2} {
    \draw[gray!50] (0, \y) -- (2, \y);
  }

  \foreach \i in {0,1,2} {
    \node[below] at (\i, 0) {\scriptsize \i};
  }

  \foreach \j in {0,1,2} {
    \node[left] at (0, \j) {\scriptsize \j};
  }

  \draw[->, thick] (0, 0) -- (2.3, 0) node[right] {$t$};
  \draw[->, thick] (0, 0) -- (0, 2.3) node[above] {$d$};

  \draw[fill=gray!20, draw=black, thick, rounded corners=1pt] 
    (0,1) -- (0,2) -- (1,2) -- (2,1) -- (2,0) -- (1,0)--cycle; 

  \draw[thick, red] (-1,2) -- (2,-1);         
  \draw[thick, blue] (0,3) -- (3,0);        

  \draw[decorate, decoration={brace, amplitude=6pt}, thick, draw=red]
  (0,0) -- (0,1) node[midway, xshift=-10pt] {\textcolor{red}{$b$}};

\draw[decorate, decoration={brace, mirror, amplitude=6pt}, thick, draw=blue]
  (2,1) -- (2,2) node[midway, xshift=10pt] {\textcolor{blue}{$e$}};

\end{tikzpicture}

    \label{fig:ex2}
  \end{subfigure}
  \vspace*{-1em}
  \caption{\ %
  \textbf{(a)} Two minimal covers in $\tuplestd$
  \qquad\qquad
  \textbf{(b)} Answers to the query of Example~\ref{ex:parallelogram} }
\vspace*{-1em}
  \label{fig:rectangles}
\end{figure}

\para{Compactness}
Let $U \subseteq \tuplestd$ and $V \subseteq \tuples$ be two sets of tuples that share the same nodes $n_1$ and $n_2$.
Then $U$ unfolds as $V$ iff they intuitively cover the same area in the Euclidean plane of times per distances,
i.e. if
\[\bigcup \{\tau \times \delta \mid \tup{n_1,n_2,\tau,\delta} \in U\} = \{(t,d) \mid \tup{n_1,n_2,t,d} \in V\}.\]

There may be several compact representations in $\tuplestd$ for the same $V \subseteq \tuples$.
For instance, the minimal number of rectangle needed to cover an ``L''-shaped polygon is two,
and there are several such covers,
as illustrated with Figure~\ref{fig:rectangles}a.
This argument easily generalizes to discrete time.

Besides, minimizing the representation of $\eval{q}$ in $\tuplestd$ 
(i.e. computing a minimal set of tuples in $\tuplestd$ with unfolding $\eval{q}$)
is intractable.
To see this, we first observe that computing such a set is harder than deciding whether there exists one with size $k$ (for a given $k$).
Next, the following problem is known to be NP-complete: given a rectilinear polygon $r$ and a number $k$, decide whether there is a set of at most $k$ (possibly overlapping) rectangles that exactly cover $r$~\cite{culberson1994covering,aupperle1988covering}.
This problem reduces to ours, observing that for any rectilinear polygon $r$,
a query $q$ (with only unions) and graph $G$ can be constructed in polynomial time so that $\eval{q}$ covers exactly $r$.
This hardness result also immediately translates to $\tuplestdbe$:
a rectangle is a specific case of a cropped rectangle, therefore the same reduction can be used,
and if there is a minimization with at most $k$ tuples in $\tuplestd$, then there is also one in $\tuplestdbe$.

An important difference though between these two representations and the two previous ones ($\tuplest$ and $\tuplesd$) is that a compact set $U$ of answers in $\tuplestd$ or $\tuplestdbe$ may be \emph{redundant},
meaning that two tuples $\u_1$ and $\u_2$ in $U$ may have overlapping unfoldings.
This can for instance be seen in Figure~\ref{fig:cover_td}, for $\tuplestd$.
If we require tuples to be non-redundant,
then the representations of answers to a query is in general less concise.
But it may be better suited for downstream tasks, such as aggregation.
Besides, tractability of minimization in $\tuplestd$ is regained,
because finding a cover with a minimal number of non-overlapping rectangles is tractable \cite{Keil1986MinimallyCA,DBLP:conf/wg/Eppstein09}.
However, uniqueness is not regained,
as shown in Fig.~\ref{fig:rectangles}a.



 \para{Size of compact answers}
If $V \subseteq \tuples$ can be finitely represented in $\tuplestd$,
then trivially, a compact representation of $V$ in $\tuplestd$ must be smaller than the compact representation of $V$ in $\tuplest$ (resp. $\tuplesd$),
if the latter exists.
So compact answers under this representation must be smaller than under the two previous ones.
However, 
the size of a compact representation may still be affected by the size of time intervals in $q$ (for a star-free $q$ already).
In contrast,
for $\tuplestdbe$,
immediately from the definition of $\evalcitdbe{q}$,
the number of tuples in $\evalcitdbe{q}$ (for a star-free $q$) is independent of the size of the intervals in $G$ or $q$ (even though $\evalcitdbe{q}$ is not necessarily compact).
\begin{restatable}{proposition}{propSizeEvalBothDimensions}\label{prop:size_eval_both_dimensions}
In the worst case,
\vspace{-0.5em}
  \begin{align*} 
    \sizea{\tau}{\tuplestd} & = O(1) & \sizea{\delta}{\tuplestd} & = \Omega(n)\\
    \sizea{\tau}{\tuplestdbe} & = O(1) & \sizea{\delta}{\tuplestdbe} & = O(1)
  \end{align*}
\end{restatable}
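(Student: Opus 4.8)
The plan is to establish four claims: one upper bound ($O(1)$) and one lower bound ($\Omega(n)$) for $\tuplestd$, and two upper bounds ($O(1)$) for $\tuplestdbe$. Throughout, $G$ and the star-free query $q$ are fixed except for $\tdg$ and (in case (i)) the lengths of the intervals labelling triples in $G$, or (in case (ii)) the sizes of the intervals occurring in $q$.

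\textbf{The three $O(1)$ bounds.} For $\sizea{\tau}{\tuplestd} = O(1)$ and both bounds for $\tuplestdbe$, the argument is structural. I would proceed by induction on the structure of the star-free query $q$, using the inductive representations $\evalcitd{q}$ and $\evalcitdbe{q}$ (defined in the appendix, whose correctness we may assume). The key observation, already highlighted in the excerpt, is that the join operator $\query_1/\query_2$ produces at most one tuple per pair $(\u_1,\u_2) \in \evalcitdbe{\query_1} \times \evalcitdbe{\query_2}$, and similarly (up to coalescing) in $\tuplestd$; union simply takes a union of tuple sets; and the base cases ($\edgelabel$, $\node$, $\tndelta$) produce a number of tuples that depends only on the number of triples/nodes in $G$ and the structure of $q$, not on the lengths of the intervals involved. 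Since $q$ is star-free there is no closure operator to iterate, so the total number of tuples in $\evalcitd{q}$ (resp.\ $\evalcitdbe{q}$) is bounded by a function of $G$ and $q$ alone — constant once these are fixed. For $\tuplestd$ I only claim this for case (i): fixing the graph-interval lengths free, the base case $\eval{\edgelabel}$ yields one tuple per validity interval regardless of its length, so the count is still $O(1)$; then a compact representation is no larger than $\evalcitd{q}$. For $\tuplestdbe$ the same reasoning applies to both cases (i) and (ii), because the arithmetic operations in $\u_1 \tjoin \u_2$ (computing $\tau$, $\delta_1 \oplus \delta_2$, $b$, $e$) never split a tuple, irrespective of interval sizes in either $q$ or $G$. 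Hence $\evalcitdbe{q}$ has constantly many tuples, and a compact representation is no larger.

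\textbf{The lower bound $\sizea{\delta}{\tuplestd} = \Omega(n)$.} Here I would exhibit a concrete star-free query and graph whose compact representation in $\tuplestd$ necessarily has $\Omega(n)$ tuples as the cumulated size of the query intervals grows. A natural candidate is a variant of Example~\ref{ex:parallelogram}: take $\val{n_1}{e_1}{n_2} = \{[0,m]\}$, $\val{n_2}{e_2}{n_3}$ a suitable interval, and $q = e_1/\tn{[0,m]}/e_2$ with the navigation interval scaled by a parameter $m$; then $\eval{q}$ is a rectangle cropped by two lines of slope $-1$ (a hexagonal region over discrete time), as in Figure~\ref{fig:rectangles}b. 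The staircase of such a region along each slanted edge forces any cover by axis-parallel rectangles to use a number of rectangles linear in the number of unit steps, which grows linearly with $m$, i.e.\ with $n$. I would make this precise with a simple counting argument: pick $\Omega(m)$ "inner corner" cells of the staircase (reflex corners of the rectilinear region) and observe that no single axis-parallel rectangle inside the region can contain two of them, so any exact cover needs $\Omega(m)$ rectangles. This simultaneously shows that over dense time no finite cover exists (consistent with the claim that $\eval{q}$ need not be finitely representable in $\tuplestd$), but for the $\Omega(n)$ statement the discrete version with parameter $m$ suffices.

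\textbf{Main obstacle.} The routine part is the $O(1)$ bounds — they follow mechanically from the inductive definitions once one checks that no operator (absent closure) multiplies the tuple count by anything depending on interval lengths. The delicate point is the $\Omega(n)$ lower bound for $\tuplestd$: I must argue that \emph{every} compact (minimum-cardinality) cover of the cropped-rectangle region, not merely the one produced by $\evalcitd{q}$, has size $\Omega(n)$. This requires the combinatorial "reflex-corner" lower-bound argument for rectilinear-polygon covering, and care that the region genuinely has $\Omega(m)$ reflex corners under the chosen scaling of the query intervals (and that allowing overlapping rectangles in the cover does not help, which the pairwise-incompatibility of the chosen corner cells guarantees). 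I would also double-check the edge behaviour of the staircase at the ends of the slanted edges so the corner count is exactly linear and not off by a constant that could be absorbed — though for an $\Omega(n)$ claim only the asymptotics matter.
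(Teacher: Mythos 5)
Your overall strategy matches the paper's: induction on the star-free query for the $O(1)$ bounds via the inductive representations, and a staircase/reflex-corner counting argument on a concrete query family for the $\Omega(n)$ bound (the paper uses the even simpler family $q_i = \tn{[0,i]}$ over a one-node graph, whose answer region is a discrete "near triangle" needing $i$ rectangles; your cropped-rectangle variant works just as well).

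There is, however, one genuine gap, in the justification of $\sizea{\tau}{\tuplestd} = O(1)$. You assert that the join "produces at most one tuple per pair $(\u_1,\u_2)$ \ldots and similarly (up to coalescing) in $\tuplestd$". That is false for $\tuplestd$: by the appendix definition, $\u_1 \tjoin \u_2$ there is the \emph{set} $\{\tup{n_1,n_4,[t,t],\delta(t)\oplus\delta_2} \mid t \in \tau\}$, one tuple per time point, because the composition of two rectangles is in general a cropped rectangle that no single rectangle (nor any constant number of them, up to coalescing) can cover --- this is exactly the phenomenon you yourself exploit for the $\Omega(n)$ lower bound, so taking your premise at face value would also "prove" $\sizea{\delta}{\tuplestd} = O(1)$, contradicting the proposition. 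The correct argument (the paper's) is quantitative: after coalescing, the number of distinct tuples contributed by one $\tjoin$ is bounded by roughly $2(e_{\delta_1} - b_{\delta_1}) + 1$, i.e.\ by the cardinality of the first operand's \emph{distance} interval, because $\delta(t)$ is constant on the middle segment $[b,e]$ of $\tau$ and varies only on the two end segments, each of length at most $|\delta_1|$. Since distance intervals originate from the query's temporal navigation operators, this quantity is fixed in case (i) (varying graph intervals), which is what actually yields $O(1)$ there and simultaneously explains why the bound degrades to $\Omega(n)$ in case (ii). Your $O(1)$ claims for $\tuplestdbe$ are unaffected, since there $\tjoin$ really does return a single (cropped-rectangle) tuple per pair.
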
 

\para{Complexity of query answering}
The hardness results of~\cite{arenas2021temporal} over $\tuples$ can also be lifted to $\tuplestd$ and $\tuplestdbe$,
and this bound is tight for $\tuplestd$:
\begin{proposition}\ \\
  \pbmtd is \PSPACE-complete, 
  \pbmtdbe is \PSPACE-hard.
\end{proposition}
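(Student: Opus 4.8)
The plan is to obtain \PSPACE-hardness of both problems from a single reduction, and the matching upper bound for \pbmtd from a separate argument. For the lower bound I would reduce from the problem of answering TRPQs over $\tuples$, i.e.\ deciding whether a given tuple $\tup{n_1,n_2,t_0,d_0}$ belongs to $\eval{q}$, which is \PSPACE-hard~\cite{arenas2021temporal}. The key idea is to wrap $q$ into a query $q'$ whose evaluation is \emph{exactly} the singleton $\{\tup{n_1,n_2,t_0,d_0}\}$ when $\tup{n_1,n_2,t_0,d_0}\in\eval{q}$, and $\emptyset$ otherwise. One may assume without loss of generality that $t_0,t_0+d_0\in\tdg$ and $n_1,n_2\in\nodesg$ (otherwise output a fixed no-instance, or add a self-loop on $n_i$ labelled by an edge symbol not occurring in $q$). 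Over discrete time, for any $k\in\tdg$ the node query $(\le k)/\neg(\le k-1)$ evaluates to $\{\tup{n,n,k,0}\mid n\in\nodesg\}$, since a point $s\in\tdg$ passes the first filter iff $s\le k$ and the second iff $s\ge k$. Writing $p_{n_1},p_{n_2}$ for the node predicates testing equality with $n_1$ and $n_2$, and setting $\gamma_1 := p_{n_1}/(\le t_0)/\neg(\le t_0-1)$ and $\gamma_2 := p_{n_2}/(\le t_0+d_0)/\neg(\le t_0+d_0-1)$, I would take $q' := \gamma_1/q/\gamma_2$ (the parenthesisation being immaterial, since $/$ is associative): $\gamma_1$ pins the first node to $n_1$ and the start time to $t_0$, while $\gamma_2$ pins the last node to $n_2$ and the arrival time to $t_0+d_0$, hence the distance to $d_0$, so that $\eval{q'}$ is as announced. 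Finally, take $\u := \tup{n_1,n_2,[t_0,t_0],[d_0,d_0]}\in\tuplestd$ (resp.\ $\u := \tup{n_1,n_2,[t_0,t_0],[d_0,d_0],t_0,t_0}\in\tuplestdbe$), whose unfolding is the singleton $\{\tup{n_1,n_2,t_0,d_0}\}$: it is contained in $\eval{q'}$ iff $\eval{q'}\neq\emptyset$ iff the source instance is positive, and in that case $\u$ is automatically maximal, because no tuple can have an unfolding that strictly contains a singleton while still being contained in that singleton. Hence $\u$ is a compact answer to $q'$ over $G$ iff $\tup{n_1,n_2,t_0,d_0}\in\eval{q}$, and the reduction is clearly polynomial; this yields \PSPACE-hardness of both \pbmtd and \pbmtdbe.

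For membership of \pbmtd in \PSPACE, given $G$, $q$ and $\u=\tup{n_1,n_2,\tau,\delta}$ I would check the two conditions that define a compact answer: (a) the unfolding of $\u$ is contained in $\eval{q}$, and (b) $\u$ is maximal among the tuples of $\tuplestd$ satisfying (a). Both rest on the \PSPACE procedure of~\cite{arenas2021temporal} for deciding point membership $\tup{n_1,n_2,t,d}\in\eval{q}$ (and on its complement, \PSPACE being closed under complement). For (a): the unfolding of $\u$ fails to be contained in $\eval{q}$ iff some pair $(t,d)$ with $t\in\tau$, $d\in\delta$ has $\tup{n_1,n_2,t,d}\notin\eval{q}$; over discrete time such a witness lies between the endpoints of $\tau$ and $\delta$ and so has polynomial bit-size, so it can be guessed and verified, placing the complement of (a) in nondeterministic polynomial space and hence, by Savitch's theorem, in \PSPACE. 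For (b): $\u$ is \emph{not} maximal iff some rectangle strictly containing $\tau\times\delta$ has all of its points in $\eval{q}$; over discrete time it suffices to test the four one-unit enlargements of $\tau\times\delta$ (if none of them is contained in $\eval{q}$, then no larger rectangle is either, since any larger rectangle would contain an offending point already exhibited in one of the four enlargements), running the routine for (a) on each. Thus the complement of (b), and therefore (b) itself, is in \PSPACE, and conjoining (a) and (b) shows that \pbmtd is in \PSPACE, which with the hardness above gives \PSPACE-completeness.

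Over dense time the same scheme applies, but the witness arguments for (a) and (b) now rely on the structural fact --- itself a consequence of correctness of the inductive representation in $\tuplestdbe$ (Section~\ref{sec:foldBoth}) --- that the restriction of $\eval{q}$ to the row $(n_1,n_2)$ is a finite union of cropped rectangles whose bounding lines are combinations, with polynomially-sized coefficients, of the interval endpoints occurring in $G$ and $q$; from this one argues that whenever a non-containment witness (resp.\ a strictly larger rectangle inside $\eval{q}$) exists at all, one of polynomial bit-size exists. Carrying out this argument, and in particular handling the open and half-open interval delimiters in the search for witnesses, is the main technical obstacle. Finally, I note that only \PSPACE-\emph{hardness} is claimed for \pbmtdbe: a matching upper bound would additionally require deciding maximality in the presence of the cropping parameters $b,e$ and of cropped- rather than plain-rectangle containment, which appears more delicate and is left open.
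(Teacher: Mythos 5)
Your hardness argument is essentially the paper's: both reduce from \pbma by wrapping $q$ into a query $q'$ whose answer set is exactly the singleton $\{\tup{n_1,n_2,t_0,d_0}\}$ or empty, and then asking whether the corresponding singleton-unfolding tuple of $\tuplestd$ (resp.\ $\tuplestdbe$) is a compact answer. The only difference is cosmetic: you pin the endpoints with node filters $p_{n_i}/(\le k)/\neg(\le k-1)$ inside the query, whereas the paper adds two fresh self-loop edges $e_1,e_2$ to the graph with singleton validity intervals $[t_0,t_0]$ and $[t_0+d_0,t_0+d_0]$ and takes $q'=e_1/q/e_2$. Both work over discrete time (which is all that is needed, since \pbma is stated over discrete time), and your observation that maximality is automatic for a singleton answer set is correct. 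The paper's graph-side variant has the small advantage of keeping the query modification independent of $t_0,d_0$ and of preserving the property that hardness holds with a graph of essentially fixed size, but your version is equally valid as a polynomial reduction.

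The membership half is where there is a genuine gap. Your discrete-time argument (guess a non-containment witness inside $\tau\times\delta$; test the four one-unit enlargements for maximality) is fine, but the paper's temporal domain may be dense, and the dense case is precisely the technical content of the paper's proof --- which you sketch and then explicitly leave unfinished. Two things are missing. First, for containment you must reduce the check $\tau\times\delta\subseteq\{(t,d)\mid\tup{n_1,n_2,t,d}\in\eval{q}\}$ over a continuum to finitely many point queries; the paper does this not by bounding the bit-size of boundary ``coefficients'' (which is delicate under the closure operator $[m,\_]$, where boundaries are sums of unboundedly many input endpoints) but by observing that every relevant boundary is a multiple of $\tfrac{1}{k}$, where $k$ is the product of the denominators of the endpoints in $G$, $q$ and $\u$; it then suffices to test all grid points of granularity $\tfrac{1}{2k}$ inside the rectangle, iterating in polynomial space. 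Second, your maximality test via ``four one-unit enlargements'' has no dense-time analogue; the paper instead first rejects if a boundary of $\tau$ or $\delta$ is not a multiple of $\tfrac{1}{k}$, and otherwise queries the nearest $\tfrac{1}{2k}$-grid points lying just \emph{outside} $\tau$ and $\delta$ (the points $b_{\mathsf{inf}}$, $e_{\mathsf{sup}}$ of Proposition~\ref{prop:membership_td}). Without these two steps the \PSPACE{} upper bound for \pbmtd{} is not established in the dense case.
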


\begin{figure*}[t!]
         \centering
       \begin{tabular}{lccccccccc}
  \toprule
  \multirow{2}{*}{} &
  \multirow{2}{*}{Finite} &
  \multirow{2}{*}{Unique} &
  \multicolumn{2}{c}{Size (star-free $q$)}&
  \multirow{2}{*}{Minimization}&
  Query \\
  \cmidrule(lr){4-5}
  &(dense time)&&graph intervals& query intervals&&answering\\
                                                                                  
                                        

  \midrule
$\tuplest$ & no & yes & $O(1)$& $\Omega(n)$ & $O(n \log n)$& \PSPACE-c & \\
$\tuplesd$& no & yes & $\Omega(n)$& $O(1)$ & $O(n \log n)$&\PSPACE-c & \\
  $\tuplestd$ &{no}&{no}&{$O(1)$}&{$\Omega(n)$}& \NP-h / $O(n^{2.5})$ &\PSPACE-c &\\
$\tuplestdbe$ & yes & no &  $O(1)$& $O(1)$ & \NP-h & \PSPACE-h& \\
\bottomrule
\end{tabular}

  \caption{
           Summary of results.
 }
\label{fig:summary}
\vspace{-1em}
\end{figure*}


\section{Related work}
\label{sec:related-work}

In temporal relational databases, tuples (or attributes) are most commonly
associated with a \emph{single} time interval that represents either validity or transaction time~\cite{DBLP:conf/ebiss/BohlenDGJ17}.
Intervals are commonly used instead of time points as a compact representation.
To maintain a compact and unique representation through operations,
the \emph{coalescing operator}, which merges value-equivalent tuples over consecutive or overlapping time intervals, has received a lot of attention.
B\"ohlen et al.~\cite{bohlen2026coalescing} showed that coalescing can be implemented in SQL, and provided a comprehensive analysis of various coalescing algorithms.
Al-Kateb et al.~\cite{al2005cme} investigated coalescing in the attribute timestamped CME temporal relational model and the work of~\cite{DBLP:journals/pvldb/DignosGNGB19} defines coalescing for temporal multi-set relations. Zhou et al.~\cite{zhou2006efficient} exploited SQL:2003's analytical functions for the computation of coalescing, which to date, is the state-of-the-art technique and can be computed efficiently in  $\mathcal{O}(n\log n)$. In our first two representations $\tuplest$ and $\tuplesd$, which use a single interval, coalescing can be used to achieve a compact representation.
However, this is not applicable for $\tuplestd$ and $\tuplestdbe$.
Bitemporal databases~\cite{DBLP:reference/db/JensenS18c} use two intervals (or rectangles) as a compact representation for two time points,
typically one for validity and one for transaction times.
Our third representation $\tuplestd$ also uses two intervals.
but both are used for (a generalized form of) validity (we are not modeling transaction times).
In bitemporal databases,
coalescing does not provide a unique representation~\cite{DBLP:conf/dagstuhl/Toman97} and,
since it is designed as a non-redundant operation,
may not provide the most compact representation.

Also relevant to our work is the efficient computation of temporal joins over intervals.
There has been a long line of research on temporal joins~\cite{DBLP:journals/vldb/GaoJSS05}, ranging from partition-based~\cite{DBLP:conf/sigmod/DignosBG14,DBLP:journals/vldb/CafagnaB17}, index-based~\cite{DBLP:conf/sigmod/EnderleHS04,DBLP:conf/sigmod/KaufmannMVFKFM13}, and sorting based~\cite{DBLP:conf/icde/PiatovHD16,DBLP:journals/vldb/BourosMTT21} techniques.
Recently, in~\cite{DBLP:journals/vldb/DignosBGJM22} it has been shown that a temporal join with the overlap predicate can be transformed into a sequence of two range joins. 
Our inductive representations of answers require temporal joins and range joins (e.g. for $\tuplest$ and $\tuplesd$ respectively).


Finally, this paper builds upon the original proposal made in~\cite{arenas2022temporal} to extend regular path queries over property graphs with a temporal navigation operator—effectively allowing selection, join, and union of binary temporal relations. Our work focuses on producing compact answers that are finite even in the case of continuous time, a problem that was left open in~\cite{arenas2022temporal}.

\section{Conclusions}
\label{sec:conclusions}




We investigated how to compactly represent answers to queries over binary temporal relations in the TRPQ setting, where both data and queries may include time intervals for validity and temporal navigation, respectively. To our knowledge, this question was previously open.
We defined and analyzed four alternative representations of compact answers to TRPQs, varying in conciseness and potential use.
Notably, the fourth representation ensures that query answers are always finitely representable,
and that their number is independent of the length of the input (graph and query) intervals.
We see this as a useful step towards integrating temporal navigation into database systems.

An open question is whether non-redundancy is tractable under this fourth representation, in particular whether tractability for minimizing compact answers is regained.


\begin{credits}
\subsubsection{\ackname}
This research has been partially supported
by the HEU project CyclOps (GA n.~101135513),
by the Province of Bolzano and FWF through project OnTeGra (DOI 10.55776/PIN8884924).
by the Province of Bolzano and EU through projects
ERDF-FESR~1078 CRIMA, and
ERDF-FESR~1047 AI-Lab,
by MUR through the PRIN project 2022XERWK9 S-PIC4CHU,
and
by the EU and MUR through the PNRR project PE0000013-FAIR.
\end{credits}

\newpage

\bibliographystyle{splncs04}
\bibliography{main-bib}

\newpage

\appendix
\section{Appendix}



The structure adopted in this appendix differs from one followed in the article.
Results here are grouped by topic (inductive representation, complexity, etc.), rather than by format ($\tuplest$, $\tuplesd$, etc.).
This allows us to factorize some proofs, and emphasize what differs from one case to the other.

The most technical results are the correctness of the inductive representation of compact answers in $\tuplestdbe$ (in particular for the join operator), proven in Section~\ref{sec:evalbe_correct},
and to a lesser extent the analogous result for  $\tuplestd$, proven in Section~\ref{sec:evaltd_correct}.

On the other hand, complexity proofs (in Section~\ref{sec:complexity}) leverage results already proven in~\cite{arenas2021temporal}.

Results pertaining to the size of compact answers (in Section~\ref{sec:size}),
follow either from the corresponding inductive representations (for the upper bounds), or from simple examples (for the lower bounds), similar to the ones already provided in the body of the article.

Almost all arguments that pertain to compactness and cost of coalescing answers are already provided in Section~\ref{sec:compact}, so we complete these for a single case, in Section~\ref{sec:minimization}.

For finiteness over dense time, all negative results (i.e.~non-finiteness) are illustrated in the article, and all positive results (i.e.~finiteness) follow from the definitions of the inductive representations (and the fact that these are correct).


\subsection{Notation}
\label{sec:notation}

\para{Sets, relations, order}
We use $\dom{R}$ and $\range{R}$ for the domain and range of a binary relation, respectively.
For a set $S$ and a (possibly partial) order $\preceq$ over $S$,
we denote with $\max_\preceq S$
the set of maximal elements in $S$ w.r.t.\ $\preceq$,
i.e., $\{s \in S \mid s \preceq s' \te{implies} s = s' \te{for all} s' \in S\}$.

\para{Complement of a set of intervals}
Let $\alpha$ be a bounded interval in $\intervals{\tdg}$,
and let $S$ be a finite set of intervals s.t. $\bigcup S \subseteq \alpha$.
We us $\compl{S}{\alpha}$ to denote the complement of $\bigcup S$ in $\alpha$
represented as maximal intervals, i.e.~:
\[\compl{S}{\alpha} = \max_\subseteq \{ I \subseteq \bigcup S \setminus \alpha \mid I \in \intervals{\td} \}\]

\para{Interval of distances for a given time point $t$}
Let $\tup{n_1, n_2, \tau, \delta, b, e} \in \tuplestdbe$, and let $t \in \tau$.\\
In the article,
we defined the interval $\delta_t$ for each $t$
as 
\[\ld{\delta}\ b_{\delta} + \max(0, b - t) \ ,\ e_{\delta_i} - \max(0, t - e)\ \rd{\delta}\]
In this appendix, we will use $\delta(t)$ instead of $\delta_t$.\\
This notation will allow us to write $\delta_1(t)$ when several tuples are involved.\\
Note that the time points $b$ and $e$ in this notaton are still omitted, for conciseness,
because they are clear from the context.


\subsection{Inductive characterizations}
\label{sec:inductive}

Let $q$ be a TRPQ and $G$ a temporal graph.\\
Then $\eval{q}$ is the set of anwers to $q$ over $G$ (represented as tuples in $\tuples$).

\nt In this section, we provide the full definition of the four inductive representations of $\eval{q}$ discussed in the article,
in
$\tuplest$, 
$\tuplesd$, 
$\tuplestd$ and
$\tuplestdbe$ respectively,
and prove that they are correct.\\
These representations are denoted as
$\evalcit{q}$,
$\evalcid{q}$,
$\evalcitd{q}$ and
$\evalcitdbe{q}$ respectively.

\subsubsection{In $\tuplest$}
\label{sec:evalt}
\paragraph{Definition.}
\label{sec:evalt_def}
If $q$ is a TRPQ and $G = \tup{\tdg, \fg,  \valz}$ a temporal graph, then the representation $\evalcit{q}$ of $\eval{q}$ in $\tuplest$ is defined inductively as follows:
\renewcommand{\arraystretch}{1.4}
\[
  \begin{array}{rl}
    \evalcit{\edgelabel} =& \{\tup{n_1, n_2, \tau, 0} \mid \tau \in \val{n_1}{\edgelabel}{n_2} \}\\
     \eval{\edge^-}\ = &\{ \tup{n_2, n_1, \tau, 0} \mid  \tup{n_1, n_2, \tau, 0} \in \eval{\edge} \}\\
      \evalcit{\pred} =& \{\tup{n, n, \tdg, 0} \mid n \models \pred \}\\
    \evalcit{\le k} =&
                     \begin{cases}
                       \{\tup{n, n, \ld{\tdg} b_{\tdg}, k], 0} \mid n \in \nodesg \} \te{if} k \in \tdg, \te{and}\\
                       \emptyset \te{otherwise}
                     \end{cases}\\
    \evalcit{\tndelta} =& \{\tup{n, n, [t_1,t_1], t_2 - t_1} \mid n \in \nodesg, t_1 \in \tdg \te{and} t_2 \in (\delta + t_1) \cap \tdg\}\\
  \evalcit{(?\query)} = & \{\tup{n,n, \tau, 0} \mid  \tup{n, n', \tau, d} \in \evalcit{\query} \te{for some} n' \in \nodesg, d \in \td \}\\
  \evalcit{\neg \node} =& \bigcup\limits_{n \in \nodesg}\Big\{\tup{n,n,\tau, 0} \mid \tau \in \compl{\{\tau' \mid \tup{n,n,\tau', 0} \in \evalcit{\node}\}}{\tdg}\Big\}\\
  \evalcit{\query_1/ \query_2}  = &  \Big\{\tup{n_1, n_3, ((\tau_1 + d_1) \cap \tau_2)- d_1, d_1 + d_2} \mid \exists n_2 \colon \\
 &                \tup{n_1, n_2, \tau_1, d_1} \in \evalcit{\query_1}, \tup{n_2, n_3, \tau_2, d_2} \in  \evalcit{\query_2} \te{and} (\tau_1 + d_1) \cap \tau_2 \neq \emptyset\ \Big\}.\\

  \evalcit{\query_1 + \query_2} = & \evalcit{\query_1} \cup \evalcit{\query_2}\\
  \evalcit{\query[m,n]} = & \bigcup\limits_{k = m}^n\evalcit{\query^k}\\
  \evalcit{\query[m,\_]} = & \bigcup\limits_{k \ge m}\evalcit{\query^k}
\end{array} 
\]
\renewcommand{\arraystretch}{1}

We observe that when $q$ is of the form
($\query_1 + \query_2$), ($\query[m,\_]$) and ($\query[m,n]$),
the defnition of
$\evalcit{q}$ is nearly identical to the one of $\eval{q}$.\\
This also holds for the three representations below.

\paragraph{Correctness.}
\label{sec:evalt_correct}
We start with a (trivial) lemma:
\begin{restatable}{lemma}{lemma:testDistZero}\label{lemma:test_dist_zero}
  Let 
  $G =
  \tup{\tdg, \fg, \valz}$
  be a temporal graph and let $q$ be an expression for the symbol $\node$ or $\edgelabel$ in the grammar of Section~\ref{sec:language}.\\
  Then:
  \begin{itemize}
  \item each tuples in $\eval{q}$ is of the form $\tup{n_1, n_2, t, 0}$ for some $n_1, n_2$ and $t$,
  \item each tuples in $\evalcit{q}$ is of the form $\tup{n_1, n_2, \tau, 0}$ for some $n_1, n_2$ and $\tau$.
  \end{itemize}
\end{restatable}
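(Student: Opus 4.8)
This lemma is exactly the kind of statement that is proven by a straightforward structural induction on the grammar that defines $\node$ and $\edge$. The key observation is that the grammar for these symbols is quite restricted: $\edge ::= \edgelabel \mid \edge^-$ and $\node ::= \pred \mid\ \le k \mid (? \query) \mid \neg\ \node$. So there are only finitely many constructors to check, and in each case I would inspect the corresponding clause in the semantics of Figure~\ref{fig:semantics} (for the first bullet) and in the inductive definition of $\evalcit{\cdot}$ from Section~\ref{sec:evalt_def} (for the second bullet), and read off directly that the distance component is $0$ (resp.\ that the third component is an interval with distance $0$).

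\textbf{First steps.} For the first bullet, I would proceed by induction on the structure of $q$. The base cases are $\edgelabel$, $\pred$, and $\le k$: in each, the defining clause in Figure~\ref{fig:semantics} produces tuples of the form $\tup{n_1,n_2,t,0}$ by inspection (for $\edgelabel$ with $n_1$ and $n_2$ possibly distinct, for $\pred$ and $\le k$ with $n_1 = n_2 = n$). For $\edge^-$, the clause $\eval{\edge^-} = \{\tup{n_2,n_1,t,0} \mid \tup{n_1,n_2,t,0} \in \eval{\edge}\}$ immediately preserves the property given the induction hypothesis on $\edge$ (in fact the clause already hard-codes the $0$). For $?\query$, the clause $\eval{?\query} = \{\tup{n,n,t,0} \mid \dots\}$ again hard-codes distance $0$, so no real use of the hypothesis is even needed. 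For $\neg\ \node$, the clause defines $\eval{\neg\node}$ as a subset of $\{\tup{n,n} \mid n \in \nodesg\} \times \tdg \times \{0\}$, so every tuple has distance $0$ by construction. The second bullet is proven identically, using the clauses for $\evalcit{\edgelabel}$, $\evalcit{\edge^-}$, $\evalcit{\pred}$, $\evalcit{\le k}$, $\evalcit{?\query}$, and $\evalcit{\neg\node}$ from Section~\ref{sec:evalt_def}; in each case one checks that the third component is an interval and the fourth is $0$ (for $\evalcit{\tndelta}$ note this operator is \emph{not} an expression for $\node$ or $\edge$, so it need not be considered).

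\textbf{The main obstacle.} Honestly, there is essentially no obstacle: the statement is called "trivial" in the paper for good reason, because every relevant clause either literally writes $0$ in the fourth position or is a set-difference/image of something that does. The only point requiring the slightest care is making sure the induction is set up on the right fragment of the grammar — i.e.\ recognizing that $?\query$ takes an arbitrary TRPQ $\query$ as its argument but nonetheless yields tuples of the desired shape regardless of what $\eval{\query}$ looks like, so the induction does not need to descend into the full TRPQ grammar. I would state this explicitly to avoid any confusion about mutual recursion between $\query$ and $\node$. Beyond that, the proof is a routine case check and I would present it compactly, perhaps just as a table of "for constructor $X$, clause $Y$ gives the result."
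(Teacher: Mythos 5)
Your proposal is correct and matches the paper's approach: the paper's proof is simply "immediate from the definitions of $\eval{q}$ and $\evalcit{q}$," and your case-by-case inspection of the grammar clauses is exactly that argument spelled out. Your remark about $?\query$ not requiring descent into the full TRPQ grammar is a sensible clarification but does not change the substance.
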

\begin{proof}
Immediate from the definitions of $\eval{q}$ and $\evalcit{q}$.
\end{proof}

\nt The following result states that the representation $\evalcit{q}$ is correct:
\begin{restatable}{proposition}{correctT}\label{prop:correct_t}
  For a temporal graph $G = \tup{\tdg, \fg, \valz}$ and a TRPQ $q$, the unfolding of $\evalcit{q}$ is $\eval{q}$.
\end{restatable}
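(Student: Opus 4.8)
The plan is to prove Proposition~\ref{prop:correct_t} by structural induction on the TRPQ $q$, showing that the unfolding of $\evalcit{q}$ equals $\eval{q}$. For each syntactic form of $q$, I would establish both inclusions: every tuple in $\eval{q}$ is covered by the unfolding of some tuple in $\evalcit{q}$, and conversely the unfolding of every tuple in $\evalcit{q}$ is contained in $\eval{q}$.

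For the base cases ($\edgelabel$, $\edgelabel^-$, $\pred$, $\le k$, $\tndelta$), I would simply compare the definition of $\evalcit{q}$ in Section~\ref{sec:evalt_def} with the semantics of $\eval{q}$ in Figure~\ref{fig:semantics}. These are routine: e.g.\ for $\edgelabel$, a tuple $\tup{n_1,n_2,t,0} \in \eval{\edgelabel}$ iff $t \in \tau$ for some $\tau \in \val{n_1}{\edgelabel}{n_2}$, which is exactly the unfolding of $\tup{n_1,n_2,\tau,0} \in \evalcit{\edgelabel}$; for $\le k$, the single interval $\ld{\tdg} b_{\tdg}, k]$ (when $k \in \tdg$) unfolds precisely to $\{t \in \tdg \mid t \le k\}$. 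The operators $+$, $[m,n]$ and $[m,\_]$ are immediate from the induction hypothesis, since $\evalcit{\cdot}$ and $\eval{\cdot}$ are defined by the same union, and unfolding commutes with union; here I would also invoke the fact (implicit in the definitions) that $\evalcit{q^k}$ unfolds to $\eval{q^k}$, which itself follows from the join case applied $k-1$ times. The cases $?\query$ and $\neg\node$ use Lemma~\ref{lemma:test_dist_zero} to guarantee that all relevant tuples have distance $0$, so that the projection onto $\tup{n,n,\tau,0}$ (resp.\ the interval complement $\compl{\cdot}{\tdg}$) behaves correctly; for $\neg\node$ the key point is that a finite union of intervals has a unique representation as maximal intervals, so $\compl{\cdot}{\tdg}$ unfolds to the set-theoretic complement within $\tdg$.

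The main obstacle is the temporal join case $\query_1/\query_2$. Here I would fix nodes $n_1, n_3$ and argue as follows. For the ``$\subseteq$'' direction: if $\tup{n_1,n_3,t,d} \in \eval{\query_1/\query_2}$, then there is $n_2$ and a split $d = d_1 + d_2$ with $\tup{n_1,n_2,t,d_1} \in \eval{\query_1}$ and $\tup{n_2,n_3,t+d_1,d_2} \in \eval{\query_2}$; by the induction hypothesis these come from tuples $\tup{n_1,n_2,\tau_1,d_1} \in \evalcit{\query_1}$ with $t \in \tau_1$ and $\tup{n_2,n_3,\tau_2,d_2} \in \evalcit{\query_2}$ with $t+d_1 \in \tau_2$. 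The crucial observation, which is exactly what Figure~\ref{fig:tjoin_1} illustrates, is that $t \in \tau_1$ and $t + d_1 \in \tau_2$ together are equivalent to $t + d_1 \in (\tau_1 + d_1) \cap \tau_2$, i.e.\ $t \in ((\tau_1+d_1)\cap\tau_2) - d_1$; in particular this intersection is nonempty, so the join produces the tuple $\tup{n_1,n_3,((\tau_1+d_1)\cap\tau_2)-d_1,\,d_1+d_2}$, whose unfolding contains $\tup{n_1,n_3,t,d}$. For the ``$\supseteq$'' direction I would run the same equivalence backwards: any $t$ in the unfolding of a join-produced tuple satisfies $t \in \tau_1$ and $t+d_1 \in \tau_2$, hence $\tup{n_1,n_2,t,d_1} \in \eval{\query_1}$ and $\tup{n_2,n_3,t+d_1,d_2} \in \eval{\query_2}$ by the induction hypothesis, so $\tup{n_1,n_3,t,d_1+d_2} \in \eval{\query_1/\query_2}$. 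The only subtlety worth spelling out is the arithmetic identity $t \in \tau_1 \wedge t+d_1 \in \tau_2 \iff t \in ((\tau_1+d_1)\cap\tau_2)-d_1$, which follows from the definitions of interval shift and intersection and the fact that these operations preserve intervalhood; everything else is bookkeeping.
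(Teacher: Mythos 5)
Your proposal is correct and follows essentially the same route as the paper: structural induction with both inclusions, dispatching the regular operators by direct comparison of definitions, using Lemma~\ref{lemma:test_dist_zero} for $?\query$ and $\neg\node$, and resting the join case on the same arithmetic equivalence $t \in \tau_1 \wedge t+d_1 \in \tau_2 \iff t \in ((\tau_1+d_1)\cap\tau_2)-d_1$. The only cosmetic difference is that the paper spells out $\tndelta$ as a separate non-trivial case while you fold it into the routine base cases, which is a fair judgement since the verification there is just the analogous interval-shift bookkeeping.
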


\begin{proof}\ \\
  Let $G = \tup{\tdg \fg, \valz}$ be a temporal graph,
  and let $q$ be a TRPQ.\\
  We show below that:
  \begin{enumerate}[(I)]
  \item for any $\tup{n_1,n_2,t, d} \in \eval{q}$,
    there is a $\tau \in \intervals{\td}$ such that\label{item:tcorr_1}
    \begin{enumerate}[(a)]
    \item $\tup{n_1,n_2, \tau, d} \in \evalcit{q}$\label{item:tcorr_1_1}, and
    \item $t \in \tau$,\label{item:tcorr_1_2}
    \end{enumerate}
  \item for any $\tup{n_1,n_2,\tau, d} \in \evalcit{q}$
    for any $t \in \tau$,\\
    \qquad $\tup{n_1,n_2, t, d}$ is in $\eval{q}$.
    \label{item:tcorr_2}
  \end{enumerate}
  
    \nt We proceed by induction on the structure of $q$.\\
    If $q$ is of the form $\edgelabel$, $\edge^-$, $\pred$, $\le k$, $(\query + \query)$, $\query[m,n]$ or $\query[m,\_]$,
    then~\ref{item:tcorr_1} and~\ref{item:tcorr_2} immediately follow from the definitions of
    $\eval{q}$ and 
    $\evalcit{q}$.\\
    So we focus below on the four remaining cases:
  \begin{itemize}
    

  
  \item $q = \tndelta$.\ \\
    From the above definitions, we have:
    \begin{align*}
      \eval{q}\ = &\{\tup{n, n, t, d} \mid n \in \nodesg, t \in \tdg, d \in \delta \te{and} t + d \in \tdg\}\\
  \evalcit{q}\ =&\{\tup{n, n, [t_1,t_1], t_2 - t_1} \mid n \in \nodesg, t_1 \in \tdg \te{and} t_2 \in (\delta + t_1) \cap \tdg\}
    \end{align*}
    \begin{itemize}
    \item For~\ref{item:tcorr_1}, let $\vt = \tup{n,n, t, t+d} \in \eval{q}$.\\
      And let $\u = \tup{n,n, [t,t], d}$ in $\tuplest$.\\
      For \ref{item:tcorr_1_1}
      we show that $\u \in \evalcit{q}$.\\
      From $\vt \in \eval{q}$,
      we get $n \in \nodesg$ and $t \in \tdg$.\\
      Besides,
      because $\vt \in \eval{q}$ still,
      \begin{equation}
        \label{eq:t0_0}
        t+d \in \tdg
      \end{equation}
      and
      \begin{align}
        d &\in \delta\\
        t + d &\in t + \delta \label{eq:t0_1}
      \end{align}
      So from~\eqref{eq:t0_0} and~\eqref{eq:t0_1} 
      \begin{equation}
      t+d \in (\delta + t) \cap \tdg
      \end{equation}
      So there is a $t_2$ (namely $t + d$) such that $d = t_2 - t$ and $t_2 \in t + \delta \cap \tdg$.\\
      Together with the definition of $\evalcit{q}$,
      this implies $\u \in \evalcit{q}$,
      which concludes the proof for \ref{item:tcorr_1_1}.\\
    And trivially, $t \in [t,t]$,
    so~\ref{item:tcorr_1_2} is verified as well.
    
  \item For~\ref{item:tcorr_2}, let
    $\u = \tup{n,n,[t, t], d} \in \evalcit{q}$.\\
    From $\u \in \evalcit{q}{G}$,
    we get $n \in \nodesg$ and $t \in \tdg$.\\
    So to conclude the proof,
    it is sufficient to show that \ei $d \in \delta$ and \eii $t + d \in \tdg$.\\
    Because $\u \in \evalcit{q}{G}$ still, we have
    \begin{equation} \label{eq:tt_1_0}
      d = t_2 - t \te{for some} t_2 \in (\delta + t) \cap \tdg
    \end{equation} 
    From~\eqref{eq:tt_1_0}, we get
    $t_2 = t + d$.\\
    Therefore from~\eqref{eq:tt_1_0} still,
    \begin{equation} \label{eq:tt_1_1}
    t + d \in (\delta + t) \cap \tdg
    \end{equation}
    which proves \eii.\\
    And from~\eqref{eq:tt_1_1}, we also get
    \begin{align*}
    t + d&\in \delta + t\\
    t + d - t &\in (\delta + t) - t\\
    d &\in \delta
    \end{align*}
    which proves \ei.\\
  \end{itemize}


  \item $q = (?\query)$.\ \\
    From the above definitions, we have:
    \begin{align*}
      \eval{q} = &\ \{\tup{n,n,t,0} \mid \tup{n,n',t,t+d} \in \eval{\query} \te{for some} n' \in \nodesg, d \in \td\}\\
  \evalcit{q} = &\ \{\tup{n,n, \tau, 0} \mid  \tup{n, n', \tau, d} \in \evalcit{\query} \te{for some} n' \in \nodesg, d \in \td \}\\
    \end{align*}

    \begin{itemize}
    \item For~\ref{item:tcorr_1}, let $\tup{n,n,t,0} \in \eval{q}$.\\
      From the definition of $\eval{q}$,
      there are $n'$ and $d$ such that $\tup{n,n',t, t+d } \in \eval{\query}$.\\
      So by IH, there is a $\tau$ s.t.~$t \in \tau$ and $\tup{n, n', \tau, d} \in \evalcit{\query}$.\\
      Therefore $\tup{n,n,\tau,0} \in \evalcit{q}$,
      from the definition of $\evalcit{q}$.
     \item For~\ref{item:tcorr_2}, let $\tup{n, n, \tau, 0} \in \evalcit{q}$.\\
       From the definition of $\evalcit{q}$, there are $n'$ and $d$ s.t.~$\tup{n, n', \tau, d} \in \evalcit{\query}$.\\
       Now take any $t\in \tau$.\\
       By IH, $\tup{n,n',t,t+d} \in \eval{\query}$.\\
       Therefore $\tup{n,n,t,0} \in \eval{q}$,
      from the definition of $\eval{q}$.\\
    \end{itemize} 

\item $q = \neg \node$.\ \\
    From the above definitions, we have:
    \begin{align*}
  \eval{q}\ = & (\{\tup{n,n} \mid n \in \nodesg\} \times \tdg \times \{0\}) \setminus \eval{\node}\\
  \evalcit{q} =& \bigcup\limits_{n \in \nodesg}\Big\{\tup{n,n,\tau, 0} \mid \tau \in \compl{\{\tau' \mid \tup{n,n,\tau', 0} \in \evalcit{\node}\}}{\tdg}\Big\}\\
    \end{align*}
    \begin{itemize}
    \item For~\ref{item:tcorr_1}, let $\vt = \tup{n, n,t,0} \in \eval{q}$.\\
      From the definition of $\eval{q}$,
      $\vt \not\in \eval{\node}$.\\
      So
      \begin{equation}\label{eq:t4_1}
      t \notin \{t' \mid \tup{n,n,t',0} \in \eval{\node}\}
      \end{equation}
      Now by IH, together with Lemma~\ref{lemma:test_dist_zero}, we get:
      \begin{equation}\label{eq:t4_2}
        \tup{n,n,t',0} \in \eval{\node} \te{iff} t' \in \tau' \te{for some} \tau' \te{s.t.} \tup{n,n, \tau',0} \in \evalcit{\node} 
      \end{equation}
      So from~\eqref{eq:t4_1} and~\eqref{eq:t4_2}:
      \begin{equation*}
        t \notin \bigcup \{\tau' \mid \tup{n,n, \tau',0} \in \evalcit{\node}\}
      \end{equation*}
      Therefore
      \begin{equation}\label{eq:t4_2_1}
        t \in \tdg \setminus \bigcup \{\tau' \mid \tup{n,n, \tau',0} \in \evalcit{\node}\}
      \end{equation}
      So $t \in \tau$ for some $\tau \in \compl{\bigcup \{\tau' \mid \tup{n,n, \tau',0} \in \evalcit{\node}\}}{\tdg }$.\\
      And
      $\tup{n,n, \tau, 0} \in \evalcit{q}$,
      from the definition of $\evalcit{q}$.
    
  \item For~\ref{item:tcorr_2}, let $\tup{n,n, \tau,0} \in \evalcit{q}$.\\
      And take any $t \in \tau$.\\
      From the definition of $\evalcit{q}$:
      \begin{equation*}
      t \in \tdg \setminus \bigcup \{\tau' \mid \tup{n,n, \tau',0} \in \evalcit{\node} \}
      \end{equation*}
      Together with~\eqref{eq:t4_2}, this implies
      \begin{equation*}
        \tup{n,n,t,0} \not \in \eval{\node}
      \end{equation*}
      Therefore
      $\tup{n,n, t, 0} \in \eval{q}$,
      from the definition of $\eval{q}$.\\
  \end{itemize}

\item $q = \query_1 / \query_2$.\ \\
    From the above definitions, we have:
\[
\begin{array}{rl}
  \eval{q}\ = &\{\tup{n_1, n_3 , t, d_1 + d_2 } \mid \exists n_2 \colon \tup{n_1, n_2 , t, d_1} \in  \eval{\query_1} \te{and} \tup{n_2, n_3, t + d_1, d_2} \in  \eval{\query_2}\}\\
  \evalcit{q} = & \Big\{\tup{n_1, n_3, ((\tau_1 + d_1) \cap \tau_2)- d_1, d_1 + d_2} \mid \\
  \multicolumn{2}{l}{\qquad \qquad \qquad
    \exists n_2 \colon \tup{n_1, n_2, \tau_1, d_1} \in \evalcit{\query_1}, \tup{n_2, n_3, \tau_2, d_2} \in  \evalcit{\query_2} \te{and} 
  (\tau_1 + d_1) \cap \tau_2 \neq \emptyset\ \Big\}}\\
\end{array}
\]
\begin{itemize}
    \item For~\ref{item:tcorr_1}, let $\vt = \tup{n_1,n_3, t, d} \in \eval{q}$.\\
      Fom the definition of $\eval{q}$, there are  $n_2, d_1$ and $d_2$ such that
      $\tup{n_1, n_2 , t, d_1} \in  \eval{\query_1}$,
      $\tup{n_2, n_3 , t + d_1, d_2} \in  \eval{\query_2}$
      and $d = d_1 + d_2$.\\
      By IH, because $\tup{n_1, n_2, t , d_1} \in  \eval{\query_1}$,
      there is a $\tau_1$ such that $t \in \tau_1$ and
      \begin{equation}\label{eq:t2_12}
      \tup{n_1, n_2, \tau_1, d_1} \in \evalcit{\query_1}\\
    \end{equation}
     And similarly,
      because $\tup{n_2, n_3 , t + d_1, d_2} \in  \eval{\query_2}$,
      there is a $\tau_2$ such that $t + d_1 \in \tau_2$ and
      \begin{equation}\label{eq:t2_2}
        \tup{n_2, n_3, \tau_2, d_2} \in \evalcit{\query_2}
    \end{equation} 
    From $t \in \tau_1$, we get
      \begin{equation}\label{eq:t2_21}
        t + d_1 \in \tau_1 + d_1
      \end{equation}
      Together with the fact that $t + d_1 \in \tau_2$, this implies
      \begin{equation}\label{eq:t2_3}
       \tau_1 + d_1 \cap \tau_2 \neq \emptyset
      \end{equation}
      So from~\eqref{eq:t2_12},~\eqref{eq:t2_2},~\eqref{eq:t2_3} and the definition of $\evalcit{q}$,
      \begin{equation*}
      \tup{n_1, n_2, ((\tau_1 + d_1) \cap \tau_2) -d_1 , d_1 + d_2} \in \evalcit{q}
    \end{equation*}
    which proves~\ref{item:tcorr_1_1}.\\
    And in order to prove~\ref{item:tcorr_1_2}, we only need to show that
    \[t \in ((\tau_1 + d_1) \cap \tau_2) -d_1\]
    We know that $t \in \tau_1$, therefore
    \begin{equation*}
      t + d_1 \in \tau_1 + d_1
    \end{equation*}
    Together with the fact that $t + d_1 \in \tau_2$, this yields
    \begin{align*}
      t + d_1 &\in (\tau_1 + d_1) \cap \tau_2\\
      t &\in ((\tau_1  + d_1) \cap \tau_2) -d_1
    \end{align*}
  \item For~\ref{item:tcorr_2}, let $\u = \tup{n_1,n_3,\tau,d} \in \evalcit{q}$, and let $t \in \tau$.\\
    We show that $\tup{n_1, n_3, t, t + d} \in \eval{q}$.\\
    Because $\u \in \evalcit{q}$, from the definition of $\evalcit{q}$, there are $\tau_1, \tau_2, d_1, d_2$ and $n_2$ s.t.:
    \begin{enumerate}[(i)]
    \item $d = d_1 + d_2$\label{item:t3_0_0}
    \item $\tau = ((\tau_1 + d_1) \cap \tau_2) - d_1$\label{item:t3_0_1}
    \item $\tup{n_1, n_2, \tau_1, d_1} \in  \evalcit{\query_1}$\label{item:t3_1}
    \item $\tup{n_2, n_3, \tau_2, d_2} \in  \evalcit{\query_2}$\label{item:t3_2}
    \end{enumerate} 
    Since $t \in \tau$, from~\ref{item:t3_0_1}, we have
    \begin{align}
      t &\in ((\tau_1 + d_1 \cap \tau_2) - d_1\\
      t + d_1 &\in (((\tau_1 + d_1 \cap \tau_2) - d_1) + d_1\\
      t + d_1 &\in (\tau_1 + d_1) \cap \tau_2 \label{eq:t3_01}\\
      t + d_1 &\in \tau_1 + d_1 \\
      t  &\in \tau_1\label{eq:t3_1}
    \end{align}
    From~\ref{item:t3_1}, by IH, for any $t' \in \tau_1$
    \begin{equation*}
    \tup{n_1, n_2, t' + d_1} \in \eval{q}
  \end{equation*}
  In particular, from~\eqref{eq:t3_1}
    \begin{equation}\label{eq:t3_2}
    \tup{n_1, n_2, t, t + d_1} \in \eval{q}
  \end{equation}
    And from~\ref{item:t3_2}, by IH, for any $t'' \in \tau_2$
    \begin{equation*}
    \tup{n_2, n_3, t'', t'' + d_2} \in \eval{q}
  \end{equation*}
  In particular, from~\eqref{eq:t3_01}
    \begin{equation}\label{eq:t3_3}
    \tup{n_2, n_3, t + d_1, (t + d_1) + d_2} \in \eval{q}
  \end{equation}
  So from~\eqref{eq:t3_2},~\eqref{eq:t3_3} and the definition of $\eval{q}$
  \begin{equation*}
    \tup{n_1, n_3, t, t + d_1 + d_2} \in \eval{q}
  \end{equation*}
\end{itemize}
    \end{itemize} 
\end{proof}

\subsubsection{In $\tuplesd$}
\label{sec:evald}
\paragraph{Definition.}
\label{sec:evald_def}
We start with the case where $q$
is an expression for the symbol $\node$ or $\edge$ in the grammar of Section~\ref{sec:language}.
\renewcommand{\arraystretch}{1.4}
\[
  \begin{array}{rl}
    \evalcid{\edgelabel} =& \{\tup{n_1, n_2, t, [0,0]} \mid t \in \tau \te{for some} \tau \in \val{n_1}{\edgelabel}{n_2} \}\\
    \evalcid{\edge^-} =& \{\tup{n_2, n_1, t, [0,0]} \mid \tup{n_1, n_2, t, [0,0]} \in \evalcid{\edge} \}\\
    \evalcid{\pred} =& \{\tup{n, n, t, [0,0]} \mid n \models \pred \te{and} t \in \tdg \}\\
    \evalcid{\le k} =& \{\tup{n, n, t , [0,0]} \mid n \in \nodesg, t \in \tdg \te{and} t \le k \}\\
%
%
    \evalcid{(?\query)} = & \{\tup{n,n, t, [0,0]} \mid \delta \colon \tup{n, n', t, \delta} \in \evalcid{\query} \te{for some} n' \in \N \te{and} \delta \in \intervals{\td}\}\\
    \evalcid{\neg \node} = &\Big\{\tup{n,n,t, [0,0]} \mid n \in \nodesg \te{and} t \in \tdg \setminus \{t' \mid \tup{n,n,t', [0,0]} \in \evalcid{\node}\}\Big\}
\end{array} 
\]
\renewcommand{\arraystretch}{1}
Next, we consider the operators
($\query_1 + \query_2$), ($\query[m,\_]$) and ($\query[m,n]$).\\
For these cases,
$\evalcitd{q}$
is once again defined analogously to $\eval{q}$,
in terms of temporal join (a.k.a. $\query_1/\query_2$) and set union.\\
We only write the definitions here for the sake of completeness:
\[
  \begin{array}{rl}
  \evalcitd{\query_1 + \query_2} = & \evalcitd{\query_1} \cup \evalcitd{\query_2}\\
  \eval{\query[m,n]} = & \bigcup\limits_{k = m}^n\evalcitd{\query^k}\\
  \eval{\query[m,\_]} = &  \bigcup\limits_{k \ge m}\evalcitd{\query^k}\\
\end{array}
\]
The only remaining operators are temporal join ($\query_1/\query_2$) and temporal navigation ($\tndelta$), already defined in the article.\\
We reproduce here these two definition for convenience:
\[
  \begin{array}{rl}
    \evalcid{\query_1/\query_2} = & \Big\{\tup{n_1, n_3, t_1, t_2 - t_1 + \delta_2} \mid \exists n_2 \colon \tup{n_1, n_2, t_1 , \delta_1} \in \evalcid{\query_1},\\ & \qquad \tup{n_2, n_3, t_2, \delta_2} \in  \evalcid{\query_2} \te{and} t_2 - t_1 \in \delta_1 \Big\}\\
\evalcid{\tndelta} = &\Big\{\tup{n,n,t, ((\delta + t) \cap \tdg) - t } \mid n \in \nodesg, t \in \tdg \te{and} (\delta + t) \cap \tdg \neq \emptyset\Big\}\\
  \end{array}
\]
We also reproduce the alternative characterization of $\evalcid{\tndelta}$ provided in the article, as a unary operator: 
\[\evalcid{q/\tndelta} =  \Big\{\tup{n_1,n_2,t, (\delta' \oplus \delta) \cap \tdg} \mid \tup{n_1,n_2,t, \delta'} \in \evalcid{q} \te{and} (t + (\delta' \oplus \delta)) \cap \tdg \neq \emptyset\Big\}\]

\paragraph{Correctness.}
The following result states that the representation $\evalcid{q}$ is correct:
\begin{restatable}{proposition}{correctD}\label{prop:correct_d}
  Let $G = \tup{\tdg, \fg, \valz}$ be a temporal graph and $q$ a TRPQ. Then the unfolding of $\evalcid{q}$ is $\eval{q}$.
\end{restatable}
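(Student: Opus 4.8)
The plan is to follow the proof of Proposition~\ref{prop:correct_t} almost verbatim, with the roles of the time and distance components swapped. Concretely, I would prove by induction on the structure of $q$ that (I) if $\tup{n_1,n_2,t,d} \in \eval{q}$, then there is a $\delta \in \intervals{\td}$ with $\tup{n_1,n_2,t,\delta} \in \evalcid{q}$ and $d \in \delta$; and (II) if $\tup{n_1,n_2,t,\delta} \in \evalcid{q}$ and $d \in \delta$, then $\tup{n_1,n_2,t,d} \in \eval{q}$. Summing over all tuples, (I) and (II) together say precisely that the unfolding of $\evalcid{q}$ equals $\eval{q}$. The base cases ($\edgelabel$, $\edge^-$, $\pred$, $\le k$) and the constructors $\query_1+\query_2$, $\query[m,n]$, $\query[m,\_]$ are immediate from the definitions of $\eval{\cdot}$ and $\evalcid{\cdot}$, since the set operations on both sides coincide and the base cases produce only the distance interval $[0,0]$. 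I would also record the obvious $\tuplesd$-analogue of Lemma~\ref{lemma:test_dist_zero}: every tuple in $\evalcid{q}$ with $q$ an expression for $\node$ or $\edge$ has the form $\tup{n_1,n_2,t,[0,0]}$. With it, the cases $q = (?\query)$ and $q = \neg\node$ go through exactly as in the $\tuplest$ proof: the output distance is always $0$, and the induction hypothesis on $\query$ (resp.\ $\node$) is used only to transfer existence or non-existence of a match, choosing an arbitrary $d$ in a nonempty witness interval where one is needed.

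For the temporal navigation operator $q = \tndelta$, the claim reduces to the observation that, for a fixed node $n$ and time $t \in \tdg$, the set of distances $d$ satisfying $d \in \delta$ and $t+d \in \tdg$ is precisely $((\delta+t)\cap\tdg)-t$, which is a nonempty interval whenever $(\delta+t)\cap\tdg \neq \emptyset$, being a translation of the intersection of two intervals. Direction (I) then follows since $\tup{n,n,t,d}\in\eval{\tndelta}$ gives $t+d \in (\delta+t)\cap\tdg$, hence $(\delta+t)\cap\tdg \neq \emptyset$ and $d \in ((\delta+t)\cap\tdg)-t$; direction (II) follows since $d \in ((\delta+t)\cap\tdg)-t$ forces $t+d \in (\delta+t)\cap\tdg$, i.e.\ $d \in \delta$ and $t+d \in \tdg$.

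The main obstacle — and the only case needing real work — is the temporal join $q = \query_1/\query_2$, where one must show that $\evalcid{\query_1/\query_2}$, obtained by pairing $\tup{n_1,n_2,t_1,\delta_1}\in\evalcid{\query_1}$ with $\tup{n_2,n_3,t_2,\delta_2}\in\evalcid{\query_2}$ subject to $t_2-t_1\in\delta_1$ and emitting $\tup{n_1,n_3,t_1,t_2-t_1+\delta_2}$, is correct with respect to $\eval{\query_1/\query_2}$, whose tuples pair $\tup{n_1,n_2,t,d_1}\in\eval{\query_1}$ with $\tup{n_2,n_3,t+d_1,d_2}\in\eval{\query_2}$ and emit $\tup{n_1,n_3,t,d_1+d_2}$. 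For (I), given a concrete join answer with such witnesses, apply the induction hypothesis (I) to each operand to get intervals $\delta_1 \ni d_1$ and $\delta_2 \ni d_2$; setting $t_1 := t$ and $t_2 := t+d_1$ makes $t_2-t_1 = d_1 \in \delta_1$, so $\tup{n_1,n_3,t,d_1+\delta_2}\in\evalcid{\query_1/\query_2}$ and $d_1+d_2 \in d_1+\delta_2$. For (II), given $\tup{n_1,n_3,t_1,t_2-t_1+\delta_2}\in\evalcid{\query_1/\query_2}$ and $d$ in its distance interval, write $d = (t_2-t_1)+d_2$ with $d_2 \in \delta_2$, put $d_1 := t_2-t_1 \in \delta_1$, apply the induction hypothesis (II) to each operand to recover $\tup{n_1,n_2,t_1,d_1}\in\eval{\query_1}$ and $\tup{n_2,n_3,t_2,d_2}\in\eval{\query_2}$, and observe $t_2 = t_1+d_1$, so $\tup{n_1,n_3,t_1,d_1+d_2}\in\eval{\query_1/\query_2}$ with $d_1+d_2 = d$. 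Throughout, the existential over the intermediate node $n_2$ is carried along unchanged; the one point to check is that $t_2-t_1+\delta_2$ is a translation of an interval and hence an element of $\intervals{\td}$.

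Finally, the alternative unary form of $\evalcid{q/\tndelta}$ recalled in this section can be derived afterwards as a corollary, by substituting $\evalcid{\tndelta}$ into the binary-join definition and simplifying with the distributivity-type identities for $\oplus$ and $\ominus$ over $\cap$; this is a short, purely arithmetic computation that does not interact with the induction. The only nontrivial step in the whole argument is the join case; every other case reduces to translations and intersections of intervals.
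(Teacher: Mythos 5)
Your proposal is correct and follows essentially the same route as the paper: structural induction on $q$ with the two directional claims (soundness and completeness of the unfolding), immediate base and union/repetition cases, the interval-translation argument for $\tndelta$, and the same witness bookkeeping for the join (in direction (II) your choice $d_1 := t_2 - t_1$ is in fact a slightly tighter rendering of the decomposition the paper performs via $\delta \subseteq \delta_1 \oplus \delta_2$). The only cosmetic difference is that the paper treats $\neg\node$ as immediate from the pointwise definition of $\evalcid{\neg\node}$ rather than routing it through the zero-distance lemma, which changes nothing of substance.
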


\begin{proof}\ \\
  Let $G = \tup{\tdg \fg, \valz}$ be a temporal graph,
  and let $q$ be a TRPQ.\\
  We show below that:
  \begin{enumerate}[(I)]
  \item for any $\tup{n_1,n_2,t, d} \in \eval{q}$,
    there is a $\delta \in \intervals{\td}$ such that\label{item:dcorr_1}
    \begin{enumerate}[(a)]
    \item $\tup{n_1,n_2, t, \delta} \in \evalcid{q}$\label{item:dcorr_1_1}, and
    \item $d \in \delta$,\label{item:dcorr_1_2}
    \end{enumerate}
  \item for any $\tup{n_1,n_2,t, \delta} \in \evalcid{q}$
    for any $d \in \delta$,\\
    \qquad $\tup{n_1,n_2, t, d}$ is in $\eval{q}$.
    \label{item:dcorr_2}
  \end{enumerate}
  
    \nt We proceed once again by induction on the structure of $q$.\\
    If $q$ is of the form $\edgelabel$, $\edge^-$, $\pred$, $\le k$, $\neg \node$, $(\query + \query)$, $\query[m,n]$ or $\query[m,\_]$,
    then~\ref{item:dcorr_1} and~\ref{item:dcorr_2} immediately follow from the definitions of
    $\eval{q}$ and 
    $\evalcid{q}$.\\
    If $q$ is of the form $(?\query)$, then the proof is nearly identical to the one already provided for $\evalcit{(?\query)}$.\\
    So we focus below on the two remaining cases:
  \begin{itemize}
  \item $q = \tndelta$.\ \\
    From the definitions above, we have:
    \begin{align*}
      \eval{q}\ = &\{\tup{n, n, t, d} \mid n \in \nodesg, t \in \tdg, d \in \delta \te{and} t + d \in \tdg\}\\
      \evalcid{q} = &\{\tup{n,n,t, ((\delta + t) \cap \tdg) -t } \mid n \in \nodesg, t \in \tdg \te{and} (\delta + t) \cap \tdg \neq \emptyset\}
    \end{align*}
    \begin{itemize}
    \item For~\ref{item:dcorr_1}, let $\vt = \tup{n,n, t, d} \in \eval{q}$.\\
      And let $\u = \tup{n,n, t, ((\delta + t) \cap \tdg) -t}$ in $\tuplesd$.\\
      For \ref{item:dcorr_1_1}
      we show that $\u \in \evalcid{q}$.\\
      From $\vt \in \eval{q}$,
      we get $n \in \nodesg$ and $t \in \tdg$.\\
      Besides,
      because $\vt \in \eval{q}$ still,
      \begin{equation}
        \label{eq:d0_0}
        t+d \in \tdg
      \end{equation}
      and
      \begin{align}
        d &\in \delta\\
        t + d &\in t + \delta \label{eq:d0_1}
      \end{align}
      So from~\eqref{eq:d0_0} and~\eqref{eq:d0_1} 
      \begin{align}
      t+d &\in  (\delta + t) \cap \tdg~\label{eq:d0_2}\\
      (\delta + t) \cap \tdg &\neq  \emptyset
      \end{align}
      Together with the definition of $\evalcid{q}$,
      this implies $\u \in \evalcid{q}$,
      which concludes the proof for \ref{item:dcorr_1_1}.\\
      Finally, from~\eqref{eq:d0_2}, we get
      \begin{align}
        t+d -t &\in  ((\delta + t) \cap \tdg) - t\\
        d&\in  ((\delta + t) \cap \tdg) - t
      \end{align}
      which proves~\ref{item:dcorr_1_2}.
    
  \item For~\ref{item:dcorr_2}, let
    $\u = \tup{n,n,t, ((\delta + t) \cap \tdg) -t} \in \evalcid{q}$, and let $d \in  ((\delta + t) \cap \tdg) -t$.\\
    From $\u \in \evalcid{q}{G}$,
    we get $n \in \nodesg$ and $t \in \tdg$.\\
    So to conclude the proof,
    it is sufficient to show that \ei $d \in \delta$ and \eii $t + d \in \tdg$.\\
    By assumption, we have
    \begin{align} 
      d &\in ((\delta + t) \cap \tdg) - t\\
      d + t &\in (\delta + t) \cap \tdg\label{eq:dt_1_0}\\
      d + t &\in \tdg
    \end{align}
    which proves \eii.\\
    And from~\eqref{eq:dt_1_0}, we also get
    \begin{align*}
    d + t&\in \delta + t\\
    d + t - t &\in (\delta + t) - t\\
    d &\in \delta
    \end{align*}
    which proves \ei.\\
  \end{itemize}

\item $q = \query_1 / \query_2$.\ \\
    From the definitions above, we have:
\[
  \begin{array}{rl}
    \eval{q}\ = &\{\tup{n_1, n_3 , t, d_1 + d_2 } \mid \exists n_2 \colon \tup{n_1, n_2 , t, d_1} \in  \eval{\query_1} \te{and} \tup{n_2, n_3, t + d_1, d_2} \in  \eval{\query_2}\}\\
\evalcid{q} = &
\Big\{\tup{n_1, n_3, t_1, \delta_2 + t_2 - t_1} \mid \exists n_2 \colon\\
              & \tup{n_1, n_2, t_1 , \delta_1} \in \evalcid{\query_1}, \tup{n_2, n_3, t_2, \delta_2} \in  \evalcid{\query_2} \te{and}
  t_2 - t_1 \in \delta_1 \Big\}
\end{array}
\]
\begin{itemize}
    \item For~\ref{item:dcorr_1}, let $\vt = \tup{n_1,n_3, t, d} \in \eval{q}$.\\
      Fom the definition of $\eval{q}$, there are  $n_2, d_1$ and $d_2$ such that
      $\tup{n_1, n_2 , t, d_1} \in  \eval{\query_1}$,
      $\tup{n_2, n_3 , t + d_1, d_2} \in  \eval{\query_2}$
      and $d = d_1 + d_2$.\\
      By IH, because $\tup{n_1, n_2, t , d_1} \in  \eval{\query_1}$,
      there is a $\delta_1$ such that $d_1 \in \delta_1$ and
      \begin{equation}\label{eq:d2_12}
      \tup{n_1, n_2, t, \delta_1} \in \evalcid{\query_1}
    \end{equation}
    
    And similarly,
      because $\tup{n_2, n_3 , t + d_1, d_2} \in  \eval{\query_2}$,
      there is a $\delta_2$ such that $d_2 \in \delta_2$ and
      \begin{equation}\label{eq:d2_2}
        \tup{n_2, n_3, t + d_1, \delta_2} \in \evalcid{\query_2}
      \end{equation}
      Next, since $d \in \delta_1$
      \begin{equation}\label{eq:d2_2_1}
        t + d_1 \in t + \delta_1
      \end{equation}
      So from~\eqref{eq:d2_12},~\eqref{eq:d2_2},~\eqref{eq:d2_2_1} and the definition of $\evalcid{q}$ (replacing $t_1$ with $t$ and $t_2$ with $t + d_1$),
      we get
      \begin{equation*}\label{eq:d2_22}
      \tup{n_1, n_2, t, \delta_2 + (t + d_1) - t} \in \evalcid{q}
    \end{equation*}
    which proves~\ref{item:dcorr_1_1}.\\
    And in order to prove~\ref{item:dcorr_1_2}, we only need to show that $d \in \delta_2 + (t + d_1) - t$,
    or in other words that
    \[d \in \delta_2 + d_1\]
    We know that
    \begin{align}
      d_2  &\in \delta_2\\
      d_2 + d_1  &\in \delta_2 + d_1\label{eq:d2_23}
    \end{align}
    Together with the fact that $d = d_1 + d_2$,
    this concludes the proof for ~\ref{item:dcorr_1_2}.

  \item For~\ref{item:dcorr_2}, let $\u = \tup{n_1,n_3,t_1,\delta} \in \evalcid{q}$, and let $d \in \delta$.\\
    
    Because $\u \in \evalcid{q}$, from the definition of $\evalcid{q}$, there are $\delta_1, \delta_2, t_2$ and $n_2$ s.t.:
    \begin{enumerate}[(i)]
    \item $\delta = \delta_2 + t_2 - t_1$\label{item:d3_0_0}
    \item $t_2 \in  t_1 + \delta_1$\label{item:d3_0_1}
    \item $\tup{n_1, n_2, t_1, \delta_1} \in  \evalcid{\query_1}$\label{item:d3_1}
    \item $\tup{n_2, n_3, t_2, \delta_2} \in  \evalcid{\query_2}$\label{item:d3_2}
    \end{enumerate} 
    From~\ref{item:d3_0_0} and~\ref{item:d3_0_1}, we get
    \begin{align*}
      \delta &= \delta_2 \oplus (t_1 + \delta_1) - t_1\\
             &= \delta_2 \oplus \delta_1
    \end{align*}
    Together with $d \in \delta$, this implies that there are 
$d_1 \in \delta_1$ and 
$d_2 \in \delta_2$ such that $d = d_1 + d_2$. \\
Next, because $d_1 \in \delta_1$, from~\ref{item:d3_1}, by IH
\begin{equation}\label{eq:d3_2}
    \tup{n_1, n_2, t_1, t_1 + d_1} \in \eval{q}
  \end{equation}
And similarly, because $d_2 \in \delta_2$, from~\ref{item:d3_2}
\begin{equation}\label{eq:d3_3}
    \tup{n_2, n_3, t_2, t_2 + d_2} \in \eval{q}
  \end{equation}
  So from~\eqref{eq:d3_2},~\eqref{eq:d3_3} and the definition of $\eval{q}$ 
\begin{equation}
    \tup{n_1, n_3, t_1, d_1 + d_2} \in \eval{q}
  \end{equation}
Together with the fact that $d = d_1 + d_2$, this concludes the proof for~\ref{item:dcorr_2}.
\end{itemize}

    \end{itemize} 
\end{proof}

\subsubsection{In $\tuplestd$}
\label{sec:evaltd}
\paragraph{Definition.}
We start with the case where $q$
is an expression for the symbol $\edge$ or $\node$ in the grammar of Section~\ref{sec:language}.\\
As a consequence of Lemma~\ref{lemma:test_dist_zero},
$\evalcitd{q}$ can be trivially defined out of $\evalcit{q}$ by replacing the distance $0$ with the interval $[0,0]$, i.e.
\begin{align*}
  \evalcitd{\edge} = & \{\tup{n_1, n_2, \tau, [0,0]} \mid  \{\tup{n_1, n_1, \tau, 0} \in \evalcit{\edge}\}\\
  \evalcitd{\node} = & \{\tup{n, n, \tau, [0,0]} \mid  \{\tup{n, n, \tau, 0} \in \evalcit{\node}\}
\end{align*}
Next, if $q$ is of the form
($\query_1 + \query_2$), ($\query[m,\_]$) or ($\query[m,n]$),
then the definition of $\evalcitd{q}$ is once again nearly identical to the one of $\eval{q}$:
\[
  \begin{array}{rl}
  \evalcitd{\query_1 + \query_2} = & \evalcitd{\query_1} \cup \evalcitd{\query_2}\\
  \eval{\query[m,n]} = & \bigcup\limits_{k = m}^n\evalcitd{\query^k}\\
  \eval{\query[m,\_]} = &  \bigcup\limits_{k \ge m}\evalcitd{\query^k}\\
\end{array}
\]
The only remaining operators are temporal join ($\query_1/\query_2$) and temporal navigation ($\tndelta$):
\[
  \begin{array}{rl}
\evalcitd{\query_1/\query_2} =& \bigcup \{\u_1 \tjoin \u_2 \mid \u_1 \in \evalcitd{\query_1}, \u_2 \in \evalcitd{\query_2}\}\\
\evalcitd{\tndelta} =& \bigcup\limits_{n \in \nodesg} \{\tup{n,n,\tdg, \delta} \tjoin \tup{n,n,\tdg, [0,0]}\}
  \end{array}
  \]
where $\u_1 \tjoin \u_2$ is defined as follows.\\
Let $\u_1 = \tup{n_1, n_2, \tau_1, \delta_1}$ and $\u_2 = \tup{n_3, n_4, \tau_2, \delta_2}$.\\
Define $\tau'_2$ as
\[ \tau_2' = (\tau_1 \oplus \delta_1) \cap \tau_2\]
If $n_2 \neq n_3$ or $\tau_2' = \emptyset$,
then $\u_1 \tjoin \u_2 = \emptyset$.\\
Otherwise, let:
\begin{align*}
\tau & = (\tau_2' \ominus \delta_1) \cap \tau_1\\
b & = b_{\tau'_2} - b_{\delta_1}\\
e & = e_{\tau'_2} - e_{\delta_1}
\end{align*}
And for every $t\in \tau$, let
\[\delta(t) =\ \  \ld{\delta_1}\ b_{\delta_1} + \max(0, b-t),\ e_{\delta_1} - \max(0, t-e)\ \rd{\delta_1} \]
Then
\[\u_1 \tjoin \u_2 = \{\tup{n_1, n_4, [t,t], \delta(t) \oplus \delta_2} \mid t \in \tau\}\]







\paragraph{Correctness.}
\label{sec:evaltd_correct}
We start with a lemma:
\begin{lemma}\label{lemma:ominus}
  Let $\alpha, \beta \in \intervals{\td}$.
  Then
  \[\beta \ominus \alpha = \{t \mid (t + \alpha) \cap \beta \neq \emptyset\}\]
\end{lemma}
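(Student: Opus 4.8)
The statement is a pure set-equality between two subsets of $\td$, so the plan is simply to unfold the definition of $\ominus$ and then prove the two inclusions by exhibiting witnesses. Recall from the paragraph on operations on intervals that $\beta \ominus \alpha$ denotes the interval $\beta - \{a \mid a \in \alpha\}$, where $\beta - a = \{b - a \mid b \in \beta\}$; hence, spelling this out, $\beta \ominus \alpha = \{\, b - a \mid b \in \beta,\ a \in \alpha \,\}$. (That this set is in fact an interval is already asserted by the notation and is not needed for the lemma.)

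\textbf{First inclusion ($\subseteq$).} I would take an arbitrary $t \in \beta \ominus \alpha$, write it as $t = b - a$ with $b \in \beta$ and $a \in \alpha$, and observe that then $t + a = b$. Since $a \in \alpha$ we have $t + a \in t + \alpha$, and since $b \in \beta$ we have $t + a = b \in \beta$; therefore $b \in (t + \alpha) \cap \beta$, so this intersection is nonempty and $t$ belongs to the right-hand side.

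\textbf{Second inclusion ($\supseteq$).} Conversely, suppose $(t + \alpha) \cap \beta \neq \emptyset$ and pick $x$ in this intersection. From $x \in t + \alpha$ we get $x = t + a$ for some $a \in \alpha$, i.e.\ $t = x - a$; and $x \in \beta$. Hence $t = x - a$ with $x \in \beta$ and $a \in \alpha$, which is exactly the form of an element of $\beta \ominus \alpha$. Combining the two inclusions yields the claimed equality.

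\textbf{Main obstacle.} There is essentially none: the argument is a one-line witness exchange in each direction, and it is uniform in whether $\td = \integers$ or $\td = \rat$. The only point worth a sentence of care is making explicit that the Minkowski-type difference $\beta \ominus \alpha$ ranges over \emph{all} pairs $(b,a) \in \beta \times \alpha$, since that is precisely what matches the existential "there is a common point of $t+\alpha$ and $\beta$'' on the other side; once this is noted, the proof is immediate.
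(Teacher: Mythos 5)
Your proof is correct. The paper does not actually spell out a proof of this lemma (it is used as an unproved auxiliary fact), and your two-inclusion witness argument — unfolding $\beta \ominus \alpha$ as $\{\,b-a \mid b\in\beta,\ a\in\alpha\,\}$ and exchanging the witness $b = t+a$ in each direction — is exactly the intended, essentially unique justification.
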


\nt Next, if $\u = \tup{n_1, n_2, \tau, \delta} \in \tuplestd$, we call \emph{temporal relation induced by} $\u$ the
set
\[\{(t, t+d) \mid t \in \tau, d \in \delta\}\]

\nt We also use the operator $\join\colon (\td \times \td) \times (\td \times \td) \to (\td \times \td) $ for the composition of two binary relations, i.e.
\[R_1 \join R_2 = \{t_1, t_3 \mid (t_1, t_2) \in R_1 \te{and} (t_2, t_3) \in R_2 \te{for some} t_2\}\]
We can now formulate the following lemma:
\begin{lemma}\label{lemma:tjoin_td}
  Let $\u_1 = \tup{n_1, n_2, \tau_1, \delta_1}$
  and $\u_2 = \tup{n_3, n_4, \tau_2, \delta_2}$ be two tuples in $\tuplestd$ such that $n_2 = n_3$.
  And for $i \in \{1,2\}$, let $R_i$ denote the temporal relation induced by $\u_i$.
  Then
  \[R_1 \join R_2 = \bigcup_{\tup{n_1, n_4, \tau, \delta}\in \u_1 \tjoin \u_2} \{(t, t+d) \mid t \in \tau, d \in \delta\}\]
\end{lemma}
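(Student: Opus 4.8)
The plan is to fix the first coordinate $t_1$ of a pair and compare, slice by slice, the set of admissible second coordinates on the two sides. Write $R_i = \{(t,t+d) \mid t \in \tau_i,\ d \in \delta_i\}$. Unfolding the definition of $\join$, a pair $(t_1,t_3)$ lies in $R_1 \join R_2$ iff $t_1 \in \tau_1$ and there is $t_2 \in (t_1 + \delta_1) \cap \tau_2$ with $t_3 - t_2 \in \delta_2$. Since $t_1 \in \tau_1$, every such $t_2$ already lies in $\tau_1 \oplus \delta_1$, hence $(t_1 + \delta_1) \cap \tau_2 = (t_1 + \delta_1) \cap \tau_2'$ with $\tau_2' = (\tau_1 \oplus \delta_1) \cap \tau_2$. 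Setting $I = (t_1 + \delta_1) \cap \tau_2'$, this gives $\{t_3 \mid (t_1,t_3) \in R_1 \join R_2\} = I \oplus \delta_2$ whenever $t_1 \in \tau_1$, and $\emptyset$ otherwise. The first step is then to pin down the set of $t_1$ for which $I \neq \emptyset$: by Lemma~\ref{lemma:ominus} this set is $\tau_2' \ominus \delta_1$, so together with $t_1 \in \tau_1$ it becomes exactly $(\tau_2' \ominus \delta_1) \cap \tau_1 = \tau$. (If $\tau_2' = \emptyset$, or $n_2 \neq n_3$, then $\u_1 \tjoin \u_2 = \emptyset$ and $R_1 \join R_2 = \emptyset$ as well, so we may assume $\tau_2' \neq \emptyset$.) Since $\u_1 \tjoin \u_2 = \{\tup{n_1,n_4,[t,t],\delta(t)\oplus\delta_2} \mid t \in \tau\}$, the right-hand side of the lemma also projects onto $\tau$ in the first coordinate, so it remains to match the second-coordinate slices at each $t_1 \in \tau$.

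The heart of the argument is the interval identity $I - t_1 = \delta(t_1)$ for $t_1 \in \tau$. Since $t_1 + \delta_1$ has boundaries $t_1 + b_{\delta_1}$ and $t_1 + e_{\delta_1}$ (carrying the delimiters of $\delta_1$), the intersection $I = (t_1 + \delta_1) \cap \tau_2'$ has left boundary $\max(t_1 + b_{\delta_1},\, b_{\tau_2'})$ and right boundary $\min(t_1 + e_{\delta_1},\, e_{\tau_2'})$; subtracting $t_1$ and substituting $b_{\tau_2'} = b_{\delta_1} + b$ and $e_{\tau_2'} = e_{\delta_1} + e$ turns these into $b_{\delta_1} + \max(0, b - t_1)$ and $e_{\delta_1} - \max(0, t_1 - e)$, which are precisely the boundaries of $\delta(t_1)$. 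The delicate point — and what I expect to be the main obstacle — is matching the open/closed delimiters: at each end of $I - t_1$ one has to decide which of the two operands $\delta_1$, $\tau_2'$ is binding and check that this coincides with the delimiter of $\delta_1$ prescribed in the formula for $\delta(t)$, which requires using the conventions for intersections of intervals together with the specific choice of $b$ and $e$.

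Granting $I - t_1 = \delta(t_1)$, the proof closes at once: for $t_1 \in \tau$,
\[\{t_3 \mid (t_1,t_3) \in R_1 \join R_2\} = I \oplus \delta_2 = t_1 + \big((I - t_1) \oplus \delta_2\big) = t_1 + \big(\delta(t_1) \oplus \delta_2\big),\]
which is exactly the set of second coordinates contributed to the right-hand union by the (unique) tuple $\tup{n_1,n_4,[t_1,t_1],\delta(t_1)\oplus\delta_2} \in \u_1 \tjoin \u_2$ on the slice $\{t_1\}\times\td$; and for $t_1 \notin \tau$ both sides are empty on that slice (using that $\delta_2$ and each $\delta(t)$ are nonempty). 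Hence the two sets agree on every slice and therefore coincide.
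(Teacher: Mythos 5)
Your proof is correct and follows essentially the same route as the paper's: use Lemma~\ref{lemma:ominus} to show that the domain of $R_1 \join R_2$ is exactly $\tau$, then match the slice of second coordinates at each $t_1 \in \tau$ by showing that $(t_1+\delta_1)\cap\tau_2'$ equals $t_1+\delta(t_1)$ and composing with $\delta_2$. Your derivation of that key identity via $\max(t_1+b_{\delta_1},\,b_{\tau_2'})-t_1 = b_{\delta_1}+\max(0,\,b-t_1)$ (and its dual for the right boundary) is a cleaner one-step computation than the paper's explicit case split on $b\le t$ versus $b>t$, and on the open/closed-delimiter bookkeeping you are no less rigorous than the paper, which likewise treats only the closed--closed case and defers the rest to symmetry.
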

\begin{proof}
  $\u_1 = \tup{n_1, n_2, \tau_1, \delta_1}$
  and $\u_2 = \tup{n_3, n_4, \tau_2, \delta_2}$ be two tuples in $\tuplestd$ such that $n_2 = n_3$.
  And for $i \in \{1,2\}$, let $R_i$ denote the temporal relation induced by $\u_i$.
  
  We show that:
  \begin{enumerate}[(I)]
  \item\label{item:tdj0_1}
    \begin{enumerate}[(a)]
    \item If $\tau_2' = \emptyset$, then $\dom{R_1 \join R_2} = \emptyset$,\label{item:tdj0_1_1}
      \item 
    otherwise $\tau = \dom{R_1 \join R_2}$,\label{item:tdj0_1_2}
    \end{enumerate}
  \item for each $t \in \tau$,\label{item:tdj0_2}
    \[t + \delta(t) \oplus \delta_2 = \{t' \mid (t,t') \in R_1 \join R_2\}\]
  \end{enumerate}
  We start with~\ref{item:tdj0_1}.\\
  From the definition of $\oplus$:
  \begin{equation}
    \tau_1 \oplus \delta_1  = \{t +d \mid t \in \tau_1, d_1 \in \delta_1\}
  \end{equation}
  So from the definition of $R_1$
  \begin{equation}
    \tau_1 \oplus \delta_1  = \range{R_1}
  \end{equation}
  Since $\tau_2 = \dom{R_2}$, this implies
  \begin{align}
    (\tau_1 \oplus \delta_1) \cap \tau_2 = \range{R_1} \cap \dom{R_2}\\
    \tau_2' = \range{R_1} \cap \dom{R_2}\label{eq:tdj_1}
  \end{align}
  If $\range{R_1} \cap \dom{R_2} = \emptyset$, then $\dom{R_1 \join R_2} = \emptyset$, immediately  from the definition of $\join$,
  which concludes the proof of~\ref{item:tdj0_1_1}. \\
  Otherwise, from Lemma~\ref{lemma:ominus},
  \begin{equation}
  \tau_2' \ominus \delta_1  = \{t \mid (t + \delta_1) \cap \tau_2' \neq \emptyset\}\\
  \end{equation}
  So from~\eqref{eq:tdj_1}
  \begin{align*}
  \tau_2' \ominus \delta_1 & = \{t \mid (t + \delta_1) \cap \range{R_1} \cap \dom{R_2} \neq \emptyset\}\\
     (\tau_2' \ominus \delta_1) \cap \tau_1 & = \{t \in \tau_1 \mid (t + \delta_1) \cap \range{R_1} \cap \dom{R_2} \neq \emptyset\}\\
    (\tau_2' \ominus \delta_1) \cap \tau_1 & = \dom{R_1 \join R_2}\\
    \tau & = \dom{R_1 \join R_2}
  \end{align*}
  which proves~\ref{item:tdj0_1_2}.\\

  \nt Now for~\ref{item:tdj0_2},
  let $t \in \tau$.\\\
  We show below that \ei $t + \delta(t) = \{t' \mid (t,t') \in R_1 \te{and} t' \in \range{R_1} \cap \dom{R_2}\}$.\\
  Together with the definition of $\join$ (and the fact that $t + \delta(t)$ is an interval), this proves~\ref{item:tdj0_2}.\\

  \nt We only prove the result for the case where $\tau$, $\tau_2'$ and $\delta_1$ are closed-closed intervals (the proof for the other 63 cases is symmetric).\\
  First, from~\ref{item:tdj0_1_2} and the assumption that $t \in \tau$,
  we have $t \in \tau_1$.
  So from the definition of $R_1$,
  \begin{equation}
    t + \delta_1 = \{t' \mid (t,t') \in R_1\}
  \end{equation}
  Together with~\eqref{eq:tdj_1}, this means that \ei is equivalent to
  \eii $t + \delta(t) = \{(t + \delta_1) \cap \tau_2' \}$.\\
  So in order to prove~\ref{item:tdj0_2} (and conclude our proof), it is sufficient to prove \eii.\\
  
\nt Now since $t \in \tau$, from~\ref{item:tdj0_1_2} and the definition of $\tau_2'$, we have 
$(t + \delta(t)) \cap \tau_2' \neq \emptyset$.\\
And since $\delta(t)$ and $\tau_2'$ are intervals, $(t + \delta(t)) \cap \tau_2'$ is an also an interval.\\
So in order to prove \eii, it is sufficient to show that $t + b_{\delta(t)}$ (resp.~$t + e_{\delta(t)}$) is the smallest (resp. greatest) value in
$(t + \delta_1) \cap \tau_2'$.\\
We only prove the result for $t + b_{\delta(t)}$ (the proof for $t + e_{\delta(t)}$) is symmetric.\\
We consider two cases.
\begin{itemize}
\item If $b \le t$, then
 \begin{align}
   b_{\tau'_2} - b_{\delta_1} &\le t& \te{from the definition of} b\\
   b_{\tau'_2} - b_{\delta_1} + b_{\delta_1} &\le t + b_{\delta_1} & \\
   b_{\tau'_2} &\le t + b_{\delta_1}\label{eq:tdj2_1_1}
 \end{align}
 And because $t \in \tau$
 \begin{align}
   t &\le e_\tau\\
   t &\le e_{\tau_2'} - b_{\delta_1} & \te{from the definition of} \tau\\
   t + b_{\delta_1} &\le e_{\tau_2'} - b_{\delta_1} + b_{\delta_1}\\
   t + b_{\delta_1} &\le e_{\tau_2'}\label{eq:tdj2_1_2}
 \end{align} 
 So from~\eqref{eq:tdj2_1_1} and~\eqref{eq:tdj2_1_2} 
 \begin{equation}
   t + b_{\delta_1} \in \tau_2'\label{eq:tdj2_2}
 \end{equation}
 Next, since  $b \le t$ (by assumption), we have
 \begin{align*}
   b - t \le 0\\
  \max(0, b - t) = 0
 \end{align*}
  So from the definition of $\delta(t)$
  \begin{equation}
 b_{\delta(t)} = b_{\delta_1}~\label{eq:tdj2_3}
  \end{equation}
  Therefore $t + b_{\delta(t)}$ is the smallest value in $t + \delta_1$.\\
  So from~\eqref{eq:tdj2_2},
  it is also the smallest value in $t + \delta_1 \cap \tau_2'$, which concludes the proof for this case.

\item If $b > t$, then
 \begin{align}
   b - t > 0\label{eq:tdj2_4}\\
   \max(0, b - t) =  b - t
 \end{align}
 So from the definition of $\delta(t)$
 \begin{equation}
    b_{\delta(t)} = b_{\delta_1} + b - t\label{eq:tdj2_4_1}
  \end{equation}
  Besides, from~\eqref{eq:tdj2_4}
  \begin{equation}
  b - t + b_{\delta_1} > b_{\delta_1}\label{eq:tdj2_4_2}
  \end{equation}
  So from~\eqref{eq:tdj2_4_1} and~\eqref{eq:tdj2_4_2}
  \begin{equation}
  b_{\delta(t)} > b_{\delta_1}\label{eq:tdj2_5}
  \end{equation}
  Next, since $t \in \tau$
  \begin{equation}
    b_\tau \le t\label{eq:tdj2_6_1}
  \end{equation}
  And from the definition of $\tau$
  \begin{equation}
    b_{\tau_2'} - e_{\delta_1} \le  b_{\tau}\label{eq:tdj2_6_2}
  \end{equation}
  So from~\eqref{eq:tdj2_6_1} and~\eqref{eq:tdj2_6_2}
  \begin{align}
    b_{\tau_2'} - e_{\delta_1} &\le  t\\
    b_{\tau_2'} - t  &\le  e_{\delta_1}\\
    b_{\tau_2'} - t  + b_{\delta_1} -  b_{\delta_1} &\le  e_{\delta_1}\\
    b_{\delta_1}  + (b_{\tau_2'} -  b_{\delta_1}) - t   &\le  e_{\delta_1}\\
    b_{\delta_1}  + b - t   &\le  e_{\delta_1}& \te{from the definition of} b\\
    b_{\delta(t)} &\le  e_{\delta_1}& \te{from~\eqref{eq:tdj2_4_1}}\label{eq:tdj2_7}
  \end{align}
  Therefore from~\eqref{eq:tdj2_5} and~\eqref{eq:tdj2_7} 
  \begin{align}
    b_{\delta(t)} \in \delta_1\\
    t + b_{\delta(t)} \in t + \delta_1\label{eq:tdj2_8}
  \end{align} 
  Finally, from~\eqref{eq:tdj2_4_1} still,
  \begin{align}
    t  + b_{\delta(t)} &= t + b_{\delta_1} + b - t &\\
    &= t + b_{\delta_1} + b_{\tau_2'} - b_{\delta_1} - t& \te{from the definition of} b\\
    &=  b_{\tau_2'}
  \end{align}
  So $t  + b_{\delta(t)}$ is the smallest value in $\tau_2'$.\\
  Together with \eqref{eq:tdj2_8}, this concludes the proof for this case.
\end{itemize}

\end{proof}

The following result states that the representation $\evalcitd{q}$ is correct:
\begin{restatable}{proposition}{correctTd}\label{prop:correct_td}
  Let $G = \tup{\tdg, \fg, \valz}$ be a temporal graph and $q$ a TRPQ. Then the unfolding of $\evalcitd{q}$ is $\eval{q}$.
\end{restatable}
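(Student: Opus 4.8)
The plan is to mirror the proofs of Propositions~\ref{prop:correct_t} and~\ref{prop:correct_d}, proceeding by structural induction on $q$ and establishing the two directions: \ei for every $\tup{n_1,n_2,t,d}\in\eval{q}$ there exist $\tau,\delta\in\intervals{\td}$ with $\tup{n_1,n_2,\tau,\delta}\in\evalcitd{q}$, $t\in\tau$, and $d\in\delta$; and \eii for every $\tup{n_1,n_2,\tau,\delta}\in\evalcitd{q}$ and every $t\in\tau$ and $d\in\delta$, one has $\tup{n_1,n_2,t,d}\in\eval{q}$. Taken together, \ei and \eii say precisely that the unfolding of $\evalcitd{q}$ is $\eval{q}$.

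For the base cases, where $q$ is an expression for $\edge$ or $\node$: by Lemma~\ref{lemma:test_dist_zero}, every tuple of $\eval{q}$ has distance component $0$, and $\evalcitd{q}$ is by definition obtained from $\evalcit{q}$ by replacing the distance $0$ with the interval $[0,0]$; since the unfolding of a tuple $\tup{n,n',\tau,[0,0]}\in\tuplestd$ coincides with the unfolding of $\tup{n,n',\tau,0}\in\tuplest$, the claim follows from Proposition~\ref{prop:correct_t}. For $q$ of the form $\query_1+\query_2$, $\query[m,n]$, or $\query[m,\_]$, the definition of $\evalcitd{q}$ is literally that of $\eval{q}$ with $\eval{\query^k}$ replaced by $\evalcitd{\query^k}$, so \ei and \eii follow from the induction hypothesis, the fact that unfolding commutes with arbitrary set union, and the correctness of $\evalcitd{\query_1/\query_2}$ (needed to interpret the $\query^k$).

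The two cases carrying the real content are $\query_1/\query_2$ and $\tndelta$, and both reduce to Lemma~\ref{lemma:tjoin_td}. The bridge is the observation that, for fixed endpoints $n_1,n_2$, the bijection $(t,d)\mapsto(t,t+d)$ identifies the unfolding of a tuple $\tup{n_1,n_2,\tau,\delta}\in\tuplestd$ with the temporal relation it induces, and identifies $\{\tup{n_1,n_2,t,d}\in\eval{q}\}$ with a binary temporal relation; moreover the semantics of $/$ says that this relation for $\query_1/\query_2$ is $\bigcup_{n_2}\bigl(R_1^{n_1,n_2}\join R_2^{n_2,n_3}\bigr)$, where $R_i$ is the relation for $\query_i$ at the indicated endpoints, and composition $\join$ distributes over unions. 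Hence, using the induction hypothesis to write each $R_i$ as the union of the relations induced by the tuples of $\evalcitd{\query_i}$, the claim for $\query_1/\query_2$ becomes exactly Lemma~\ref{lemma:tjoin_td} applied pairwise to $\u_1\in\evalcitd{\query_1}$ and $\u_2\in\evalcitd{\query_2}$ (the pairs with mismatched middle node contribute $\emptyset$ on both sides by definition of $\tjoin$), while $\evalcitd{\query_1/\query_2}=\bigcup\{\u_1\tjoin\u_2\}$ ranges over exactly these pairs. For $\tndelta$, the relation induced by $\tup{n,n,\tdg,\delta}$ is $\{(t,t')\mid t\in\tdg,\ t'-t\in\delta\}$ and the one induced by $\tup{n,n,\tdg,[0,0]}$ is the identity on $\tdg$; their composition is $\{(t,t')\mid t,t'\in\tdg,\ t'-t\in\delta\}$, which transports back to $\{\tup{n,n,t,d}\mid t\in\tdg,\ d\in\delta,\ t+d\in\tdg\}$, so Lemma~\ref{lemma:tjoin_td} gives that $\tup{n,n,\tdg,\delta}\tjoin\tup{n,n,\tdg,[0,0]}$ unfolds to that set, and taking the union over $n\in\nodesg$ yields $\eval{\tndelta}$.

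The main obstacle has already been dealt with: it is Lemma~\ref{lemma:tjoin_td} itself, whose proof (via the $\delta(t)$ formula and Lemma~\ref{lemma:ominus}) is the delicate part. Given that lemma, what remains for this proposition is the routine but slightly fiddly bookkeeping that translates between the tuple/unfolding vocabulary used for $\evalcitd{\cdot}$ and the binary-relation/composition vocabulary of Lemma~\ref{lemma:tjoin_td} — in particular, checking that the unfolding of a finite union of $\tuplestd$-tuples sharing a pair of endpoints is, under $(t,t+d)\mapsto(t,d)$, the union of the relations they induce, and that the existential over the intermediate node in the semantics of $/$ lines up with the range of the union in the definition of $\evalcitd{\query_1/\query_2}$.
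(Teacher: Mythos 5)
Your proposal is correct and follows essentially the same route as the paper's proof: structural induction with the same two directions, the easy cases dispatched by the definitions (and, for the base cases, the already-proven correctness of $\evalcit{\cdot}$), and the join and $\tndelta$ cases reduced to Lemma~\ref{lemma:tjoin_td} via the correspondence between the unfolding of a $\tuplestd$-tuple and the binary temporal relation it induces. The bookkeeping you describe (distributing composition over the unions coming from the induction hypothesis and matching the existential over the middle node with the pairs ranged over in $\evalcitd{\query_1/\query_2}$) is exactly what the paper's argument does.
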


\begin{proof}\ \\
  Let $G = \tup{\tdg \fg, \valz}$ be a temporal graph,
  and let $q$ be a TRPQ.\\
  We show below that:
  \begin{enumerate}[(I)]
  \item for any $\tup{n_1,n_2,t, d} \in \eval{q}$,
    there are $\tau, \delta \in \intervals{\td}$ such that\label{itemtdcorr_1}
    \begin{enumerate}[(a)]
    \item $\tup{n_1,n_2, \tau, \delta} \in \evalcitd{q}$, \label{itemtdcorr_1_1} 
    \item $t \in \tau$, and \label{itemtdcorr_1_2}
    \item $d \in \delta$.\label{itemtdcorr_1_3}
    \end{enumerate}
  \item for any $\tup{n_1,n_2,\tau, \delta} \in \evalcitd{q}$
    for any $(t, d) \in \tau \times \delta$,\\
    \qquad $\tup{n_1,n_2, t, d}$ is in $\eval{q}$.
    \label{itemtdcorr_2}
  \end{enumerate}
  
    \nt We proceed once again by induction on the structure of $q$.\\
    If $q$ is of the form $\edgelabel$, $\edge^-$, $\pred$, $(\query + \query)$, $\query[m,n]$ or $\query[m,\_]$,
    then~\ref{itemtdcorr_1} and~\ref{itemtdcorr_2} immediately follow from the definitions of
    $\eval{q}$ and 
    $\evalcitd{q}$.\\
    If $q$ is of the form $\neg \node$ or $(?\query)$, then the proof is nearly identical to the one already provided for $\evalcit{q}$.\\
    So we focus below on the two remaining cases:
  \begin{itemize}

\item $q = \query_1 / \query_2$.\ \\
    From the above definitions, we have:
\[
  \begin{array}{rl}
    \eval{q}\ = &\Big\{\tup{n_1, n_3 , t, d_1 + d_2 } \mid \exists n_2 \colon \\
                & \tup{n_1, n_2 , t, d_1} \in  \eval{\query_1} \te{and} \tup{n_2, n_3, t + d_1, d_2} \in  \eval{\query_2}\Big\}\\
\evalcitd{\query_1/\query_2} =& \bigcup \{\u_1 \tjoin \u_2 \mid \u_1 \in \evalcitd{\query_1}, \u_2 \in \evalcitd{\query_2}\}\\
\end{array}
\]
\begin{itemize}
    \item For~\ref{itemtdcorr_1}, let $\vt = \tup{n_1,n_3, t, d} \in \eval{q}$.\\
      Fom the definition of $\eval{q}$, there are  $n_2, d_1$ and $d_2$ such that
      $\tup{n_1, n_2 , t, d_1} \in  \eval{\query_1}$,
      $\tup{n_2, n_3 , t + d_1, d_2} \in  \eval{\query_2}$
      and $d = d_1 + d_2$.\\
      By IH, because $\tup{n_1, n_2, t , d_1} \in  \eval{\query_1}$,
      there are $\tau_1$ and $\delta_1$ such that $t \in \tau_1$, $d_1 \in \delta_1$ and
      \begin{equation}
      \tup{n_1, n_2, \tau_1, \delta_1} \in \evalcitd{\query_1}
    \end{equation}
    Let $R_1$ be the temporal relation induced by this tuple $\tup{n_1, n_2, \tau_1, \delta_1}$.\\
    Since $t \in \tau_1$ and $d_1 \in \delta_1$, we have
    \begin{equation}\label{eq:td2_2_1}
      (t, t + d_1) \in R_1
    \end{equation}

      Similarly,
      because $\tup{n_2, n_3 , t + d_1, d_2} \in  \eval{\query_2}$,
      there are $\tau_2$ and $\delta_2$ such that $t + d_1 \in \tau_2$, $d_2 \in \delta_2$ and
      \begin{equation}
        \tup{n_2, n_3, \tau_2, \delta_2} \in \evalcitd{\query_2}
      \end{equation}
      Let $R_2$ be the temporal relation induced by this tuple $\tup{n_2, n_3, \tau_2, \delta_2}$.\\
    Since $t + d_1 \in \tau_2$ and $d_2 \in \delta_2$, we have
    \begin{equation}\label{eq:td2_2_2}
      (t + d_1, t + d_1 + d_2) \in R_2
    \end{equation}
    So from~\eqref{eq:td2_2_1},~\eqref{eq:td2_2_2} and Lemma~\ref{lemma:tjoin_td},
    there are $\tau$ and $\delta$ such that $\tup{n_1, n_3, \tau, \delta} \in u_1 \tjoin u_2$,
    $t \in \tau$ and $d_1+d_2 = d \in \delta$,
    which concludes the proof for~\ref{itemtdcorr_1}.

  \item For~\ref{itemtdcorr_2}, let $\u = \tup{n_1,n_3,t_1,\delta} \in \evalcitd{q}$, and let $(t,d) \in \tau \times \delta$.\\
    Because $\u \in \evalcitd{q}$, from the definition of $\evalcitd{q}$, there are $\u_1$ and $\u_2$ s.t.:
    \begin{enumerate}[(i)]
    \item $\u \in \u_1 \tjoin \u_2$\label{itemtd3_0_0}
    \item $\u_1 \in  \evalcitd{\query_1}$\label{itemtd3_1}
    \item $\u_2 \in  \evalcitd{\query_2}$\label{itemtd3_2}
    \end{enumerate}
    Let $R_i$ be the temporal relation induced by $u_i$ for $i \in \{1,2\}$.\\
    From~\ref{itemtd3_0_0}, and Lemma~\ref{lemma:tjoin_td},
    \begin{equation}\label{eq:td3_3}
(t, t+d) \in R_1 \join R_2
    \end{equation}
    Now let $\u_1 = \tup{n_1, n_2, \tau_1, \delta_1}$
    and $\u_2 = \tup{n_2, n_3, \tau_2, \delta_2}$ for some $n_2, \tau_1, \tau_2, \delta_1$ and $\delta_2$.\\
    From~\eqref{eq:td3_3} and the definition of $\join$,
    there must be $d_1$ and $d_2$ s.t.~
    $d = d_1 + d_2$, 
$t \in \tau_1, d_1 \in \delta_1$, 
$t + d_1 \in \tau_2$ and $d_2 \in \delta_2$.

So from~\ref{itemtd3_1}, and~\ref{itemtd3_2}, by IH
\begin{align}
  \tup{n_1, n_2, t, d_1} &\in \eval{\query_1}\label{eq:td3_4_1}\\
  \tup{n_2, n_3, t + d_1, d_2} &\in \eval{\query_2}\label{eq:td3_4_2}
\end{align}
So from~\eqref{eq:td3_4_1},~\eqref{eq:td3_4_2} and the definition of $\eval{q}$
\[ \tup{n_1, n_3, t,  d_1 +  d_2} \in \eval{q}, \]
which concludes the proof for~\ref{itemtdcorr_2}.
\end{itemize}

\end{itemize} 
\end{proof}

\subsubsection{In $\tuplestdbe$}

\paragraph{Definition.}
\label{sec:evalbe_def}
If $q$ is an expression for the symbol $\edge$ or $\node$ in the grammar of Section~\ref{sec:language},
then the definition of $\evalcitdbe{q}$ is nearly identical to the one of $\evalcitd{q}$,
extending each tuple $\{\tup{n, n, \tau, [0,0]}$ with $b_\tau$ and $e_\tau$, i.e.
\begin{align*}
  \evalcitdbe{\edge} = & \{\tup{n_1, n_2, \tau, [0,0], b_\tau, e_\tau} \mid  \{\tup{n_1, n_2, \tau, [0,0]} \in \evalcitd{\edge}\}\\
    \evalcitdbe{\node} = & \{\tup{n, n, \tau, [0,0], b_\tau, e_\tau} \mid  \{\tup{n, n, \tau, [0,0]} \in \evalcitd{\node}\}
\end{align*}

Next, if $q$ is of the form
($\query_1 + \query_2$), ($\query[m,\_]$) or ($\query[m,n]$),
then the definition of $\evalcitd{q}$ is once again nearly identical to the one of $\eval{q}$:
\[
  \begin{array}{rl}
  \evalcitdbe{\query_1 + \query_2} = & \evalcitdbe{\query_1} \cup \evalcitdbe{\query_2}\\
  \eval{\query[m,n]} = & \bigcup\limits_{k = m}^n\evalcitdbe{\query^k}\\
  \eval{\query[m,\_]} = &  \bigcup\limits_{k \ge m}\evalcitdbe{\query^k}\\
\end{array}
\]
So the only remaining operator are temporal join ($\query_1/\query_2$) and temporal navigation ($\tndelta$), already defined in the article.
We reproduce these two definition for convenience: 
\[
  \begin{array}{rl}
    \evalcitdbe{\query_1 / \query_2} = & \{\u_1 \tjoin \u_2 \mid \u_1 \in \evalcitdbe{\query_1}, \u_2 \in \evalcitdbe{\query_2} \te{and} \u_1 ~\sim \u_2\}\\
\evalcitdbe{\tndelta} = & \{\tup{n,n,\tdg, \delta,b_{\tdg},e_{\tdg}} \tjoin \tup{n,n,\tdg, [0,0],b_{\tdg},e_{\tdg}} \mid n \in \nodesg \}
\end{array}
\]
where $\u_1 \sim \u_2$ and $\u_1 \tjoin \u_2$ are defined as follows.\\
Let $\u_1 = \tup{n_1, n_2, \tau_1, \delta_1, b_1, e_1}$ and $\u_2 = \tup{n_3, n_4, \tau_2, \delta_2, e_2, b_2}$.\\
Define
\[\delta'_1 =\ \ld{\delta_1}\ b_{\delta_1} + \max(0, b_1 - b_{\tau_1}), e_{\delta_1} - \max(0, e_{\tau_1} - e_1)\ \rd{\delta_1}\]
and
\[\tau =  (((\tau_1 \oplus \delta'_1) \cap \tau_2) \ominus \delta'_1) \cap \tau_1\]
Then $\u_1 \sim \u_2$ iff $n_2 = n_3$ and $\tau \neq \emptyset$.\\
If $\u_1 \sim \u_2$, then $\u_1 \tjoin \u_2 = \tup{n_1, n_4, \tau, \delta_1 \oplus \delta_2, b,e}$,
with
\begin{align*}
  b &= \max(b_1, b_2 - b_{\delta_1})\\
  e &= \min(e_1, e_2 - e_{\delta_1})
\end{align*}

\paragraph{Correctness.}
\label{sec:evalbe_correct}
We start with two lemmas:
\begin{lemma}\label{lemma:tdbe_monot}
  Let $\u = \tup{n_1, n_2, \tau, \delta, b, e}  \in \tuplestdbe$.
Then for any $t_1, t_2 \in \tau$ s.t. $t_1 \le t_2$:
\begin{align*}
t_1 + b_{\delta(t_1)} & \le t_2 + b_{\delta(t_2)}\te{and}\\
t_1 + e_{\delta(t_1)} & \le t_2 + e_{\delta(t_2)}
\end{align*}
\end{lemma}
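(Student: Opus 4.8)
The plan is to reduce both inequalities to the monotonicity of two elementary $\min/\max$ functions of $t$. First I would unpack the boundaries of $\delta(t)$ from the definition in~\eqref{eq:fourth-repr-forumula}: its left boundary is $b_{\delta(t)} = b_{\delta} + \max(0, b - t)$ and its right boundary is $e_{\delta(t)} = e_{\delta} - \max(0, t - e)$, where $b_{\delta}$ and $e_{\delta}$ are the (fixed, $t$-independent) boundaries of the base interval $\delta$. The key algebraic observation is then that
\[
t + b_{\delta(t)} = b_{\delta} + \bigl(t + \max(0, b - t)\bigr) = b_{\delta} + \max(t, b),
\qquad
t + e_{\delta(t)} = e_{\delta} + \bigl(t - \max(0, t - e)\bigr) = e_{\delta} + \min(t, e),
\]
using the identities $t + \max(0, b - t) = \max(t, b)$ and $t - \max(0, t - e) = \min(t, e)$, each proved by a trivial case split on the sign of $b - t$ (resp.\ $t - e$).

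Given these identities, the two claimed inequalities become, respectively, $\max(t_1, b) \le \max(t_2, b)$ and $\min(t_1, e) \le \min(t_2, e)$, after cancelling the common constants $b_{\delta}$ and $e_{\delta}$. Both follow immediately from $t_1 \le t_2$ together with the fact that $t \mapsto \max(t, b)$ and $t \mapsto \min(t, e)$ are non-decreasing. This finishes the argument.

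I do not expect any real obstacle here: the statement is essentially a bookkeeping lemma to be reused later when reasoning about the join operator on $\tuplestdbe$. The only mildly delicate point is keeping the definition of $\delta(t)$ straight --- in particular that $b, e$ are the ``cropping'' parameters while $b_{\delta}, e_{\delta}$ are the boundaries of $\delta$ --- and noting that the hypothesis $t_1, t_2 \in \tau$ is used only to ensure, via $\u \in \tuplestdbe$, that $\delta(t_1)$ and $\delta(t_2)$ are nonempty intervals so that $b_{\delta(t_i)}$ and $e_{\delta(t_i)}$ are well defined; it plays no role in the inequalities themselves.
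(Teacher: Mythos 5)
Your proof is correct, and the paper in fact states Lemma~\ref{lemma:tdbe_monot} without any proof, so your argument supplies exactly the justification the authors leave implicit. The two identities $t + \max(0, b-t) = \max(t,b)$ and $t - \max(0, t-e) = \min(t,e)$ are right, they reduce the claim to monotonicity of $t \mapsto \max(t,b)$ and $t \mapsto \min(t,e)$, and your remark about the role of the hypothesis $t_1, t_2 \in \tau$ (only guaranteeing nonemptiness of $\delta(t_i)$ via the definition of $\tuplestdbe$) is accurate.
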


\begin{lemma}\label{lemma:tdbe_continuous}
Let $\u = \tup{n_1, n_2, \tau, \delta, b, e}  \in \tuplestdbe$.
And let $\tau'$ denote the interval $(b_\tau + b_{\delta(b_\tau)}, e_\tau + e_{\delta(e_\tau} )$.
Then for any $t' \in \tau'$, $\te{there is a} t \in \tau \te{s.t.} t' \in t + \delta(t)$.
\end{lemma}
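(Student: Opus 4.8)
The plan is to reduce the statement to a transparent claim about two monotone ``boundary functions.'' For $t \in \tau$ set $L(t) = t + b_{\delta(t)}$ and $U(t) = t + e_{\delta(t)}$, so that $t + \delta(t)$ is the interval with left boundary $L(t)$ and right boundary $U(t)$. The first step I would take is to rewrite these in closed form using the elementary identities $t + \max(0, b - t) = \max(t, b)$ and $t - \max(0, t - e) = \min(t, e)$, which give
\[ L(t) = b_\delta + \max(t, b), \qquad U(t) = e_\delta + \min(t, e). \]
In particular $L(b_\tau) = b_\delta + \max(b_\tau, b) = b_\tau + b_{\delta(b_\tau)}$ and $U(e_\tau) = e_\delta + \min(e_\tau, e) = e_\tau + e_{\delta(e_\tau)}$, so that $\tau' = (L(b_\tau), U(e_\tau))$; and both $L$ and $U$ are non-decreasing on $\tau$ (which is exactly Lemma~\ref{lemma:tdbe_monot}, though the argument below only uses the closed forms). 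It therefore suffices to show: for every $t' \in (L(b_\tau), U(e_\tau))$ there is a $t \in \tau$ with $L(t) \le t' \le U(t)$. As in the proof of Lemma~\ref{lemma:tjoin_td}, I would carry this out for closed intervals $\tau$ and $\delta$ and note that the other delimiter combinations are analogous, the strictness in the definition of $\tau'$ providing whatever strictness is needed on the endpoints.

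Next I would unpack the hypothesis $L(b_\tau) < t' < U(e_\tau)$ into four inequalities, using that $\max$ dominates each argument and $\min$ is dominated by each: from $b_\delta + \max(b_\tau, b) < t'$ we get $b + b_\delta < t'$ and $b_\tau + b_\delta < t'$, and from $t' < e_\delta + \min(e_\tau, e)$ we get $t' < e + e_\delta$ and $t' < e_\tau + e_\delta$.

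The witness I would then propose is any $t$ in $J \cap \tau$, where $J = [\,t' - e_\delta,\ t' - b_\delta\,]$; this $J$ is a nonempty interval because $b_\delta \le e_\delta$. That $J \cap \tau \neq \emptyset$ follows from the standard overlap criterion: $t' - e_\delta \le e_\tau$ (from $t' < e_\tau + e_\delta$) and $b_\tau \le t' - b_\delta$ (from $b_\tau + b_\delta < t'$). Fixing such a $t$: since $t \le t' - b_\delta$ and also $b \le t' - b_\delta$ (from $b + b_\delta < t'$), we get $\max(t, b) \le t' - b_\delta$, hence $L(t) = b_\delta + \max(t, b) \le t'$; symmetrically $t \ge t' - e_\delta$ and $e \ge t' - e_\delta$ (from $t' < e + e_\delta$) give $\min(t, e) \ge t' - e_\delta$, hence $U(t) = e_\delta + \min(t, e) \ge t'$. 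Thus $L(t) \le t' \le U(t)$, i.e.\ $t' - t \in \delta(t)$, i.e.\ $t' \in t + \delta(t)$, which is what was required.

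The only genuinely non-routine step is the first one — recognising the $\max/\min$ closed forms of $L$ and $U$ — after which the whole argument collapses to elementary interval arithmetic; the remaining work is the tedious but mechanical bookkeeping of open/closed delimiters of $\tau$, $\delta$, and $\delta(t)$, which I would dispose of exactly as in the proof of Lemma~\ref{lemma:tjoin_td} (treat the closed--closed case in full, then observe symmetry).
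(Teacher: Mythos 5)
Your proof is correct. Note that the paper itself states Lemma~\ref{lemma:tdbe_continuous} (like Lemmas~\ref{lemma:ominus} and~\ref{lemma:tdbe_monot}) without any proof, so there is no argument of the authors' to compare against; your write-up actually fills that gap. The one genuinely useful idea — rewriting the boundaries of $t + \delta(t)$ as $L(t) = b_\delta + \max(t,b)$ and $U(t) = e_\delta + \min(t,e)$ via $t + \max(0,b-t) = \max(t,b)$ and $t - \max(0,t-e) = \min(t,e)$ — is sound, and it has the bonus of making Lemma~\ref{lemma:tdbe_monot} (monotonicity of both boundary maps) immediate as well, whereas the paper's proof of Lemma~\ref{lemma:tjoin_td} manipulates the $\max(0,\cdot)$ expressions by explicit case splits. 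All the derived inequalities check out: $b+b_\delta < t'$, $b_\tau + b_\delta < t'$, $t' < e+e_\delta$, $t' < e_\tau + e_\delta$ follow from $L(b_\tau) < t' < U(e_\tau)$; the witness interval $J = [t'-e_\delta,\ t'-b_\delta]$ meets $\tau$ by the overlap criterion; and any $t \in J \cap \tau$ satisfies $L(t) \le t' \le U(t)$, hence $t' \in t + \delta(t)$. The only caveat is the deferred delimiter bookkeeping: for $\delta$ open on a side you need the corresponding side of $J$ half-open and must use the strict versions of the inequalities (which your hypothesis does supply, together with nonemptiness of $\delta$ forcing $b_\delta < e_\delta$ in that case); this is exactly the kind of case analysis the paper itself waves away in the proof of Lemma~\ref{lemma:tjoin_td}, so deferring it is consistent with the paper's conventions.
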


Next, similarly to what we did above for $\tuplestd$,
if $\u = \tup{n_1, n_2, \tau, \delta, b, e} \in \tuplestdbe$, we call \emph{temporal relation induced by} $\u$ the
set
\[\{(t, t+d) \mid t \in \tau, d \in \delta(t)\}\]
We can now formulate a result analogous to Lemma~\ref{lemma:tjoin_td}:
\begin{lemma}\label{lemma:tjoin_tdbe}
  Let $\u_1, \u_2 \in \tuplestdbe$, and for $i \in \{1,2\}$, let $R_i$ denote the temporal relation induced by $\u_i$.
  If $\u_1 \sim \u_2$ and $\u_1 \tjoin \u_2 = \tup{n_1, n_3, \tau, \delta, b, e}$, then
  \[R_1 \join R_2 =  \{(t, t+d) \mid t \in \tau, d \in \delta(t)\}\]
\end{lemma}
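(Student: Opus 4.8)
The plan is to mirror the two-step argument used for Lemma~\ref{lemma:tjoin_td}: first identify the domain of the composed relation $R_1 \join R_2$, and then, for each point of that domain, identify the set of reachable time points; in both cases one checks that the answer is exactly what the tuple $\u_1 \tjoin \u_2$ prescribes.

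To set up, write $\u_1 = \tup{n_1, n_2, \tau_1, \delta_1, b_1, e_1}$ and $\u_2 = \tup{n_2, n_3, \tau_2, \delta_2, b_2, e_2}$ (the shared middle node comes from $\u_1 \sim \u_2$), and for $t \in \tau_i$ abbreviate by $\lambda_i(t)$ the interval $t + \delta_i(t)$, so that $R_i = \{(t, t') \mid t \in \tau_i,\ t' \in \lambda_i(t)\}$; note that $\dom{R_2} = \tau_2$, since a tuple of $\tuplestdbe$ has $\delta_2(t)$ nonempty on all of $\tau_2$. Writing $\beta_i(t)$, $\varepsilon_i(t)$ for the left and right endpoints of $\lambda_i(t)$, the definition of $\delta_i(t)$ gives $\beta_i(t) = b_{\delta_i} + \max(t, b_i)$ and $\varepsilon_i(t) = e_{\delta_i} + \min(t, e_i)$; by Lemma~\ref{lemma:tdbe_monot} both are nondecreasing in $t$, and by Lemma~\ref{lemma:tdbe_continuous} the union $\bigcup_{t \in J} \lambda_i(t)$ over any subinterval $J$ of $\tau_i$ is again an interval, the one whose endpoints are $\beta_i$ read off at the left end of $J$ and $\varepsilon_i$ at the right end of $J$. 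Specialising this to $J = \tau_1$ shows that $\range{R_1}$ has endpoints $b_{\delta_1} + \max(b_1, b_{\tau_1})$ and $e_{\delta_1} + \min(e_1, e_{\tau_1})$, and a short rearrangement identifies this interval with $\tau_1 \oplus \delta'_1$; this is precisely the role played by $\delta'_1$ in the definition of the join.

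For the domain: since $\dom{R_2} = \tau_2$, a point $t$ lies in $\dom{R_1 \join R_2}$ iff $t \in \tau_1$ and $\lambda_1(t) \cap \tau_2 \neq \emptyset$, and using $\range{R_1} = \tau_1 \oplus \delta'_1$, Lemma~\ref{lemma:ominus}, and the monotonicity of $\beta_1, \varepsilon_1$, one shows that this set equals $(((\tau_1 \oplus \delta'_1) \cap \tau_2) \ominus \delta'_1) \cap \tau_1$, i.e.\ exactly the $\tau$ of $\u_1 \tjoin \u_2$. For the reachable set at a fixed $t \in \tau$: $\{t' \mid (t,t') \in R_1 \join R_2\} = \bigcup \{\lambda_2(s) \mid s \in \lambda_1(t) \cap \tau_2\}$; here $\lambda_1(t) \cap \tau_2$ is an interval $J_t$ whose endpoints are a $\max$, resp.\ $\min$, of the endpoints of $\lambda_1(t)$ and of $\tau_2$, and applying Lemma~\ref{lemma:tdbe_continuous} to $\u_2$ over $J_t$ turns the union into the interval with endpoints $\beta_2$ at the left end of $J_t$ and $\varepsilon_2$ at the right end of $J_t$. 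Substituting the formulas for $\beta_i, \varepsilon_i$, collapsing the nested maxima and minima, and using $\delta = \delta_1 \oplus \delta_2$, $b = \max(b_1, b_2 - b_{\delta_1})$ and $e = \min(e_1, e_2 - e_{\delta_1})$, one checks that this interval equals $t + \delta(t)$ for the output tuple $\u_1 \tjoin \u_2$ — i.e.\ the fibre of $\u_1 \tjoin \u_2$ at $t$ after the change of variable $t' = t + d$. Combining the two parts gives the claim, and in passing shows $\u_1 \tjoin \u_2 \in \tuplestdbe$.

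I expect the last step to be the main obstacle. Once everything is reduced to intervals, the only remaining work is arithmetic, but it is a sizeable case analysis: the two $\max$ terms inside each of $\lambda_1$ and $\lambda_2$, the further splits created by the intersections with $\tau_1$ and $\tau_2$, and — as in the proof of Lemma~\ref{lemma:tjoin_td} — the open/closed delimiters of $\tau_1, \tau_2, \delta_1, \delta_2$, of which it suffices to treat one representative sub-case and invoke symmetry for the remaining ones. Getting these nested $\max$/$\min$ expressions to coincide on the nose with the definitions of $\tau$, $b$ and $e$ in $\u_1 \tjoin \u_2$ is the technical core of the proof; Lemmas~\ref{lemma:tdbe_monot} and~\ref{lemma:tdbe_continuous} are exactly what prevent the argument from branching further, since they let it run entirely on interval endpoints rather than on individual time points.
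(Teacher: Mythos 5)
Your proposal is correct and follows essentially the same route as the paper's proof: split into (i) identifying $\dom{R_1 \join R_2}$ with $\tau$ and (ii) computing the fibre at each $t \in \tau$ as the union of $s + \delta_2(s)$ over $s$ ranging in $(t + \delta_1(t)) \cap \tau_2$, using Lemma~\ref{lemma:tdbe_monot} to locate the endpoints of that union and Lemma~\ref{lemma:tdbe_continuous} to show it is gap-free, then matching those endpoints against $b = \max(b_1, b_2 - b_{\delta_1})$ and $e = \min(e_1, e_2 - e_{\delta_1})$ by the same $\max/\min$ case analysis (your $\beta_2$ and $\varepsilon_2$ at the ends of $J_t$ are exactly the paper's $a'$ and $z'$). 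The one piece you leave as "one checks" --- collapsing the nested maxima and minima to the stated $b$, $e$ --- is precisely the lengthy case split the paper carries out explicitly, and your identification of $\range{R_1}$ with $\tau_1 \oplus \delta'_1$ correctly explains the role of $\delta'_1$.
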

\begin{proof}
  
Let $\u_1 = \tup{n_1, n_2, \tau_1, \delta_1}$ and $\u_2 = \tup{n_2, n_3, \tau_2, \delta_2}$.\\
As explained in Section~\ref{sec:notation}, for $i \in \{1,2\}$ and $t \in \tau_i$,
we use $\delta_i(t)$ for the interval
\[\ld{\delta_{i}}\ b_{\delta_i} + \max(0, b_i - t) \ ,\ e_{\delta_i} - \max(0, t - e_i)\ \rd{\delta_{i}}\]
We need to prove that 
\ei $\tau = \dom{R_1 \join R_2}$ and
that \eii for each $t \in \tau$,
\[t + \delta(t) = \{t' \mid (t,t') \in R_1 \join R_2\}\]
The proof of \ei is nearly identical to the one provided above for Lemma~\ref{lemma:tjoin_td}.\\
\nt For \eii, let $t \in \tau$.\\
\nt We only provide a proof for the case where $\tau$, $\delta_1$ and $\delta_2$ are closed-closed intervals (the proof for the other 63 cases is symmetric).\\
Since $t \in \tau$, from the defnition of $\tau$,
$t \in \tau_1$.\\
Therefore from the definition of $R_1$,
\begin{equation}
  t + \delta_1(t) = \{t' \mid (t,t') \in R_1\}\label{eq:tdbetj_0_0_0}
\end{equation}
So from \ei and the fact that $t \in \tau$
\begin{equation}
  t + \delta_1(t) \cap \dom{R_2} \neq \emptyset
\end{equation}
Now let $a$ (resp. $z$) denote the smallest (resp. largest) value in $t + \delta_1(t) \cap \dom{R_2}$.\\
Then from~\eqref{eq:tdbetj_0_0_0}, $a$ (resp. $z$) is also the smallest value s.t. $(t,a) \in R_1$ and $a \in \dom{R_2}$
(resp.~the largest value s.t. $(t,z) \in R_1$ and $z \in  \dom{R_2}$).\\

\nt Next, from Lemma~\ref{lemma:tdbe_monot}, for any $x \in [a,z]$,
we have 
\begin{equation}
  a + b_{\delta_2(a)} \le x + b_{\delta_2(x)}\label{eq:tdbetj_0_0_1}
\end{equation}
and
\begin{equation}
  x + e_{\delta_2(x)} \le z + e_{\delta_2(z)}\label{eq:tdbetj_0_0_2}
\end{equation}

\nt Now let $a'$ and $z'$ denote $a + b_{\delta_2(a)}$ and $z + e_{\delta_2(z)}$ respectively.\\
From~\eqref{eq:tdbetj_0_0_1} and the definition of $R_2$,
$a'$ is the smallest value s.t. $(x,a') \in R_2$ for some $x \in [a,b]$.\\
And similarly,
from~\eqref{eq:tdbetj_0_0_2} and the definition of $R_2$,
$z'$ is the largest value s.t. $(x,z') \in R_2$ for some $x \in [a,b]$.\\
Together with the defnition of $a$ (resp. of $z$),
this implies that
$a'$ (resp. $z'$) is also the smallest (resp.~largest) value s.t. $(t, a') \in R_1 \join R_2$ (resp.~$(t, z') \in R_1 \join R_2$).\\
To conclude the proof, we show that:
\begin{enumerate}
\item $(t, x) \in R_1 \join R_2$ for each $x \in [a',z']$, and\label{item:tjtdbe_1}
\item $t + \delta(t) = [a',z']$\label{item:tjtdbe_2}.
\end{enumerate}

\nt We start with~\ref{item:tjtdbe_1}.\\
Consider the tuple $\u' = \tup{n_2, n_3, [a,b], \delta_2, b_2, e_2} \in \tuplestdbe$, and let $R'$ be the temporal relation induced by $\u'$.\\
Then from he definitions of $u'$ and $\u_2$: 
\begin{equation}
R' \subseteq R_2\label{eq:tjtdbe_2_1}
\end{equation}
Now take any $x \in [a',z']$.\\
From Lemma~\ref{lemma:tdbe_continuous} and the definitions of $a'$ and $z'$,
there is a $w \in [a,b]$ such that $x \in \delta_2(w)$.\\
Therefore 
\begin{equation*}
(w,x) \in R'
\end{equation*}
So from~\eqref{eq:tjtdbe_2_1} 
\begin{equation}
(w,x) \in R_2\label{eq:tjtdbe_2_2}
\end{equation}
Finally, since $[a,b] = t + \delta_1(t)$ and $w \in [a,b]$,
\begin{equation}
(t,w) \in R_1
\end{equation}
Together with~\eqref{eq:tjtdbe_2_2}, this implies
\begin{equation*}
(t, x) \in R_1 \join R_2
\end{equation*}
which concludes the proof for~\ref{item:tjtdbe_1}.\\

\nt For~\ref{item:tjtdbe_2}, we only prove that $t + b_{\delta_t} = a'$ (the proof that $t + e_{\delta_t} = z'$ is symmetric).\\
Following the definition of $b$,
we consider 2 cases:
\begin{enumerate}
\item $b_1 <  b_2 - b_{\delta_1}$\label{case:1}
\item $b_1 \ge  b_2 - b_{\delta_1}$\label{case:2}
\end{enumerate}
In Case~\ref{case:1}, we have
\begin{align}
&&b_1&<\ \ b_2 - b_{\delta_1}\\
&&\max(b_1, b_2 - b_{\delta_1})&= \ \ b_2 - b_{\delta_1}&\\
&&b&= \ \ b_2 - b_{\delta_1}& \te{from the definition of} b \label{eq:0}
\end{align}
And (in Case~\ref{case:1} still):
\begin{align}
&&b_1&<\ \ b_2 - b_{\delta_1}\\
&&0&<\ \ b_2 - b_{\delta_1} - b_1\\
&&\max(0, b_2 - b_{\delta_1} - b_1)&= \ \ b_2 - b_{\delta_1} - b_1&\label{eq:0_0}
\end{align}
Then we consider two subcases:
\begin{enumerate}[(i)]
\item $t < b_2 - b_{\delta_1}$\label{case:1_1}
\item $t \ge b_2 - b_{\delta_1}$\label{case:1_2}
\end{enumerate}
In Case \ref{case:1_1}, we get\\
  \begin{align}
    && t &<\ \ b_2 - b_{\delta_1} &\\
    && 0 &<\ \ b_2 - b_{\delta_1} - t &\\
    && \max(0,b_2 - b_{\delta_1} - t) &= \ \ b_2 - b_{\delta_1} - t&\label{eq:0_1}
  \end{align}
 Now from the definition of $\delta_t$,
  \begin{align}
&&b_{\delta_t} &=\ \ b_{\delta_1} + b_{\delta_2} + \max(0, b - t)\\
&&&=\ \ b_{\delta_1} + b_{\delta_2} + \max(0, b_2 - b_{\delta_1} - t)& \te{from~\eqref{eq:0}}\\
   &&&=\ \ b_{\delta_1} + b_{\delta_2} + b_2 - b_{\delta_1} - t & \te{from~\eqref{eq:0_1} } \\
   &&&= \ \ b_{\delta_2} + b_2- t\\
     && b_{\delta_t} + t&= \ \ b_{\delta_2} + b_2 - t + t\\
    &&&= \ \ b_{\delta_2} + b_2 \label{eq:1_1}
  \end{align} 
  Next, from the definition of $a'$\\
  \begin{align}
    && a' &=\ \  b_{\delta_2(a)} + a&\\
    && &=\ \  b_{\delta_2} + \max(0, b_2 - a) + a \label{eq:1_2}
  \end{align}
    And, from the definition of $a$
  \begin{align}
    && a &=\ \ b_{\delta_1(t)} + t&\\
    &&&  =\ \ b_{\delta_1} + \max(0, b_1 - t) + t\label{eq:1_3}
  \end{align}
  Then we have two further subcases:
  \begin{enumerate}[(I)]
  \item $t \ge b_1$,\label{eq:case1_1_1} or
  \item $t < b_1$\label{eq:case1_1_2}
  \end{enumerate}
  In case~\ref{eq:case1_1_1}:
  \begin{align}
    && t &\ge\ \ b_1 &\\
    && 0 &\ge\ \ b_1 - t &\\
    && \max(0, b_1 - t) &= \ \ 0&\\
    && a &=\ \ b_{\delta_1} + t & \te{from \eqref{eq:1_3}}\label{eq:1_3_1}\\
    && \max(0, b_2 - a) &=\ \ \max(0, b_2 - b_{\delta_1} - t) &\\
    &&  &=\ \ b_2 - b_{\delta_1} - t & \te{from~\eqref{eq:0_1}}\\
    &&  &=\ \ b_2 - a & \te{from~\eqref{eq:1_3_1}}
  \end{align}
  In case~\ref{eq:case1_1_2}:
  \begin{align}
    && t &<\ \ b_1 &\\
    && 0 &<\ \ b_1 - t &\\
    && \max(0, b_1 - t) &= \ \ b_1 - t&\\
    && a &=\ \ b_{\delta_1} + b_1 - t + t &\te{from~\eqref{eq:1_3}}\\
    && &=\ \ b_{\delta_1} + b_1 \label{eq:1_5}\\
    && \max(0, b_2 - a) &=\ \ \max(0, b_2 - b_{\delta_1} - b_1) &\\
    && &=\ \ b_2 - b_{\delta_1} - b_1 & \te{from~\eqref{eq:0_0}}\\
    && &=\ \ b_2 - a& \te{from~\eqref{eq:1_5}}\\
  \end{align}
  So in both cases~\ref{eq:case1_1_1} and ~\ref{eq:case1_1_2}, we get 
  \[\max(0, b_2 - a) = b_2 - a\]
  Thefore from~\eqref{eq:1_2}
  \begin{align}
    && a' &=\ \ b_{\delta_2} + b_2 - a + a \\
    &&  &=\ \ b_{\delta_2} + b_2 &\\
    &&  &=\ t + b_{\delta_t} &\te{from~\eqref{eq:1_1} } 
  \end{align}
  which concludes the proof for Case~\ref{case:1}~\ref{case:1_1}.\\

  \nt We continue with Case~\ref{case:1}~\ref{case:1_2}.\\
  From~\ref{case:1_2}:
  \begin{align}
    && t &\ge\ \ b_2 - b_{\delta_1} &\\
    && 0 &\ge\ \ b_2 - b_{\delta_1} - t &\\
    && \max(0,b_2 - b_{\delta_1} - t) &= \ \ 0&\label{eq:1_5_1}
  \end{align}
Now from the definition of $\delta_t$:
  \begin{align}
&&b_{\delta_t} &=\ \ b_{\delta_1} + b_{\delta_2} + \max(0, b - t)\\
&&&=\ \ b_{\delta_1} + b_{\delta_2} + \max(0, b_2 - b_1 - t)&\te{from~\eqref{eq:0}}\\
&&&=\ \ b_{\delta_1} + b_{\delta_2} &\te{from~\eqref{eq:1_5_1}}\\
&&b_{\delta_t} + t&=\ \ b_{\delta_1} + b_{\delta_2} + t \label{eq:1_5_2}
  \end{align} 
 Next, from~\ref{case:1} and~\ref{case:1_2}, by transitivity, we get
  \begin{align}
    && b_1 & \le \ \ t&\\
    && \max(0,b_1 - t) & =  \ \ 0\label{eq:1_6}
  \end{align} 
  And from the definition of $a$
\begin{align}
    && a &=\ \ b_{\delta_1(t)} + t\\
    &&  &=\ \ b_{\delta_1} + \max(0,b_1 - t) + t&\\
    &&  &=\ \ b_{\delta_1} + t&\te{from~\eqref{eq:1_6}}\label{eq:1_7}\\
    &&  & \ge \ \ b_{\delta_1} + b_2 -  b_{\delta_1} & \te{from Case~\ref{case:1_2} }\\
    &&  &\ge\ \ b_2\\\
    &&  0&\ge\ \ b_2 - a\\\
    && \max(0, b_2 - a) &=\ \ 0 \label{eq:1_7_0}&
  \end{align}
  Therefore from~\eqref{eq:1_2} and~\eqref{eq:1_7_0}
  \begin{align}
    && a' &=\ \ b_{\delta_2} + a \\
    && &=\ \ b_{\delta_2} + b_{\delta_1} + t&\te{from~\eqref{eq:1_7}}\\
    &&  &=\ b_{\delta_t} + t &\te{from~\eqref{eq:1_1} } 
  \end{align}
  which concludes the proof for Case~\ref{case:1}~\ref{case:1_2}.\\

  \nt We continute with Case~\ref{case:2}.\\
  In this case, we get
\begin{align}
&&b_1&\ge\ \ b_2 - b_{\delta_1}\\
&&\max(b_1, b_2 - b_{\delta_1})&= \ \ b_1&\\
&&b&= \ \ b_1& \te{from the definition of} b \label{eq:1_7_1}
\end{align}
And from~\ref{case:2} still, we derive
\begin{align}
&&b_1&\ge\ \ b_2 - b_{\delta_1}\\
&&0&\ge\ \ b_2 - b_{\delta_1} - b_1\\
&&\max(0, - b_{\delta_1} - b_1)&= \ \ 0&\label{eq:1_7_1_1}
\end{align}
As well as
\begin{align}
&&b_1&\ge\ \ b_2 - b_{\delta_1}&\\
&&b_1 + b_{\delta_1} &\ge\ \ b_2&\label{eq:1_7_1_2}
\end{align}
Next, we distinguish two subcases, namely
\begin{enumerate}[(a)]
\item $t < b_1$\label{case:2_1} and
\item $t \ge b_1$\label{case:2_2}
\end{enumerate}
We start with Case~\ref{case:2_1}.\\
In this case,
  \begin{align}
    && t &< \ \ b_1&\\
    && 0 &<\ \ b_1 - t  &\\
    && \max(0,b_1 - t) &= \ \ b_1 - t&\label{eq:1_7_2}
  \end{align}
And from the definition of $\delta_t$:
  \begin{align}
&&b_{\delta_t} &=\ \ b_{\delta_1} + b_{\delta_2} + \max(0, b - t)\\
&&&=\ \ b_{\delta_1} + b_{\delta_2} + \max(0, b_1 - t)&\te{from~\eqref{eq:1_7_1}}\\
&&&=\ \ b_{\delta_1} + b_{\delta_2} + b_1 - t&\te{from~\eqref{eq:1_7_2}}\\
&&b_{\delta_t} + t&=\ \ b_{\delta_1} + b_{\delta_2} + b_1 - t + t\\
&&&=\ \ b_{\delta_1} + b_{\delta_2} + b_1\label{eq:1_7_2_1}
  \end{align}
  Next, from the definition of $a$
  \begin{align}
    && a & = \ \ b_{\delta_1(t)} + t\\
    && & = \ \ \max(0, b_1 - t) + b_{\delta_1} + t \\
    && & = \ \ b_1 - t + b_{\delta_1} + t & \te{from~\eqref{eq:1_7_2} }\\
    && & = \ \ b_1 + b_{\delta_1} & \label{eq:1_7_3}
  \end{align}
So from~\eqref{eq:1_7_1_2}
  \begin{align}
    && a& \ge\ \  b_2&\\
    && 0& \ge\ \  b_2 - a&\\
    && \max(0, b_2 - a) & = \ \ 0&\\
    && b_{\delta_2} + \max(0, b_2 - a) & = \ \ b_{\delta_2}&\\
    && b_{\delta_2(a)} & = \ \ b_{\delta_2}&\\
    && b_{\delta_2(a)} + a & = \ \ b_{\delta_2} + a &\\
    && a' & = \ \ b_{\delta_2} + a &\te{from the definition of} a'\\
    && a' & = \ \ b_{\delta_2} + b_1 + b_{\delta_1} &\te{from~\eqref{eq:1_7_3}}\\
    && a' & = \ \ b_{\delta_t} + t &\te{from~\eqref{eq:1_7_2_1}}
  \end{align}
  which concludes the proof for Case~\ref{case:2}~\ref{case:2_1}.\\

  \nt We end with Case~\ref{case:2}~\ref{case:2_2}. 
In this case,
  \begin{align}
    && t &\ge \ \ b_1&\\
    && 0 &\ge\ \ b_1 - t  &\\
    && \max(0,b_1 - t) &= \ \ 0&\label{eq:1_8}
  \end{align}
And from the definition of $\delta_t$:
  \begin{align}
&&b_{\delta_t} &=\ \ b_{\delta_1} + b_{\delta_2} + \max(0, b - t)\\
&&&=\ \ b_{\delta_1} + b_{\delta_2} + \max(0, b_1 - t)&\te{from~\eqref{eq:1_7_1}}\\
&&&=\ \ b_{\delta_1} + b_{\delta_2}&\te{from~\eqref{eq:1_8}}\\
&&b_{\delta_t} + t&=\ \ b_{\delta_1} + b_{\delta_2} + t&\label{eq:1_8_1}
  \end{align}
  Next, from the definition of $a$
  \begin{align}
    && a & = \ \ b_{\delta_1(t)} + t\\
    && & = \ \ \max(0, b_1 - t) + b_{\delta_1} + t \\
    && & = \ \ b_{\delta_1} + t & \te{from~\eqref{eq:1_8} }\label{eq:1_8_2}
  \end{align}
    Now from~\ref{case:2_2}
  \begin{align}
    && b_1 + & \le \ \ t& \\
    && b_1 +  b_{\delta_1}& \le \ \ t +  b_{\delta_1}& \\
    && b_1 +  b_{\delta_1}& \le \ \ a& \te{from~\eqref{eq:1_8_2}}\\
    && b_2 & \le \ \ a& \te{from~\eqref{eq:1_7_1_2}, by transitivity}\\
    && b_2 - a  & \le \ \ 0& \\
    &&\max(0, b_2 - a)  & = \ \ 0& \\
    &&b_{\delta_2} + \max(0, b_2 - a)  & = \ \ b_{\delta_2}& \\
    &&b_{\delta_2(a)}  & = \ \ b_{\delta_2}& \\
    &&b_{\delta_2(a)} + a & = \ \ b_{\delta_2} + a& \\
    &&a' & = \ \ b_{\delta_2} + a&\te{from the definition of} a' \\
    &&& = \ \ b_{\delta_2} + b_{\delta_1} + t & \te{from~\eqref{eq:1_8_2}}\\
    &&& = \ \ b_{\delta_t} + t & \te{from~\eqref{eq:1_8_1}}
  \end{align}
  which concludes the proof for Case~\ref{case:2}~\ref{case:2_2}.
\end{proof} 

The following result states that the representation $\evalcitdbe{q}$ is correct:
\begin{restatable}{proposition}{correctTdbe}\label{prop:correct_tdbe}
  Let $G = \tup{\tdg, \fg, \valz}$ be a temporal graph and $q$ a TRPQ. Then the unfolding of $\evalcitdbe{q}$ is $\eval{q}$.
\end{restatable}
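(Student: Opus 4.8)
The plan is to follow the template of the proofs of Propositions~\ref{prop:correct_t}, \ref{prop:correct_d} and~\ref{prop:correct_td}: a structural induction on $q$ that establishes the two inclusions \ei for every $\tup{n_1,n_2,t,d}\in\eval{q}$ there is a tuple $\tup{n_1,n_2,\tau,\delta,b,e}\in\evalcitdbe{q}$ with $t\in\tau$ and $d\in\delta(t)$, and \eii for every $\tup{n_1,n_2,\tau,\delta,b,e}\in\evalcitdbe{q}$ and every $(t,d)$ with $t\in\tau$ and $d\in\delta(t)$ we have $\tup{n_1,n_2,t,d}\in\eval{q}$. Since the unfolding of a set of $\tuplestdbe$-tuples is the union of the unfoldings of its elements, and the unfolding of $\tup{n_1,n_2,\tau,\delta,b,e}$ is exactly $\{\tup{n_1,n_2,t,d}\mid t\in\tau,\ d\in\delta(t)\}$, these two statements together give that the unfolding of $\evalcitdbe{q}$ is $\eval{q}$.

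For the base cases ($\edgelabel$, $\edge^-$, $\pred$, $\le k$) the claim reduces to Proposition~\ref{prop:correct_td}: each tuple of $\evalcitdbe{q}$ is a tuple of $\evalcitd{q}$ padded with $\delta=[0,0]$, $b=b_\tau$, $e=e_\tau$, and for such a tuple $\delta(t)=[0,0]$ for every $t\in\tau$ (both $\max(0,b_\tau-t)$ and $\max(0,t-e_\tau)$ vanish when $t\in\tau$), so its unfolding coincides with that of the underlying $\evalcitd{q}$-tuple. The cases $\neg\node$ and $(?\query)$ are handled exactly as in the proof of Proposition~\ref{prop:correct_t}, and the cases $\query_1+\query_2$, $\query[m,n]$ and $\query[m,\_]$ follow immediately from the induction hypothesis, since on these operators $\evalcitdbe{\cdot}$ is defined by the same set-union formulas as $\eval{\cdot}$ (with $\query[m,n]$ and $\query[m,\_]$ reducing to the join case via $\query^{k}=\query^{k-1}/\query$).

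The only substantive cases are $\query_1/\query_2$ and $\tndelta$, and both are settled using Lemma~\ref{lemma:tjoin_tdbe}. For $\query_1/\query_2$: by the induction hypothesis the unfolding of $\evalcitdbe{\query_i}$ is $\eval{\query_i}$, hence for every tuple $\u_i\in\evalcitdbe{\query_i}$ over a node pair $(n,n')$ the temporal relation $R_i$ induced by $\u_i$ consists exactly of the pairs $(t,t+d)$ with $\tup{n,n',t,d}$ in the unfolding of $\u_i$, i.e.\ in $\eval{\query_i}$. Lemma~\ref{lemma:tjoin_tdbe} identifies $R_1\join R_2$ with the temporal relation induced by $\u_1\tjoin\u_2$ whenever $\u_1\sim\u_2$; unwinding the definition of $\join$ and of the semantics of $\query_1/\query_2$ (the existential over the intermediate node together with the split $d=d_1+d_2$, $t'=t+d_1$) then yields both inclusions. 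The one piece of bookkeeping is that the side condition $\u_1\sim\u_2$ holds precisely when the composition is nonempty: this is the identity $\tau=\dom{R_1\join R_2}$ (and $\tau=\emptyset$ otherwise) established inside the proof of Lemma~\ref{lemma:tjoin_tdbe}, exactly as for Lemma~\ref{lemma:tjoin_td}. For $\tndelta$, the first factor $\tup{n,n,\tdg,\delta,b_{\tdg},e_{\tdg}}$ induces $\{(t,t+d)\mid t\in\tdg,\ d\in\delta\}$ (here $\delta(t)=\delta$ for all $t\in\tdg$, since $t\in[b_{\tdg},e_{\tdg}]$), the second factor $\tup{n,n,\tdg,[0,0],b_{\tdg},e_{\tdg}}$ induces the identity on $\tdg$, and their composition is $\{(t,t+d)\mid t\in\tdg,\ d\in\delta,\ t+d\in\tdg\}$, which is exactly the restriction of $\eval{\tndelta}$ to the node pair $(n,n)$; one more application of Lemma~\ref{lemma:tjoin_tdbe} closes the case.

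The main obstacle has already been discharged before this statement: it is Lemma~\ref{lemma:tjoin_tdbe}, which (via Lemmas~\ref{lemma:tdbe_monot} and~\ref{lemma:tdbe_continuous}) establishes that cropped rectangles are closed under composition and that the arithmetic defining $\tau$, $\delta_1\oplus\delta_2$ and the new cropping values $b,e$ computes the right shape. Given that lemma, the present proposition is a routine induction, the only mildly delicate point being to match the per-node-pair reasoning and the $\sim$ side condition to the existential structure of the temporal join in the TRPQ semantics.
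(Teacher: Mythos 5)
Your proposal is correct and follows essentially the same route as the paper: a structural induction in which the base and union/repetition cases are immediate, the $\neg\node$ and $(?\query)$ cases mirror the $\tuplest$ proof, and the $\query_1/\query_2$ and $\tndelta$ cases are discharged by Lemma~\ref{lemma:tjoin_tdbe} in place of Lemma~\ref{lemma:tjoin_td}. Your added checks — that $\delta(t)=[0,0]$ for the padded base-case tuples and that $\delta(t)=\delta$ throughout $\tdg$ in the $\tndelta$ case, so the composition with the identity on $\tdg$ yields exactly $\eval{\tndelta}$ — are correct and slightly more explicit than the paper's own (very terse) argument.
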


\begin{proof}
  
  Let $G = \tup{\tdg \fg, \valz}$ be a temporal graph,
  and let $q$ be a TRPQ.\\
  To prove the result, it is sufficient to show that:
  \begin{enumerate}[(I)]
  \item for any $\tup{n_1,n_2,t, d} \in \eval{q}$,
    there are $\tau, \delta \in \intervals{\td}$ and $b,e \in \td$ such that
    \begin{enumerate}[(a)]
    \item $\tup{n_1,n_2, \tau, \delta, b, e} \in \evalcitdbe{q}$,
    \item $t \in \tau$, and
    \item $d \in \delta(t)$ (where $\delta(t)$ is defined in terms of $t, \delta, b$ and $e$, as explained above).
    \end{enumerate}
  \item for any $\tup{n_1,n_2,\tau, \delta, b, e} \in \evalcitdbe{q}$
    for any $t \in \tau $ and $d \in \delta(t)$,\\
    \qquad $\tup{n_1,n_2, t, d}$ is in $\eval{q}$.
  \end{enumerate}\ \\
    \nt Again, the proof is by induction on the structure of $q$.\\
    If $q$ is of the form $\edgelabel$, $\edge^-$, $\pred$, $\le k$, $(\query + \query)$, $\query[m,n]$ or $\query[m,\_]$,
    then~\ref{itemtdcorr_1} and~\ref{itemtdcorr_2} immediately follow from the definitions of
    $\eval{q}$ and 
    $\evalcitdbe{q}$.\\
    If $q$ is of the form $\neg \node$ or $(?\query)$, then the proof is nearly identical to the one already provided for $\evalcit{q}$.\\
    And if $q$ is of the form $\tndelta$ or $\query_1/\query_2$, then the proof is nearly identical to the one already provided for $\evalcitd{q}$,
    using Lemma~\ref{lemma:tjoin_tdbe} instead of~\ref{lemma:tjoin_td}. 
\end{proof}


\subsection{Complexity of query answering}
\label{sec:complexity}

We provide in this section complexity results for query answering under the different compact representations studied in the article.
The proofs leverage results proven in~\cite{arenas2022temporal} for non-compact answers.

We start by reproducing the decision problem investigated in~\cite{arenas2022temporal}, which will be intrumental:
 \begin{center}
 \fbox{
   \begin{tabular}{l}
     \pbma\\
 \begin{tabular}{ll}
   \textbf{Input}: & \! temporal graph $G$ over discrete time, TRPQ $q$, tuple $\u \in \tuples$\\
   \textbf{Decide}: & $\u \in \eval{q}$
 \end{tabular} \\
 \end{tabular}
}
\end{center}

\vspace{1em}
Our proofs are structured as follows:
\begin{itemize}
\item 
For membership, we leverage the fact that \pbma is in $\PSPACE$, which was proven in~\cite{arenas2022temporal}:
more precisely, we show in Section~\ref{sec:qa_membership} that \pbmt, \pbmd, and \pbmtd can each be reduced to a finite number of independent calls to an oracle for \pbma.
\item  
For hardness, the results trivially immediately from the fact that $\pbma$ is $\PSPACE$-hard,
which was also proven in~\cite{arenas2022temporal}.
We show this in Section~\ref{sec:qa_hardness}, with a basic reduction from $\pbma$ to each of the $4$ other problems.
We also note that the reduction (from QBF to \pbma) provided in~\cite{arenas2022temporal} uses a graph $G$ of fixed size, with the only exception of the temporal domain $\tdg$,
and that the graph $G'$ that we use in our reductions only extends the size of $G$ by a constant factor.
So this property is preserved for our four problems.
\end{itemize}

\subsubsection{Membership}
\label{sec:qa_membership}

If $G = \tup{\tdg \fg, \valz}$ is a temporal graph and $q$ a TRPQ,
we use $\bound{G}{q}$ for the set of all interval boundaries that appear in $G$ and $q$, i.e.
\begin{align*}
  \bound{G}{q} = & \bigcup \Big\{\{b_\delta, e_\delta\} \mid \tndelta \te{appears in} q \Big\}\ \cup  \{b_{\tdg}, e_{\tdg}\}\ \cup \\
                 & \bigcup \Big\{\{b_\tau, e_\tau\} \mid \tau \in \valz{f} \te{for some triple} f \in \fg \Big\}
\end{align*}
Note that $\bound{G}{q}$ is finite.\\

\nt Next, if $Q \subseteq \rat$, we use 
$Q^{+-}$ to denote the least superset of $Q$ that is closed under addition and subtraction.\\
\nt We can now make the two following observations:
\begin{lemma}\label{lemma:closed_boundaries_t}
  Let $G$ be a temporal graph, let $q$ be a TRPQ, let $\u = \tup{n_1, n_2, t, d} \in \eval{q}$, let $Q = \bound{G}{q} \cup \{d\}$, and let
 $\tau$ be the largest interval s.t. $t \in \tau$ and $\tup{n_1, n_2, t', d} \in  \eval{q}$ for all $t' \in \tau$.
  Then \\
   \[b_\tau \in Q^{+-} \te{and} e_\tau \in Q^{+-}\]
 \end{lemma}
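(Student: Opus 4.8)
The plan is to route the argument through the inductive representation $\evalcitdbe{q}$ of Section~\ref{sec:evalbe_def}, whose correctness (Proposition~\ref{prop:correct_tdbe}) lets us describe $\eval{q}$ as the unfolding of a set of cropped‑rectangle tuples. Using $\tuplestdbe$ rather than $\tuplest$ (or $\tuplestd$) is essential here: the time intervals that $\tndelta$ contributes to $\evalcit{q}$ are singletons $[t_1,t_1]$ for \emph{arbitrary} $t_1\in\tdg$, so only their union over all choices is well behaved; a single cropped rectangle, by contrast, already bundles together all the ways a composition can split a distance, so it will be enough to control the boundary values occurring in each \emph{individual} tuple of $\evalcitdbe{q}$.

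Concretely, I would first prove, by structural induction on $q$, the following strengthening: every tuple $\tup{n_1,n_2,\tau,\delta,b,e}\in\evalcitdbe{q}$ satisfies $\{b_\tau,e_\tau,b_\delta,e_\delta,b,e\}\subseteq\bound{G}{q}^{+-}$. Two elementary facts drive the induction. (i) $\bound{G}{q}^{+-}$ contains $0$ (e.g.\ as $b_{\tdg}-b_{\tdg}$) and, being closed under $+$ and $-$, is closed under binary $\max$ and $\min$ and hence under the pattern $\max(0,\cdot)$ that appears in Eq.~\eqref{eq:fourth-repr-forumula} and in the definitions of $\delta'_1$, $\tau=(((\tau_1\oplus\delta'_1)\cap\tau_2)\ominus\delta'_1)\cap\tau_1$, $b=\max(b_1,b_2-b_{\delta_1})$ and $e=\min(e_1,e_2-e_{\delta_1})$ for $\u_1\tjoin\u_2$. (ii) $\bound{G}{\cdot}$ is monotone on subqueries ($\bound{G}{q'}\subseteq\bound{G}{q}$ when $q'$ occurs in $q$), so the inductive hypotheses for the operands are available already inside $\bound{G}{q}^{+-}$. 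The base cases ($\edge$, $\node$, $\pred$, $\le k$, $\tndelta$) follow by direct inspection — for $\tndelta$ one unfolds the single $\tjoin$ with $\tau_1=\tau_2=\tdg$, $\delta_1=\delta$, $\delta_2=[0,0]$, $b_1=b_{\tdg}$, $e_1=e_{\tdg}$ — and the cases $\query_1+\query_2$, $\neg\node$ and $?\query$ are immediate, while $\query_1/\query_2$ is exactly an application of (i)–(ii) to the $\tjoin$ formula. For $\query[m,n]$ and $\query[m,\_]$ one reduces via a nested induction on $k$ over $\query^k=\query/\cdots/\query$, using $\bound{G}{\query^k}=\bound{G}{\query}$ and the join case.

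To conclude, fix $\u=\tup{n_1,n_2,t,d}\in\eval{q}$. By Proposition~\ref{prop:correct_tdbe},
\[
  \{\,t'\mid \tup{n_1,n_2,t',d}\in\eval{q}\,\} \;=\; \bigcup\bigl\{\, \{\,t'\in\tau'\mid d\in\delta'(t')\,\} \;\mid\; \tup{n_1,n_2,\tau',\delta',b',e'}\in\evalcitdbe{q} \,\bigr\}.
\]
By Eq.~\eqref{eq:fourth-repr-forumula}, for a fixed $d$ the condition $d\in\delta'(t')$ rewrites (when $b_{\delta'}\le d\le e_{\delta'}$; it is unsatisfiable otherwise) as $b'+b_{\delta'}-d\le t'\le e'+e_{\delta'}-d$, so each member of the union equals the interval $\tau'\cap[\,b'+b_{\delta'}-d,\ e'+e_{\delta'}-d\,]$ with delimiters inherited from $\tau'$ and $\delta'$; by the strengthening, its endpoints $\max(b_{\tau'},b'+b_{\delta'}-d)$ and $\min(e_{\tau'},e'+e_{\delta'}-d)$ lie in $(\bound{G}{q}\cup\{d\})^{+-}=Q^{+-}$. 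Since $\evalcitdbe{q}$ is finite, the displayed set is a finite union of intervals with endpoints in $Q^{+-}$, and the largest interval $\tau$ with $t\in\tau$ and $\tup{n_1,n_2,t',d}\in\eval{q}$ for all $t'\in\tau$ is one of its connected components; the left (resp.\ right) endpoint of a connected component of a finite union of intervals coincides with that of one of the constituent intervals, hence $b_\tau,e_\tau\in Q^{+-}$.

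I expect the main obstacle to be the strengthening, and within it the $\tjoin$ case specifically: one has to verify that every quantity in $\u_1\tjoin\u_2$ — the auxiliary $\delta'_1$, the time interval $\tau$, and the cropping values $b,e$ — is obtained from the twelve boundary values of $\u_1$ and $\u_2$ using only addition, subtraction, $\max$ and $\min$, so that closure of $\bound{G}{q}^{+-}$ under these operations applies verbatim. A secondary point to handle carefully is that $\{t'\mid\tup{n_1,n_2,t',d}\in\eval{q}\}$ is a finite union of intervals (needed for the connected‑component step), which I would derive from finiteness of $\evalcitdbe{q}$; over discrete time it is in any case immediate.
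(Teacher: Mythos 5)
The paper states Lemma~\ref{lemma:closed_boundaries_t} \emph{without proof}, so there is no official argument to compare against; your proposal fills that gap, and it does so along what is almost certainly the intended route: push the claim through the inductive representation $\evalcitdbe{q}$, observe that every boundary produced by $\tjoin$ and the other operators is built from the operands' boundaries using only $+$, $-$, $\max$ and $\min$ (under which $\bound{G}{q}^{+-}$ is trivially closed, since $\max(a,b)\in\{a,b\}$ and $0=b_{\tdg}-b_{\tdg}$), and then read off the section of the unfolding at the fixed distance $d$. Your rewriting of $d\in\delta'(t')$ as $b'+b_{\delta'}-d\le t'\le e'+e_{\delta'}-d$ is correct, and the choice of $\tuplestdbe$ over $\tuplest$ or $\tuplestd$ is the right one, since the inductive definitions for the latter two emit singleton time intervals at arbitrary points of $\tdg$ (for $\tndelta$ and for $\tjoin$ respectively), which gives no control over individual tuples. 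So the overall strategy is sound.

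Two points need repair. First, the base case $\le k$ does \emph{not} follow by inspection: $\bound{G}{q}$ as defined collects only the boundaries of the $\tndelta$-intervals, of $\tdg$, and of the graph intervals, but not the constants $k$ of the $\le k$ filters; since $\evalcit{\le k}$ produces an interval with right endpoint $k$, your strengthening fails there, and indeed the lemma itself is false as stated for such queries (take $\tdg=[0,10]$, a single edge valid on $[0,10]$, $d=0$ and $k=1/3$: then $Q^{+-}=10\integers\not\ni 1/3=e_\tau$). Both the lemma and your induction need $\bound{G}{q}$ to be augmented with these constants; this is an oversight inherited from the paper, but your "direct inspection" silently passes over it. Second, you should not lean on finiteness of $\evalcitdbe{q}$, which the paper asserts but never establishes for $\query[m,\_]$ (the definition there is an infinite union over $k\ge m$). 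It is cleaner, and sufficient, to note that by your strengthening every interval occurring in the $d$-section has both endpoints in $Q^{+-}\cap\tdg$; since $Q$ is finite, $Q^{+-}$ is a finitely generated subgroup of $(\rat,+)$, hence cyclic and discrete, so $Q^{+-}\cap\tdg$ is finite and the section is a union of intervals drawn from a finite pool. This makes your connected-component step go through even if the tuple set is infinite.
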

\begin{lemma}\label{lemma:closed_boundaries_d}
  Let $G$ be a temporal graph, let $q$ be a TRPQ, let $\u = \tup{n_1, n_2, t, d} \in \eval{q}$, let $Q = \bound{G}{q} \cup \{t\}$ and let
 $\delta$ be the largest interval s.t. $d \in \delta$ and $\tup{n_1, n_2, t, d'} \in  \eval{q}$ for all $d' \in \delta$.
  Then \\
   \[b_\delta \in Q^{+-} \te{and} e_\delta \in Q^{+-}\]
 \end{lemma}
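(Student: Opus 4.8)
The plan is to route both lemmas through the cropped-rectangle representation $\evalcitdbe{q}$, using two facts already available: $\evalcitdbe{q}$ is finite (even over dense time) and its unfolding equals $\eval{q}$ (Proposition~\ref{prop:correct_tdbe}). Write $Q_0 = \bound{G}{q}$. This set is finite and nonempty (it contains $b_{\tdg}$ and $e_{\tdg}$), so $0 = b_{\tdg} - b_{\tdg} \in Q_0^{+-}$, and moreover $Q_0^{+-}$ is closed not only under $+$ and $-$ but also under $\max$ and $\min$, since $\max(a,b), \min(a,b) \in \{a,b\}$. I also use that $(Q_0^{+-} \cup \{v\})^{+-} = (Q_0 \cup \{v\})^{+-}$ for any value $v$.

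First I would establish an auxiliary claim: \emph{every tuple $\tup{n_1,n_2,\tau,\delta,b,e}$ occurring in $\evalcitdbe{q}$ has all of $b_\tau, e_\tau, b_\delta, e_\delta, b, e$ in $Q_0^{+-}$}. This is a structural induction on $q$ following the inductive definition of $\evalcitdbe{q}$ in Section~\ref{sec:evalbe_def}. In the base cases ($\edgelabel$, $\edge^-$, $\pred$, $\le k$) every numeric value is a boundary of $\tdg$, a boundary of a validity interval of $G$, the constant $k$ of a $\le k$ subexpression, or $0$, hence in $Q_0^{+-}$ (here, as is natural, we let $\bound{G}{q}$ also collect the constants $k$ of $\le k$ subqueries). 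For $\query_1 + \query_2$, $\query[m,n]$ and $\query[m,\_]$ the representation is just a union of representations of subqueries (or of powers $\query^k$), which introduces no new values, and $\bound{G}{\query_i} \subseteq Q_0$. The only genuine case is the temporal join $\query_1/\query_2$, and $\tndelta$, which is itself defined through a $\tjoin$: inspecting the definition of $\u_1 \tjoin \u_2$, the interval $\delta'_1$, the interval $\tau$, the interval $\delta_1 \oplus \delta_2$, and the new values $b$ and $e$ are all obtained from the numeric data of $\u_1$ and $\u_2$ by finitely many applications of $+$, $-$, $\oplus$, $\ominus$, $\cap$, $\max$ and $\min$; each such operation keeps all endpoints inside $Q_0^{+-}$, so the induction goes through. (Only endpoint \emph{values} matter here; the open/closed delimiters are irrelevant to this claim.)

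Next I would derive the lemmas. For Lemma~\ref{lemma:closed_boundaries_d}, fix the answer $\tup{n_1,n_2,t,d} \in \eval{q}$ and put $Q = Q_0 \cup \{t\}$. By Proposition~\ref{prop:correct_tdbe}, the set $S = \{d' \mid \tup{n_1,n_2,t,d'} \in \eval{q}\}$ equals $\bigcup_C \delta_C(t)$, the union ranging over the finitely many tuples $C = \tup{n_1,n_2,\tau_C,\delta_C,b_C,e_C}$ of $\evalcitdbe{q}$ with node-pair $(n_1,n_2)$ and $t \in \tau_C$ (tuples with $t \notin \tau_C$ contribute the empty set). By formula~\eqref{eq:fourth-repr-forumula}, $\delta_C(t)$ is an interval whose endpoints are $b_{\delta_C} + \max(0, b_C - t)$ and $e_{\delta_C} - \max(0, t - e_C)$, so by the claim both lie in $(Q_0^{+-} \cup \{t\})^{+-} = Q^{+-}$. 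Hence $S$ is a finite union of intervals all of whose endpoints are in $Q^{+-}$. The maximal interval $\delta$ with $d \in \delta \subseteq S$ is precisely the connected component of $S$ containing $d$; since every interval of a finite decomposition that meets a component is contained in it and finitely many of them cover the component, the endpoints of the component coincide with endpoints of constituent intervals, whence $b_\delta, e_\delta \in Q^{+-}$. Lemma~\ref{lemma:closed_boundaries_t} is symmetric: with $d$ fixed, one rearranges the condition $d \in \delta_C(t')$ to see that $\{t' \mid d \in \delta_C(t')\}$ equals $\tau_C \cap [\, b_C + b_{\delta_C} - d,\ e_C + e_{\delta_C} - d\,]$ when $d$ lies in $\delta_C$ and is empty otherwise; its endpoint values are again in $(Q_0 \cup \{d\})^{+-}$ by the claim, and one concludes as above with the maximal time-interval $\tau$ being a component of the finite union over $C$.

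The obstacle is bookkeeping rather than conceptual. Verifying the auxiliary claim for $\tjoin$ requires tracking the nested $\oplus/\ominus/\cap$ operations carefully, and the cross-section computation proving Lemma~\ref{lemma:closed_boundaries_t} has $2^6$ delimiter combinations; as elsewhere in the paper (cf.\ Lemma~\ref{lemma:tjoin_td}) I would treat a single representative case in full and note that the endpoint \emph{values} are identical in the others. The only substantive dependency is finiteness of $\evalcitdbe{q}$ when $q$ contains the closure operator $\query[m,\_]$ — this is exactly the finiteness guarantee claimed for $\tuplestdbe$, so it is available, but the proof should invoke it explicitly.
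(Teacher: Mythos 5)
The paper states Lemma~\ref{lemma:closed_boundaries_d} (and its twin Lemma~\ref{lemma:closed_boundaries_t}) without any proof --- they are introduced as ``observations'' --- so there is no official argument to compare yours against. Judged on its own, your route through the cropped-rectangle representation is sound and is arguably the right one: in $\evalcitdbe{q}$ the interval endpoints and crop values of every tuple depend only on $\bound{G}{q}$ and not on the particular time point, which is exactly what lets you adjoin $t$ to $Q$ only at the very last step, when you evaluate the cross-section $\delta_C(t)$. Your auxiliary claim, the identification of the slice $S=\{d' \mid \tup{n_1,n_2,t,d'}\in\eval{q}\}$ with $\bigcup_C \delta_C(t)$ via Proposition~\ref{prop:correct_tdbe}, and the connected-component argument at the end are all correct. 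Your amendment of $\bound{G}{q}$ is not a cosmetic convenience but a necessary correction: as defined in the paper the set omits the constants $k$ of $\le k$ subqueries, and a query such as $\tn{[0,10]}/(\le k)$ over $\tdg=[0,100]$ with $k=7.5$ yields, at $t=0$, the maximal distance interval $[0,7.5]$ whose right endpoint is not in $(\bound{G}{q}\cup\{t\})^{+-}=10\integers$; so the lemma is false as literally stated and your fix must be made explicit.

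Two points need more care than your sketch gives them. First, the dependency on finiteness of $\evalcitdbe{q}$ is weaker ground than you suggest: for $\query[m,\_]$ the set $\bigcup_{k\ge m}\evalcitdbe{\query^k}$ can contain infinitely many syntactically distinct tuples (the distance intervals $\delta_1\oplus\dots\oplus\delta_k$ are never re-intersected with $\tdg$ in the definition of $\tjoin$, so they grow without bound), and the paper's finiteness claim for this case is itself only asserted. Fortunately your argument does not actually need it: $Q^{+-}$ is a discrete subgroup of $\rat$ (contained in $\frac{1}{L}\integers$ for a common denominator $L$ of the finitely many generators), every answer satisfies $t\in\tdg$ and $t+d\in\tdg$ so the relevant distances are bounded, and hence only finitely many distinct intervals $\delta_C(t)$ with endpoints in $Q^{+-}$ can occur in the union; the component argument then goes through verbatim. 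I would restate the final step in this form rather than cite finiteness of $\evalcitdbe{q}$. Second, the $(?\query)$ case of your auxiliary induction cannot be run on the paper's definition as written: $\evalcitdbe{(?\query)}$ is defined by a detour through $\evalcit{\query}$, whose tuples violate your claim (e.g.\ $\evalcit{\tndelta}$ contains time intervals $[t_1,t_1]$ for arbitrary $t_1\in\tdg$). You need to replace that detour by the direct projection of $\evalcitdbe{\query}$, sending each tuple $\tup{n,n',\tau,\delta,b,e}$ to $\tup{n,n,\tau,[0,0],b_\tau,e_\tau}$, under which the induction closes.
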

 Next, let $\sqsubseteq$ denote set inclusion lifted to pairs of intervals, i.e.
 \[(\tau_1, \delta_1) \sqsubseteq (\tau_2, \delta_2) \te{iff} \tau_1 \subseteq \tau_2 \te{and} \delta_1 \subseteq \delta_2\]
 The following is an immediate consequence of Lemmas~\ref{lemma:closed_boundaries_t} and~\ref{lemma:closed_boundaries_d}:
 \begin{corollary}\label{cor:closed_boundaries_td}
  Let $G$ be a temporal graph, let $q$ be a TRPQ, let $\u = \tup{n_1, n_2, t, d} \in \eval{q}$, let $Q = \bound{G}{q} \cup \{t, d\}$, let
  $P = \{(t,d) \mid \tup{n_1, n_2, t, d} \in \eval{q}\}$, and let
    $(\tau, \delta) \in \max_\sqsubseteq \{(\tau', \delta') \in \intervals{\td} \times \intervals{\td} \mid t \in \tau \te{and} d \in \delta\}$.
  Then \\
   \[\{b_\tau, e_\tau, b_\delta, e_\delta\} \subseteq Q^{+-}\]
 \end{corollary}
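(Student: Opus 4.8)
The plan is to derive Corollary~\ref{cor:closed_boundaries_td} directly from the two preceding lemmas, so the only real work is bookkeeping about which interval each lemma controls and verifying that the maximal rectangle $(\tau,\delta)$ decomposes into a horizontal slice and a vertical slice to which Lemmas~\ref{lemma:closed_boundaries_t} and~\ref{lemma:closed_boundaries_d} apply.

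First I would unpack the hypothesis: let $\u = \tup{n_1,n_2,t,d} \in \eval{q}$, let $P = \{(t',d') \mid \tup{n_1,n_2,t',d'} \in \eval{q}\}$, and let $(\tau,\delta)$ be a $\sqsubseteq$-maximal element of the set of rectangles $\tau'\times\delta'$ contained in $P$ with $t \in \tau'$ and $d \in \delta'$. The key observation is that, since $\tau\times\delta \subseteq P$, the horizontal segment $\tau \times \{d\}$ lies in $P$, i.e.\ $\tup{n_1,n_2,t',d} \in \eval{q}$ for every $t' \in \tau$; symmetrically, the vertical segment $\{t\}\times\delta$ lies in $P$. Moreover, by maximality of the rectangle, I claim $\tau$ must be exactly the largest interval $\tau^\ast$ such that $t \in \tau^\ast$ and $\tup{n_1,n_2,t',d}\in\eval{q}$ for all $t'\in\tau^\ast$ — otherwise one could enlarge $\tau$ while keeping $\delta=\{d\}$ a valid (degenerate) choice, contradicting maximality; and likewise $\delta$ is the largest interval $\delta^\ast$ with $d\in\delta^\ast$ and $\tup{n_1,n_2,t,d'}\in\eval{q}$ for all $d'\in\delta^\ast$. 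This identification is the one slightly delicate point, because $\sqsubseteq$-maximality is about rectangles, not about each coordinate interval in isolation; I would argue it carefully by noting that replacing $\tau$ with $\tau^\ast$ (and $\delta$ with a single point) still yields a rectangle inside $P$, hence maximality forces $\tau \supseteq$ that rectangle's first component in a suitable sense, and then combine the two directions.

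Once $\tau = \tau^\ast$ and $\delta = \delta^\ast$ are established, I simply apply Lemma~\ref{lemma:closed_boundaries_t} with the parameter $d$ fixed (its set is $\bound{G}{q}\cup\{d\}$) to conclude $b_\tau, e_\tau \in (\bound{G}{q}\cup\{d\})^{+-}$, and apply Lemma~\ref{lemma:closed_boundaries_d} with $t$ fixed to conclude $b_\delta, e_\delta \in (\bound{G}{q}\cup\{t\})^{+-}$. Setting $Q = \bound{G}{q}\cup\{t,d\}$, both of these superset-closures are contained in $Q^{+-}$ since $Q^{+-}$ is monotone in the generating set, so $\{b_\tau,e_\tau,b_\delta,e_\delta\}\subseteq Q^{+-}$, which is exactly the claim.

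I expect the only genuine obstacle to be the argument that the coordinate intervals of a $\sqsubseteq$-maximal rectangle coincide with the coordinatewise maximal segments through $(t,d)$ — i.e.\ ruling out the possibility that the maximal rectangle is ``L-shaped-limited'' and has, say, a short $\tau$ forced by the need to keep a wide $\delta$. The resolution is that the lemma statements only ask about the interval obtained by \emph{fixing} the other coordinate, and a degenerate rectangle (one side a single point) is always admissible in the set over which we maximize, so the maximal $\tau$ can never be strictly shorter than $\tau^\ast$; everything else is immediate from monotonicity of the $(\cdot)^{+-}$ operator.
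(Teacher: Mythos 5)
Your overall strategy---identify the sides of the $\sqsubseteq$-maximal rectangle with the coordinatewise maximal segments through $(t,d)$ and then invoke Lemmas~\ref{lemma:closed_boundaries_t} and~\ref{lemma:closed_boundaries_d}---is the natural way to read the paper's one-line justification (the paper offers nothing beyond ``immediate consequence'' of those two lemmas). But the step you yourself flag as delicate genuinely fails. What is true is only $\tau \subseteq \tau^\ast$ and $\delta \subseteq \delta^\ast$; equality can fail. Take $P = ([0,2]\times[0,1]) \cup ([0,1]\times[0,2])$ and $(t,d)=(1/2,1/2)$: the rectangle $([0,2],[0,1])$ is $\sqsubseteq$-maximal among rectangles in $P$ through $(t,d)$, yet $\delta^\ast = [0,2] \neq [0,1] = \delta$. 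Your argument for the identification---that $(\tau^\ast,\{d\})$ is an admissible (degenerate) rectangle, hence maximality forces $\tau \supseteq \tau^\ast$---misuses $\sqsubseteq$-maximality: $(\tau^\ast,\{d\})$ does not dominate $(\tau,\delta)$ under $\sqsubseteq$ because its second component is a single point, so its membership in the candidate set contradicts nothing about $(\tau,\delta)$ being maximal.

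The problem is not just presentational, and I do not think a cleverer reduction to the two lemmas can close it, because the statement itself (under the reading you and I both adopt, namely $\sqsubseteq$-maximal rectangles contained in $P$ and containing $(t,d)$) appears to be false over dense time. In the region of Example~\ref{ex:parallelogram}, $P = \{(t',d') \mid t' \in [0,2],\ d' \in [0,2],\ t'+d' \in [1,3]\}$, take $(t,d)=(0.3,0.9)$. The rectangle $[0.15,2]\times[0.85,1]$ lies in $P$, contains $(t,d)$, and is $\sqsubseteq$-maximal (its lower-left corner sits on the cropping line $t'+d'=1$, so neither lower boundary can be decreased, and the upper boundaries are pinned by $t'\le 2$ and $t'+d'\le 3$). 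Here $Q^{+-}$ is the additive group generated by $\{0,1,2,3,0.3,0.9\}$, which is $\frac{1}{10}\integers$, and $0.15 \notin \frac{1}{10}\integers$. Maximal rectangles can slide their corner along a slope~$-1$ cropping line, so their boundaries are not pinned to $Q^{+-}$. Either the set being maximized over must be read differently (e.g.\ as the ``cross'' $\tau^\ast\times\{d\} \cup \{t\}\times\delta^\ast$, for which your final paragraph---two applications of the lemmas plus monotonicity of $(\cdot)^{+-}$---is correct and genuinely immediate), or the corollary and the membership proof of Proposition~\ref{prop:membership_td} that relies on it need repair. This is worth raising with the authors rather than patching silently.
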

We can now prove our membership results:
\begin{proposition}\label{prop:membership_t}
 \pbmt is in \PSPACE.
\end{proposition}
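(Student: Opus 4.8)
The plan is to reduce \pbmt to finitely many (non-adaptive) membership tests of the form ``$\tup{n_1,n_2,t,d}\in\eval{q}$'', each decided by an oracle call to \pbma; since \pbma is in \PSPACE by~\cite{arenas2022temporal} (and the same bound holds over dense time --- by a standard argument restricting to the grid generated by the boundaries and the query point, which rescales to a discrete instance), and since a polynomial-space algorithm making polynomial-space oracle calls stays in \PSPACE, this yields the claim. Fix the input $\u=\tup{n_1,n_2,\tau,d}\in\tuplest$, $G$ and $q$, and put $S=\{t\in\tdg\mid\tup{n_1,n_2,t,d}\in\eval{q}\}$. Unwinding the definitions, $\u$ is a compact answer iff $\tau\subseteq S$ and $\tau$ is inclusion-maximal with this property, i.e.\ iff $\tau$ is exactly one of the connected components (maximal subintervals) of $S$.

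The structural facts we need are that $S$ is a finite union of intervals whose breakpoints lie in a simple set. Finiteness is immediate from Proposition~\ref{prop:correct_t}: $\eval{q}$ is the unfolding of the finite set $\evalcit{q}\subseteq\tuplest$, hence $S=\bigcup\{\tau'\mid\tup{n_1,n_2,\tau',d}\in\evalcit{q}\}$. By Lemma~\ref{lemma:closed_boundaries_t}, the two endpoints of every connected component of $S$ belong to $B:=(\bound{G}{q}\cup\{d\})^{+-}$. As $\bound{G}{q}$ is finite, $B$ is a cyclic subgroup of $(\td,+)$: a generator $g$ of $B$ is computable in polynomial time, membership in $B$ is a divisibility test, and $B\cap\tdg$ is finite because $\tdg$ is bounded (over discrete time $g$ is a positive integer). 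The key consequence is that $S$ is constant on each cell delimited by two consecutive elements of $B\cap\tdg$: no component endpoint of $S$ can fall strictly inside such a cell, so its (relatively) open part is either entirely contained in $S$ or entirely disjoint from it.

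The decision procedure is then: first reject unless $\tau\subseteq\tdg$ and $b_\tau,e_\tau\in B$ (both necessary, since a component of $S$ has endpoints in $B$). Then verify $\tau\subseteq S$ by iterating over the consecutive pairs $p<p'$ of elements of $B\cap\tdg$ with $b_\tau\le p<p'\le e_\tau$, and for each cell whose open part $(p,p')$ is nonempty querying the oracle on a polynomially representable representative of $(p,p')$ (the midpoint, or $p+g$ in the discrete case); additionally query $b_\tau$ resp.\ $e_\tau$ itself whenever it belongs to $\tau$. Finally verify maximality: $\tau$ cannot be extended below $b_\tau$ iff $b_\tau=b_{\tdg}$, or $\tau$ is left-open and $b_\tau\notin S$, or $\tau$ is left-closed and the cell immediately below $b_\tau$ (the open interval $(b_\tau-g,b_\tau)$, or the single point $b_\tau-g$ in the discrete case) is disjoint from $S$; symmetrically at $e_\tau$. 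Accept iff all answers are consistent with $\tau$ being a component of $S$. Although $B\cap\tdg$ may be exponentially large, each cell and each query point has a polynomial-size description, so the loop runs with binary counters in polynomial space and reuses the polynomial space consumed by each \pbma call.

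I expect the main obstacle to be not depth but bookkeeping: pinning down the maximality test requires a careful case split on whether $\tau$ and $\tdg$ are open or closed at each end, and the discrete and dense settings have to be treated separately where ``the point just below $b_\tau$'' is concerned (a single integer in the discrete case, an interval of rationals in the dense one). The one genuinely load-bearing ingredient is Lemma~\ref{lemma:closed_boundaries_t}: it is what guarantees that $S$ is constant between consecutive elements of $B$, which in turn is what makes a single representative per cell sound and keeps the number of oracle calls finite. The remaining ingredients --- computing $g$, enumerating $B\cap\tdg$ with counters, the observation that \PSPACE is closed under polynomial-space Turing reductions, and the discretization argument for the dense case --- are routine.
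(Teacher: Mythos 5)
Your overall strategy is the paper's: reduce \pbmt to finitely many independent \pbma oracle calls, using Lemma~\ref{lemma:closed_boundaries_t} to argue that the connected components of $S=\{t \mid \tup{n_1,n_2,t,d}\in\eval{q}\}$ have endpoints on a polynomially describable grid, so that one representative per grid cell (plus a test just outside each end of $\tau$ for maximality) suffices, the whole loop running with binary counters in polynomial space. Your grid is the subgroup generated by $\bound{G}{q}\cup\{d\}$ where the paper uses the supergroup $\ratk$ with $k$ the product of denominators, and your maximality test is an explicit case split where the paper queries the two points $b_{\infi},e_{\supr}$ of the refined grid $\rattk$ just outside $\tau$; these are cosmetic differences. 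You are also right, and more explicit than the paper, that \pbma is stated over discrete time and needs a rescaling remark before it can be invoked on rational query points.

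There is, however, one step that fails as written: your containment check for $\tau\subseteq S$ queries a representative of each \emph{open} cell $(p,p')$ and the endpoints $b_\tau,e_\tau$ of $\tau$, but never the \emph{interior} grid points $p$ with $b_\tau<p<e_\tau$. The fact that $S$ contains both open cells adjacent to $p$ does not imply $p\in S$: over dense time $S$ may be, e.g., $(0,1)\cup(1,2)$ (two open intervals in $\evalcit{q}$, endpoints on the grid, fully consistent with Lemma~\ref{lemma:closed_boundaries_t}), and your procedure would wrongly accept $\tau=(0,2)$. The paper avoids this by testing $\tau\cap\rattk$, which contains all grid points of $\ratk$ inside $\tau$ as well as the cell midpoints. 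The fix is immediate---also query every element of $B\cap\tau$---but as stated the procedure is unsound, and the "constant on each cell" property you rely on holds only for the open part of each cell, not for the cell together with its endpoints.
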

\begin{proof}\ \\
  Let $G$ be a temporal graph, let $q$ be a TRPQ, and let $\u = \tup{n_1, n_2, \tau, d} \in \tuplest$.\\
  We use $Q$ for $\bound{G}{q} \cup \{d\}$, and $T$  for the set defined by 
    \[T = \{t \mid \tup{n_1, n_2, t, d} \in \eval{q}\}\]
  
 \nt We also use $k$ to denote the product of the denominators of all numbers in $Q$, i.e.
  \[k = \Pi \{j \mid \frac{i}{j} \in Q \te{for some i} \in \integers \}\]
  Note that $k$ (encoded in binary) can be computed in time polynomial (therefore using space polynomial) in the cumulated sizes of $G, q$ and $\u$.\\
  We also use $\ratk$ (resp. $\rattk$) for the set of all multiples of $\frac{1}{k}$ (resp. $\frac{1}{2k}$), i.e
  \[\ratk = \{\frac{i}{k} \mid i \in \integers\}\]
  and
  \[\rattk = \{\frac{i}{2k} \mid i \in \integers\}\]
  Note that
  \[Q^{+-} \subset \ratk \subset \rattk\]

  \nt Now let $\tau'$ be the largest interval such that $\tau \subseteq \tau' \subseteq T$.\\
  Recall that by assumption, $\tau \neq \emptyset$.\\
  Under this assumption,
  $\tup{G,q,u}$ is an instance of \pbmt iff $\tau = \tau'$.\\
  We show that  $\tau = \tau'$ can be decided using space polynomial in the cumulated size of (the encodings of) $G$, $q$ and $u$.\\

  \nt First, 
  fom Lemma~\ref{lemma:closed_boundaries_t},
  we observe that $b_\tau \not\in \ratk$ or $e_\tau \not\in \ratk$ implies $\tau \neq \tau'$.\\
  And $b_\tau \in \ratk$ (resp.~$e_\tau \in \ratk$ ) can be decided in time polynomial in the encoding of $b_\tau$ (resp.~$e_\tau$).\\
  \nt So we can focus on the case where $b_\tau \in \ratk$ and $e_\tau \in \ratk$.\\
  We use $b_\infi$ for the largest element of $(\rattk) \setminus \tau $ that satisfies $b_\infi \le b_\tau$.\\
  And similarly we use $e_\supr$ for the smallest element of $(\rattk) \setminus \tau$ that satisfies $e_\tau \le e_\infi$.\\
  Observe that $b_\infi$ and $e_\supr$ can be computed using space polynomial in (the encoding of) $\tau$.\\ 

  \nt We show below that for any (nonempty) interval $\alpha$ with boundaries in $\ratk$,
  \begin{equation}
    \alpha \subseteq T \te{iff} \alpha \cap \rattk \subseteq T\label{eq:comp_cont}
  \end{equation}
  Therefore in order to decide whether $\tau = \tau'$, it is sufficient to decide whether 
  \begin{enumerate}[(I)]
  \item $\tau \cap \rattk \subseteq T$,\label{item:comp_cont} and
  \item $\{b_\infi, e_\supr\} \cap T = \emptyset$\label{item:comp_bound}
  \end{enumerate}
  Now observe that:
  \begin{itemize}
  \item \ref{item:comp_cont} can be decided with a finite number of independent calls to a procedure for \pbma, and
   \item \ref{item:comp_bound} can be decided with two calls to such a procedure.
   \end{itemize}
   And it was shown in~\cite{arenas2022temporal} that \pbma is in \PSPACE.\\

   \nt To complete the proof, we show that~\eqref{eq:comp_cont} holds.\\
  The right direction ($\alpha \subseteq T$ implies $\alpha \cap \rattk \subseteq T$) is trivial.\\
  For the left direction, assume by contradiction that $\alpha \cap \rattk \subseteq T$ but $\alpha \not\subseteq T$.\\
  Take any $t \in \alpha \setminus T$.\\
  Since $\alpha \cap \rattk \subseteq T$ and $t \not\in T$,
  we have
  \begin{equation}
  t \not\in \rattk\label{eq:comp_4_1}
  \end{equation}

  \nt Next, since $\alpha$ has boundaries in $\ratk$,
  \begin{equation}
  \alpha \cap \rattk \neq \emptyset  
  \end{equation}
  (for instance, $b_\alpha + \frac{1}{2k} \in \alpha \cap \rattk$).\\
  Together with~\eqref{eq:comp_4_1},
  this implies that there is a $t'$ in $\alpha \cap \rattk$ s.t. either $t' < t$ or $t < t'$.\\
  Let us assume w.l.o.g. that the former holds (the proof for the latter case is symmetric).\\
  And let $t_\infi$ be the largest value that satisfies 
  $t_\infi \in \alpha \cap \rattk$ and $t_\infi < t$.\\
  Then
  \begin{equation}
  t - t_\infi < \frac{1}{2k}\label{eq:comp_4_1_1}
  \end{equation}
  Now recall that by assumption, $\alpha \cap \rattk \subseteq T$.\\
  Therefore $t_\infi \in T$.\\
  So from Lemma~\ref{lemma:closed_boundaries_t}, there is a $\beta$ with boundaries in $\ratk$ s.t. $\beta \subseteq T$ and $t_\infi \in \beta$.\\
  Then we have two cases, either $e_\beta \neq t_\infi$ or $e_\beta = t_\infi$:
  \begin{itemize}
  \item ($e_\beta \neq t_\infi$).\\
    In this case,
    since $e_\beta \in \ratk$, and $t_\infi \in \rattk$,
    \begin{equation*}
    \frac{1}{2k} \le e_\beta - t_\infi 
    \end{equation*}
    Together with~\eqref{eq:comp_4_1_1}, this yields (by transitivity) 
\begin{align}
  t - t_\infi &< e_\beta - t_\infi\\
  t  &< e_\beta\label{eq:comp_4_2_1}
\end{align}
Now since $t_\infi \in \beta$,
  \begin{equation}
  b_\beta \le t_\infi
\end{equation}
Together with $t_\infi < t$, this implies
\begin{equation}
b_\beta < t
\end{equation}
Together with~\eqref{eq:comp_4_2_1}, this yields
\begin{equation*}
  t \in \beta
\end{equation*}
Since $\beta \subseteq T$,
this implies $t \in T$, which contradicts the definition of $t$.\\

  \item ($e_\beta = t_\infi$).\\
    In this case,
    since $\beta$ has boundaries in $\ratk$,
    \begin{equation}
    t_\infi \in \ratk\label{eq:comp_4_3}
  \end{equation}
  And because $t \in \alpha$ and $t_\infi < t$
    \begin{align}
    t_\infi &< t \le e_\alpha\\
    t_\infi &< e_\alpha\label{eq:comp_4_3_0}
  \end{align}
  Together with~\eqref{eq:comp_4_3} and $e_\alpha \in \ratk$, this implies
    \begin{equation}
   \frac{1}{k} \le e_\alpha - t_\infi\label{eq:comp_4_3_1}
    \end{equation}
    Now let $t_\supr = t_\infi + \frac{1}{2k}$.\\
    From~\eqref{eq:comp_4_3_1}, we get
    \begin{equation}
      t_\supr < e_\alpha\label{eq:comp_4_3_1_1}
    \end{equation}
    Next, since $t_\infi \in \alpha$ and $t_\infi < t_\supr $
    \begin{equation}
      b_\alpha < t_\supr\label{eq:comp_4_3_1_2}
    \end{equation}
    So from~\eqref{eq:comp_4_3_1_1} and~\eqref{eq:comp_4_3_1_2}
    \begin{equation*}
      t_\supr \in \alpha
    \end{equation*}
    So from Lemma~\ref{lemma:closed_boundaries_t}, there is a $\beta'$ with boundaries in $\ratk$ s.t. $\beta' \subseteq T$ and $t_\supr \in \beta'$.\\
    Next, from ~\eqref{eq:comp_4_1_1} and the definition of $t_\supr$
    \begin{equation}
    t_\supr - t < \frac{1}{2k} 
  \end{equation}
  So with an argument symmetric to the one used above to show $t \in \beta$, we get $t \in \beta'$,
  which once again contradicts $t \not\in T$.
\end{itemize} 
\end{proof}

\begin{proposition}\label{prop:membership_d}
 \pbmd is in \PSPACE
\end{proposition}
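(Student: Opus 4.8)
The plan is to mirror the proof of Proposition~\ref{prop:membership_t}, exchanging throughout the roles of time points and distances, and invoking Lemma~\ref{lemma:closed_boundaries_d} in place of Lemma~\ref{lemma:closed_boundaries_t}. Given an input consisting of a temporal graph $G$, a TRPQ $q$ and a tuple $\u = \tup{n_1, n_2, t, \delta} \in \tuplesd$ (w.l.o.g.\ $\delta \neq \emptyset$), I set $Q = \bound{G}{q} \cup \{t\}$ and $D = \{d \mid \tup{n_1, n_2, t, d} \in \eval{q}\}$, and let $k$ be the product of the denominators of the numbers in $Q$. As in the proof of Proposition~\ref{prop:membership_t}, $k$ (in binary) is computable within polynomial space, and $Q^{+-} \subseteq \ratk \subseteq \rattk$. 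By definition, the instance is positive iff $\delta$ is the largest interval $\delta'$ with $\delta \subseteq \delta' \subseteq D$. By Lemma~\ref{lemma:closed_boundaries_d}, if $b_\delta \notin \ratk$ or $e_\delta \notin \ratk$ then $\delta$ is not maximal; this test is polynomial in the size of $\u$, so we may assume $b_\delta, e_\delta \in \ratk$.

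Following the same reasoning as for \pbmt{}, it then suffices to decide (I)~whether $\delta \cap \rattk \subseteq D$, and (II)~whether the two grid points of $\rattk$ immediately outside $\delta$ (the largest element below $b_\delta$ and the smallest element above $e_\delta$) both lie outside $D$. Part~(I) amounts to finitely many independent tests of the form ``$\tup{n_1, n_2, t, d} \in \eval{q}$?'' and part~(II) to two such tests; after clearing denominators (rescaling $G$ and $q$ by $k$ so that time becomes integral), each is an instance of \pbma, which is in \PSPACE{} by~\cite{arenas2022temporal}. Since \PSPACE{} is closed under a polynomial number of independent oracle calls, this gives the claimed upper bound.

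The only non-routine ingredient is the ``continuity'' fact, dual to~\eqref{eq:comp_cont}, that for every nonempty interval $\alpha$ with boundaries in $\ratk$ one has $\alpha \subseteq D$ iff $\alpha \cap \rattk \subseteq D$. Its proof is verbatim the one carried out in Proposition~\ref{prop:membership_t}, with $D$ playing the role of $T$ and Lemma~\ref{lemma:closed_boundaries_d} replacing Lemma~\ref{lemma:closed_boundaries_t}: from a hypothetical witness $d \in \alpha \setminus D$ one picks a nearest grid point $d_\infi \in \alpha \cap \rattk \subseteq D$, applies Lemma~\ref{lemma:closed_boundaries_d} to obtain a maximal interval $\beta \subseteq D$ with boundaries in $\ratk$ containing $d_\infi$, and derives $d \in \beta \subseteq D$ (a contradiction), distinguishing the cases where $d_\infi$ is, or is not, an endpoint of $\beta$. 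I do not expect a genuine obstacle beyond transcribing this symmetric argument; the only care needed is the bookkeeping that keeps all relevant interval boundaries in $\ratk$ and all auxiliary witnesses in $\rattk$.
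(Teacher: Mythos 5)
Your proposal is correct and follows exactly the route the paper takes: its proof of this proposition is literally the one-line remark that the argument is symmetric to that of Proposition~\ref{prop:membership_t}, with Lemma~\ref{lemma:closed_boundaries_d} replacing Lemma~\ref{lemma:closed_boundaries_t}. You have simply written out the dual argument (with $D$ in place of $T$, $Q = \bound{G}{q} \cup \{t\}$, and the same $\ratk$/$\rattk$ grid reasoning) in more detail than the paper does.
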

\begin{proof}\ \\
  The proof is symmetric to the one provided above for Proposition~\ref{prop:membership_t}, using Lemma~\ref{lemma:closed_boundaries_d} instead of Lemma~\ref{lemma:closed_boundaries_t}.  
\end{proof}

\begin{proposition}\label{prop:membership_td}
 \pbmtd is in \PSPACE.
\end{proposition}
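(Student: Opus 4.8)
The plan is to adapt the proof of Proposition~\ref{prop:membership_t} almost verbatim, using the two-dimensional Corollary~\ref{cor:closed_boundaries_td} in place of the one-dimensional Lemma~\ref{lemma:closed_boundaries_t}. Fix an input $\u = \tup{n_1,n_2,\tau,\delta} \in \tuplestd$ and write $P = \{(t,d) \mid \tup{n_1,n_2,t,d} \in \eval{q}\}$ for the ``area'' of answers with nodes $n_1, n_2$. Since the unfolding of any tuple in $\tuplestd$ that strictly contains the unfolding of $\u$ must share the nodes $n_1,n_2$ and hence corresponds to a strictly larger rectangle $\tau'\times\delta'$, the tuple $\u$ is a compact answer iff \emph{(A)} $\tau\times\delta \subseteq P$ and \emph{(B)} $(\tau,\delta)$ is $\sqsubseteq$-maximal among all pairs $(\tau',\delta')$ with $\tau'\times\delta' \subseteq P$. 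Moreover, every answer $\tup{n_1,n_2,t,d}$ satisfies $t, t+d \in \tdg$ (an immediate induction on $q$ following the semantics of Figure~\ref{fig:semantics}), so $P$, and hence also $\tau$ and $\delta$ whenever $\u$ is a compact answer, is bounded; if $\tau$ or $\delta$ is not bounded we reject at once.

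Set $Q = \bound{G}{q} \cup \{b_\tau,e_\tau,b_\delta,e_\delta\}$, let $k$ be the product of the denominators of the rationals in $Q$ (polynomial-time computable), and work with the grid $\tfrac{1}{4k}\integers$; the factor $4$ (any fixed multiple $>2$ would do) is needed because the diagonal, slope-$(-1)$ edges that appear in $P$ — recall $P$ is a finite union of ``cropped rectangles'' by the correctness of $\evalcitdbe{\cdot}$ (Proposition~\ref{prop:correct_tdbe}), and cropped rectangles have diagonal edges, as in Example~\ref{ex:parallelogram} — can cut off thin, possibly half-open, cells that miss the coarser grid $\rattk$ used for \pbmt. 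As in Proposition~\ref{prop:membership_t}, if any of $b_\tau,e_\tau,b_\delta,e_\delta$ fails to lie on this grid, then by Corollary~\ref{cor:closed_boundaries_td} the rectangle $\tau\times\delta$ is not a maximal rectangle inside $P$, and we reject. Otherwise: for \emph{(A)} we prove, exactly as~\eqref{eq:comp_cont} is proved (now invoking Corollary~\ref{cor:closed_boundaries_td}), that $\tau\times\delta\subseteq P$ iff every point of the finite set $(\tau\times\delta)\cap(\tfrac{1}{4k}\integers)^2$ belongs to $P$ — the key point being that $(\tau\times\delta)\setminus P$ is a finite union of cells of the arrangement of the horizontal, vertical and slope-$(-1)$ lines with intercepts in $\ratk$, and every nonempty such cell contains a grid point. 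For \emph{(B)}, if $\tau\times\delta$ can be strictly enlarged inside $P$, then by Corollary~\ref{cor:closed_boundaries_td} it can be enlarged to a rectangle with grid boundaries, hence already by a single grid step — moving one of the four boundaries outward to the next grid value, or turning a strict boundary into a non-strict one — so it suffices to run the test of \emph{(A)} on each of these finitely many one-step enlargements.

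Finally, each atomic test ``$(t,d) \in P$'' is a single call to an oracle for \pbma: rescaling $G$, $q$, $t$ and $d$ by $4k$ turns the dense-time instance into an equisatisfiable discrete-time one, just as in the proof of Proposition~\ref{prop:membership_t}. Only finitely many such calls are made and they are independent, so, reusing space and invoking the $\PSPACE$ upper bound for \pbma from~\cite{arenas2022temporal}, we obtain a polynomial-space procedure. The main obstacle is item \emph{(A)}: one must establish the two-dimensional analogue of~\eqref{eq:comp_cont}, i.e.\ that removing a finite union of cropped rectangles from $\tau\times\delta$ always leaves a region meeting $(\tfrac{1}{4k}\integers)^2$ whenever it is nonempty. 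This is precisely where the slope-$(-1)$ diagonals — absent from the one-dimensional arguments of Lemmas~\ref{lemma:closed_boundaries_t} and~\ref{lemma:closed_boundaries_d} — force a finer grid and a careful treatment of half-open cells; everything else is a routine adaptation of Proposition~\ref{prop:membership_t}.
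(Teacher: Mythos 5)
Your proposal follows the same route as the paper: reduce \pbmtd{} to finitely many independent membership tests ``$(t,d)\in\eval{q}$'', each answered by a \PSPACE{} oracle for \pbma{} after rescaling, with the closure property of Corollary~\ref{cor:closed_boundaries_td} guaranteeing that a suitable finite grid suffices. The paper's version literally reuses the grid $\rattk$ from the one-dimensional case and asserts the two-dimensional analogue of~\eqref{eq:comp_cont} for it; your refinement to $\frac{1}{4k}\integers$ is not pedantry but an actual correction. Because $\eval{q}$ decomposes into cropped rectangles whose boundaries include slope $-1$ lines, the region $(\tau\times\delta)\setminus P$ can be an open triangular cell such as $\{t>b_\tau,\ d>b_\delta,\ t+d<b_\tau+b_\delta+\tfrac{1}{k}\}$, which contains no point of $(\rattk)^2$ (its only interior candidate $(b_\tau+\tfrac{1}{2k},\,b_\delta+\tfrac{1}{2k})$ lies on the excluded diagonal) but does contain $(b_\tau+\tfrac{1}{4k},\,b_\delta+\tfrac{1}{4k})$. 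So the containment test on the coarser grid can accept a non-answer, and your finer grid repairs this. Your treatment of maximality via one-step enlargements is also the correct characterization; the paper's four explicit conditions instead demand that the entire adjacent grid column/row outside $\tau\times\delta$ be disjoint from $P$, which would wrongly reject maximal rectangles sitting inside, e.g., an L-shaped region.

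One imprecision to fix: the ``single grid step'' by which you enlarge $\tau\times\delta$ in step \emph{(B)} must be a step of the coarse grid $\ratk$ (or the closing of an open endpoint), not of $\frac{1}{4k}\integers$. Corollary~\ref{cor:closed_boundaries_td} places the boundaries of an enclosing maximal rectangle in $Q^{+-}\subseteq\ratk$, so a one-$\ratk$-step enlargement is still contained in that maximal rectangle whenever any strict enlargement exists; and, crucially, only for rectangles whose boundaries lie on the arrangement lines (intercepts in $\ratk$) does your cell-hitting argument guarantee that every nonempty clipped cell contains a $\frac{1}{4k}$-grid point. A $\frac{1}{4k}$-wide sliver clipped off a cell need not contain any grid point, so test \emph{(A)} would not be sound on such an enlargement. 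With that adjustment the argument goes through.
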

\begin{proof}\ \\
  The proof is analogous to the one provided above for Proposition~\ref{prop:membership_t}, using Corollary~\ref{cor:closed_boundaries_td} instead of Lemma~\ref{lemma:closed_boundaries_t}. \\
  More precisely,
  let $G$ be a temporal graph, let $q$ be a TRPQ, and let $\u = \tup{n_1, n_2, \tau, \delta} \in \tuplestd$.\\
  We use $Q$ for $\bound{G}{q} \cup \{t, d\}$, and $P$  for the set defined by 
    \[P = \{(t,d) \mid \tup{n_1, n_2, t, d} \in \eval{q}\}\]

    \nt We also define  $k$, $\ratk$ and $\rattk$ identically as in the proof of Proposition~\ref{prop:membership_t}.\\
    Then  analogously to what we showed in this proof,
  for any pair of intervals $(\alpha_1, \alpha_2)$ with boundaries in $\ratk$,
    \[\alpha_1 \times \alpha_2 \subseteq P \te{iff} (\alpha_1 \cap \rattk) \times (\alpha_2 \cap \rattk) \subseteq P\]
   So with a similar argument, deciding whether $\tup{n_1, n_2, \tau, \delta}$ is a compact answer to $q$ over $G$
   can be reduced to deciding
  \begin{itemize}
  \item $\{b_\tau, e_\tau, b_\delta, e_\delta\} \subseteq \ratk$,
  \item $(\tau \cap \rattk) \times (\delta \cap \rattk) \subseteq P$,
  \item
    $\left(\{b^\tau_\infi, e^\tau_\supr\} \times (\delta \cap \rattk)\right) \cap P = \emptyset$ and
  \item
    $\left( (\tau \cap \rattk) \times \{b^\delta_\infi, e^\delta_\supr\}  \right) \cap P = \emptyset$
  \end{itemize}
  where
  $b^\tau_\infi$ is the largest element in $(\rattk) \setminus \tau$ that satisfies $b^\tau_\infi \le b_\tau$,
  $e^\tau_\supr$ is the smallest element in $(\rattk) \setminus \tau$ that satisfies $e_\tau \le  e^\tau_\supr$,
  and 
  $b^\delta_\infi$ and $e^\delta_\supr$ are defined analogously.
\end{proof}


  

\subsubsection{Hardness}
\label{sec:qa_hardness}

\begin{proposition}
 \pbmt is \PSPACE-hard
\end{proposition}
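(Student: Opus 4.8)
The plan is to establish \PSPACE-hardness by a polynomial-time reduction from $\pbma$ (deciding $\vt \in \eval{q}$ for a temporal graph $G$ over discrete time, a TRPQ $q$, and a tuple $\vt \in \tuples$), which is \PSPACE-hard by~\cite{arenas2022temporal}. I would first assume that in the input instance $\vt = \tup{n_1,n_2,t,d}$ one has $n_1 \in \nodesg$ (this holds for the instances produced by the QBF reduction of~\cite{arenas2022temporal}, whose query tuple connects two designated graph nodes; observe also that $\vt \in \eval{q}$ already forces $t \in \tdg$). The only real issue is that we cannot simply hand $\tup{G,q,\tup{n_1,n_2,[t,t],d}}$ to $\pbmt$: even when $\tup{n_1,n_2,t,d}\in\eval{q}$, the singleton time interval $[t,t]$ need not be \emph{maximal}, since some neighbouring tuple $\tup{n_1,n_2,t-1,d}$ or $\tup{n_1,n_2,t+1,d}$ might also be in $\eval{q}$. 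So the substance of the reduction is to first isolate the time point $t$.

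To do this I would use the node filters $\le k$ and negation. Since $\pbma$ is over discrete time, $t-1 \in \td$, so I can form the query
\[
\mathsf{Eq}_t \ :=\ (\le t)\,/\,\neg(\le t-1),
\]
and set $q' \ :=\ \mathsf{Eq}_t\,/\,q$. Unfolding the semantics, $\eval{\mathsf{Eq}_t}$ is the intersection, taken inside $\tdg$, of $\{\tup{n,n,t',0}\mid n\in\nodesg,\ t'\le t\}$ with $\{\tup{n,n,t',0}\mid n\in\nodesg,\ t'\ge t\}$, hence equals $\{\tup{n,n,t,0}\mid n\in\nodesg\}$ (using $t\in\tdg$ for the edge cases). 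Composing on the left with this filter forces every answer of $q'$ to start at time $t$, so $\eval{q'} = \{\tup{n,n_3,t,d'} \mid n\in\nodesg \te{and} \tup{n,n_3,t,d'}\in\eval{q}\}$. The reduction then maps $\tup{G,q,\tup{n_1,n_2,t,d}}$ to the $\pbmt$-instance $\tup{G, q', \tup{n_1,n_2,[t,t],d}}$; it is computable in polynomial time, it leaves $G$ (and its discreteness) unchanged, and it only prepends a constant number of operators to $q$, so the ``graph of fixed size apart from $\tdg$'' property of the reduction in~\cite{arenas2022temporal} is preserved.

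For correctness I would argue as follows. If $\vt\in\eval{q}$ then, since $n_1\in\nodesg$, the unfolding of $\u:=\tup{n_1,n_2,[t,t],d}$ is $\{\tup{n_1,n_2,t,d}\}\subseteq\eval{q'}$; moreover the set of times $t'$ with $\tup{n_1,n_2,t',d}\in\eval{q'}$ equals $\{t\}$, so any tuple of $\tuplest$ whose unfolding contains $\{\tup{n_1,n_2,t,d}\}$ and stays within $\eval{q'}$ is of the form $\tup{n_1,n_2,\tau,d}$ with $\{t\}\subseteq\tau\subseteq\{t\}$, i.e. $\tau=[t,t]$; hence $\u$ is a compact answer to $q'$ over $G$. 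Conversely, if $\vt\notin\eval{q}$ then $\tup{n_1,n_2,t,d}\notin\eval{q'}$, so the unfolding of $\u$ is not contained in $\eval{q'}$ and $\u$ is not a compact answer. Thus $\vt\in\eval{q}$ iff $\u$ is a compact answer to $q'$ over $G$, which proves \PSPACE-hardness; together with Proposition~\ref{prop:membership_t} it gives \PSPACE-completeness of $\pbmt$.

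I expect the isolation of $t$ (and, to a lesser degree, the minor technicality of assuming $n_1\in\nodesg$) to be the only point requiring care; the rest is routine unfolding of definitions. As a remark, an equivalent variant leaves $q$ untouched and instead extends $G$ by a single fresh triple $(s_0,e_0,n_1)$ with validity $\{[t,t]\}$ (for a fresh node $s_0$ and fresh label $e_0$), taking $q':=e_0/q$; this is the form in which the graph grows only by a constant factor, matching the corresponding remark for the other three problems.
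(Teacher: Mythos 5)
Your reduction is correct, and it is the same overall strategy as the paper's: reduce from \pbma by forcing every answer of the modified query to start exactly at time $t$, so that the singleton interval $[t,t]$ is automatically maximal and compactness of $\tup{n_1,n_2,[t,t],d}$ collapses to membership of $\tup{n_1,n_2,t,d}$. The gadget differs, though. The paper leaves the query's temporal filters alone and instead extends the graph with two fresh point-valid self-loops, $\valp{n_1}{e_1}{n_1} = \{[t,t]\}$ and $\valp{n_2}{e_2}{n_2} = \{[t+d,t+d]\}$, taking $q' = e_1/q/e_2$; you leave the graph alone and prepend the node filter $(\le t)/\neg(\le t-1)$, which pins the start time purely syntactically (and correctly exploits discreteness of time for \pbma). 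Your variant only pins the departure time, which, as you argue, suffices for $\tuplest$ since the distance component of a tuple in $\tuplest$ is a single value; the paper's construction also pins the arrival time $t+d$, which is redundant here but lets the same graph $G'$ and query $q'$ be reused verbatim for the hardness of \pbmd, \pbmtd and \pbmtdbe, where maximality ranges over the distance interval. Your closing remark about the $e_0/q$ graph-extension variant is exactly the paper's construction restricted to the left endpoint. The only points needing the care you already give them are $n_1 \in \nodesg$ (the paper likewise assumes this w.l.o.g.) and $t \in \tdg$, both of which are forced by $\vt \in \eval{q}$.
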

\begin{proof}
  The proof is a straightforward reduction from \pbma.\\
  Let $G = \tup{\tdg \fg, \valz}$
  be a temporal graph, let $q$ be a TRPQ and let $\u = \{n_1, n_2, t, d\} \in \tuples$.\\
  W.l.o.g., let us assume that $\{n_1, n_2\} \subseteq \nodesg$ (the proof for the 3 other cases is symmetric). \\
  
  \nt Now
  consider two fresh edge labels $e_1, e_2 \in \E$ that do not appear in $\fg$.
  And let $G'= \tup{\tdg, \fgp, \valpz}$ be the graph nearly identical to $G$,
  but extended with two triples,
  as follows:
  \begin{itemize}
  \item $\valpz(f) = \valz(f)$ for each triple $f \in \fg$,
  \item $\valp{n_1}{e_1}{n_1} = \{[t,t]\}$, and
  \item $\valp{n_2}{e_2}{n_2} = \{[t+d,t+d]\}$.
  \end{itemize}
  Let $q'$ be the following TRPQ:
  \[q' = e_1/q/e_2\]
  Then immediately from the semantics of TRPQs:
  \begin{equation}\label{eq:comp_hard_1}
    \u \in \eval{q} \te{iff} \evalp{q'} = \{\tup{n_1, n_2, t, d}\}
  \end{equation}
  Now consider the tuple $\u' = \{n_1, n_2, [t,t], d\} \in \tuplest$.\\
  Then from~\eqref{eq:comp_hard_1}, $\u \in \eval{q}$ iff $\u'$ is a compact answer to $q$ over $G'$ in $\tuplest$.\\

  \nt Clearly, this input $\tup{G', q', \u'}$ for \pbmt can be computed in time polynomial in the size of (the encodings of) $G, q$ and $\u$.\\ 
  And it was shown in~\cite{arenas2022temporal} that \pbma is \PSPACE-complete.
  
\end{proof}

\begin{proposition}\ \\
 \pbmd, \pbmtd and \pbmtdbe are \PSPACE-hard.
\end{proposition}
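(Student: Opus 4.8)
The plan is to reuse the construction from the proof that \pbmt is \PSPACE-hard, adapting only the encoding of the target tuple to each format. Starting from an instance $\tup{G, q, \u}$ of \pbma with $\u = \tup{n_1, n_2, t, d}$ and (w.l.o.g.) $\{n_1, n_2\} \subseteq \nodesg$, I would form exactly the same graph $G' = \tup{\tdg, \fgp, \valpz}$: retain every triple of $G$ with its intervals, pick fresh edge labels $e_1, e_2$ not occurring in $\fg$, and set $\valp{n_1}{e_1}{n_1} = \{[t,t]\}$ and $\valp{n_2}{e_2}{n_2} = \{[t+d,t+d]\}$. With $q' = e_1/q/e_2$, the semantics of TRPQs yields, just as in the \pbmt argument, that $\evalp{q'} = \{\tup{n_1,n_2,t,d}\}$ if $\u \in \eval{q}$, and $\evalp{q'} = \emptyset$ otherwise.

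Next I would exhibit, in each of the three formats, a tuple whose unfolding is exactly the singleton $\{\tup{n_1,n_2,t,d}\}$: for $\tuplesd$, $\u'_d = \tup{n_1, n_2, t, [d,d]}$; for $\tuplestd$, $\u'_{td} = \tup{n_1, n_2, [t,t], [d,d]}$; and for $\tuplestdbe$, $\u'_c = \tup{n_1, n_2, [t,t], [d,d], t, t}$. The last is a legal member of $\tuplestdbe$ because its only relevant time point is $t$, and plugging $b = e = t$, $\tau = [t,t]$, $\delta = [d,d]$ into~\eqref{eq:fourth-repr-forumula} gives $\delta_t = [d,d] \neq \emptyset$; one checks directly that the unfoldings of $\u'_d$, $\u'_{td}$ and $\u'_c$ are all $\{\tup{n_1,n_2,t,d}\}$. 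All three tuples are computable from $\tup{G,q,\u}$ in polynomial time, and $G'$ enlarges $G$ by only a constant, so (as in the \pbmt case) hardness is inherited even for graphs of size fixed apart from $\tdg$.

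It then remains to verify, for each format $\tuplesx \in \{\tuplesd, \tuplestd, \tuplestdbe\}$, that the chosen tuple is a compact answer to $q'$ over $G'$ precisely when $\u \in \eval{q}$. If $\u \notin \eval{q}$, then $\evalp{q'} = \emptyset$; since every tuple of $\tuplesx$ has a nonempty unfolding, no tuple of $\tuplesx$ has unfolding contained in $\evalp{q'}$, so $q'$ has no compact answer at all and in particular the chosen tuple is not one. If $\u \in \eval{q}$, then $\evalp{q'} = \{\tup{n_1,n_2,t,d}\}$; any tuple of $\tuplesx$ with unfolding $\subseteq \evalp{q'}$ must, again by nonemptiness of unfoldings, have unfolding exactly equal to this singleton, and since the chosen tuple is such a tuple, its unfolding is trivially maximal — hence it is a compact answer. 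Combining this equivalence with the \PSPACE-hardness of \pbma established in~\cite{arenas2022temporal} completes the proof for all three problems.

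The one point requiring a moment of care is the $\tuplestdbe$ case: because a cropped rectangle can have a distance interval $\delta$ strictly larger than $[d,d]$ yet still cut down to $\delta_t = [d,d]$ at the single time point $t$, the tuple with unfolding $\{\tup{n_1,n_2,t,d}\}$ is not unique; however, since all such tuples share the same (singleton) unfolding, each of them is maximal, so this non-uniqueness is harmless and $\u'_c$ is indeed a compact answer. Everything else is a direct transcription of the existing \pbmt reduction.
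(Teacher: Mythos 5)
Your proposal is correct and matches the paper's proof exactly: the paper likewise reuses the \pbmt reduction verbatim (same $G'$ and $q'$) and only swaps in the tuples $\tup{n_1, n_2, t, [d,d]}$, $\tup{n_1, n_2, [t,t], [d,d]}$, and $\tup{n_1, n_2, [t,t], [d,d], t,t}$ for the three formats. Your additional checks (that the unfoldings are the singleton $\{\tup{n_1,n_2,t,d}\}$ and that the $\tuplestdbe$ tuple is admissible) are details the paper leaves implicit but are accurate.
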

\begin{proof}
  The proofs are nearly identical to the one provided above for \pbms.\\
  The graph $G'$ is defined identically in all cases, so that the reductions only differ w.r.t. to the tuple $\u'$.\\
  This tuple is defined as follows:
  \begin{itemize}
    \item $\tup{n_1, n_2, t, [d,d]}$ for \pbmd,
    \item $\tup{n_1, n_2, [t,t], [d,d]}$ for \pbmtd,
  \item $\tup{n_1, n_2, [t,t], [d,d], t,t}$ for \pbmtdbe.
  \end{itemize}
\end{proof}






\subsection{Size of compact answers}
\label{sec:size}

\propSizeEvalSingleDimension*
\begin{proof}\ \\
\begin{itemize}
  \item 

  $\sizea{\tau}{\tuplest} = O(1)$ follows from the definition of $\evalcit{q}$ (in Section~\ref{sec:evalt_def}) and Proposition~\ref{prop:correct_t}, by induction on the structure of $q$:
    \medskip
  \begin{itemize}
  \item If $q$ is of the form  $\edgelabel$, $\edge^-$, $\pred, \le k, \tndelta$ or $?\query$, the number of tuples in $\evalcit{q}$ is independent of the size of the intervals present in $G$.

  \item For the binary operators, (i.e.~when $q$ is of the form $\query_1 / \query_2$ or $\query_1 + \query_2$), the cardinality of $\evalcit{q}$ is bounded by some function of the size of the operands.\\
    Precisely, if $k_1$ (resp.~$k_2$) is the cardinality of $\evalcit{q_1}$ (resp. $\evalcit{q_2}$),
    then the cardinality of $\evalcit{q_1 /q_2}$ (resp. $\evalcit{q_1 + q_2}$) is bounded by $k_1 \cdot k_2$ (resp.~$k_1 + k_2$).
    And by IH, the cardinality of $\evalcit{q_i}$ is independent of the size of intervals in $G$.\\
    This argument also applies to the case where $q$ is of the form $\query[m,n]$, since it is equivalent in this case to a finite union of joins.
  \item  If $q$ is of the form $\neg q'$, then the cardinality of $\evalcit{q}$ is bounded by $2k$, where $k$ is the cardinality of $\evalcit{q'}$,
    since the complement in $\tdg$ of an interval can always be represented with at most two intervals.
    And once again, by IH, $k$ is independent of the size of intervals in $G$. 
  \end{itemize}
  \medskip
\item 
  To show that $\sizea{\delta}{\tuplest} = \Omega(n)$,
  for each $i \in \nn$, consider the TRPQ $q_i = \text{T}_{[0,i]}$,
  and the graph $G_i$ with (discrete) effective temporal domain $[0,i]$,
  a single node $n$ and no edge.
  Then \[\evalcit[G_i]{q_i} = \{\tup{n,n,[0,0],d} \mid d \in [0, i]\}\]
  Each tuple in this set has a different value for the distance $d$, therefore $\evalcit[G_i]{q_i}$ is the compact representation of $\eval[G_i]{q_i}$ in $\tuplest$.\\
  And this set has cardinality $i$.
  \medskip
\item
  $\sizea{\delta}{\tuplesd} = O(1)$ follows from the definition of $\evalcid{q}$ (in Section~\ref{sec:evald_def}) and Proposition~\ref{prop:correct_d},
  with an argument nearly identical to the one that we provided above to show that $\sizea{\tau}{\tuplest} = O(1)$.
  \medskip
\item To show that $\sizea{\tau}{\tuplesd} = \Omega(n)$,
  we use a fixed query $q = e$,
  with $e \in \E$.
  For each $i \in \nn$, consider 
  the graph $G_i$ with (discrete) effective temporal domain $[0,i]$,
  and a single triple  $f = (n_1, e, n_2)$ 
  such that $\valz[G_i](f) = \{[0,i]\}$.
Then
\[\evalcid[G_i]{q} = \{\tup{n_1,n_2,t,[0,0]} \mid t \in [0, i]\}\]
Each tuple in this set has a different value for the distance $d$, therefore $\evalcid[G_i]{q}$ is the compact representation of $\eval[G_i]{q}$ in $\tuplesd$.\\
And this set has cardinality $i$.
\end{itemize}
\end{proof}

\propSizeEvalBothDimensions*
\begin{proof}\ \\
 \begin{itemize}
   \item 
  To show that $\sizea{\delta}{\tuplesd} = \Omega(n)$, we use an example analogous to the one used to show non-finiteness (illustrated with Figure~\ref{fig:rectangles}b).
 
  For each $i \in \nn$, consider the TRPQ $q_i = \text{T}_{[0,i]}$,
  and the graph $G_i$ with (discrete) effective temporal domain $[0,i]$,
  a single node $n$ and no edge.
  Then 
  \[\eval[G_i]{q_i} = \{\tup{n,n,t,d} \mid t \in [0, i] \te{and} d \in [0, i - t]\}\]
  For instance, if $i = 2$, then
  \[\eval[G_i]{q_i} = \{
  \tup{n,n,0,0},
  \tup{n,n,0,1},
  \tup{n,n,0,2},
  \tup{n,n,1,1}, 
  \tup{n,n,1,2}, 
  \tup{n,n,2,2}
  \}\]
  Now consider again the Euclidean plane $\integers \times \integers$ of time per distance,
  and the polygon defined by the points $(0,0), (0,n), (n,0)$ and $\{ (t, i-t + 1) \mid t \in [1, i]\}$.\\
  Intuitively, this is a ``near triangle'' (visually, analogous to an upper approximation of the integral of a linear function).\\
  The minimal number of tuples in $\tuplestd$ needed to represent $\eval[G_i]{q_i}$ is the minimal number of rectangles needed to cover this polygon,
  and this number is $i$.
\medskip
   \item 
  To show that $\sizea{\tau}{\tuplestd} = O(1)$
 we use an argument similar to the one that we provided to show that $\sizea{\tau}{\tuplest} = O(1)$, in the proof of Proposition~\ref{prop:size_eval_single_dimension}.

  We proceed once again again by induction on the structure of $q$.
  If $q$ is an expression for the symbols $\edge$ or $\node$ in the grammar of Section~\ref{sec:language},
  then the argument is identical to the one provided for $\tuplest$.
  This is is also the case if $q$ is of the form $\query + \query$ or $\query[m,n]$.\\
  So we focus here on the two remaining operators, namely the cases where $q$ is of the form $\tndelta$ or $\query_1 / \query_2$.\\
  Recall that
  \[\evalcitd{\query_1/\query_2} \bigcup \{\u_1 \tjoin \u_2 \mid \u_1 \in \evalcitd{\query_1}, \u_2 \in \evalcitd{\query_2}\}\]
  and
  \[\evalcitd{\tndelta} = \bigcup_{n \in \nodesg} \{\tup{n,n,\tdg, \delta} \tjoin \tup{n,n,\tdg, [0,0]}\}\]
  Let 
  $\u_1 = \tup{n_1, n_2, \tau_1, \delta_1}$ and
  $\u_2 = \tup{n_1, n_2, \tau_2, \delta_2}$ be two tuples in $\tuplestd$.
  For simplicity,
  we assume that all intervals here are closed-closed, and we focus on the discrete case (but the argument can be easily adapted to the other cases).\\
   We show that
   the cardinality of $\u_1 \tjoin \u_2$ is bounded by a function of the cardinality of $\delta_1$.
   Together with the definitions of $\evalcitd{\query_1/\query_2}$ and 
   $\evalcitd{\tndelta}$, this proves our claim.\\
  \nt We use
   $R_1, R_2$,
   $R_1 \join R_2$, $b$ and $e$ here with the same meaning as in the proof of Lemma~\ref{lemma:tjoin_td}.
   We already saw that
   the cardinality of $\u_1 \tjoin \u_2$ is the cardinality of
   the set $\dom{R_1 \join R_2} \setminus [b,e] + 1$ if $b \le e$,
   and the cardinality of $\dom{R_1 \join R_2}$ otherwise.\\
   In the former case,
   the set $\dom{R_1 \join R_2} \setminus [b,e]$ is the union of the two intervals
   $[b_{\dom{R_1 \join R_2}}, b]$ and 
   $[e, e_{\dom{R_1 \join R_2}}]$,
   and the cardinality of each of these is bounded by $e_{\delta_1} - b_{\delta_1}$,
   which is indeed the cardinality of $\delta_1$.\\
   For the case where 
   $e < b$,
   $\dom{R_1 \join R_2}$ is also the union of these two intervals (which now overlap),
   and 
   $e_{\delta_1} - b_{\delta_1}$ is still an upper bound on the cardinality of each of them.
\medskip
  \item $\sizea{\tau}{\tuplestd} = O(1)$ and $\sizea{\tau}{\tuplestd} = O(1)$ 
  both 
  follow from the definition of $\evalcitdbe{q}$ (in Section~\ref{sec:evalbe_def} ) and Proposition~\ref{prop:correct_tdbe},
  with arguments nearly identical to the one that we provided to show that $\sizea{\tau}{\tuplest} = O(1)$,
  in the proof of Proposition~\ref{prop:size_eval_single_dimension}.
 \end{itemize}
\end{proof}

\subsection{Compactness}
\label{sec:minimization}
Regarding the cost of coalescing, all our results are already justified in the body of the article.

Regarding (non-)unicity of compact representations, all arguments are also provided,
with the exception of non-unicity for the fourth representations (i.e.~in $\tuplestdbe$).
For this case, we show that a set $V$ of tuples in $\tuples$ that share the same nodes $n_1$ and $n_2$ may have several compact representations in $\tuplestdbe$.

Over dense time, consider the Euclidean plane $\rat \times \rat$.
Observe that:
  \begin{enumerate}[(i)]
\item any rectangle $\tau \times \delta$ in this plane is exactly covered by some tuple in $\tuplestdbe$
  (namely $\tup{n_1,n_2, \tau, \delta, b_\tau, e_\tau}$),~\label{enum:compact1} and
\item the area covered by a tuple in $\tuplestdbe$ forms either a rectangle, or a polygon with some non-square angles.\label{enum:compact2}
\end{enumerate}

Now assume that the area covered by $V$ forms an $L$-shaped polygon.
From~\ref{enum:compact2}, this area cannot be exactly covered by a single tuple in $\tuplestdbe$,
and from~\ref{enum:compact1} there is more than one pair of tuples that exactly cover it (as illustrated with Figure~\ref{fig:rectangles}a).

The same argument can easily be adapted to discrete time.




\end{document}